\renewcommand\footnotetextcopyrightpermission[1]{} 
\newcolumntype{L}[1]{>{\raggedright\let\newline\\\arraybackslash\hspace{0pt}}m{#1}}
\newcolumntype{C}[1]{>{\centering\let\newline\\\arraybackslash\hspace{0pt}}m{#1}}
\newcolumntype{R}[1]{>{\raggedleft\let\newline\\\arraybackslash\hspace{0pt}}m{#1}}
\newcommand{\cmark}{\ding{51}}%
\newcommand{\xmark}{\ding{55}}%
\newcommand{\tikzmark}[1]{\tikz[overlay,remember picture] \node (#1) {};}
\newtheorem{theorem}{Theorem}
\newtheorem{corollary}{Corollary}
\newtheorem{Lemma}{Lemma}
\newtheorem{lemma}{Lemma}
\newtheorem{Definition}{Definition}
\def\CCM{\textsf{CCM}}
\def\IM{\textsf{IM}}
\def\I{\mathbb{I}}
\def\RIS{\textsf{RIS}}
\def\SSA{\textsf{SSA}}
\def\Cov{Cov}
\def\Covw{\textbf{Cov}_w}
\def\OPT{\textsf{opt}}
\def\OPTk{\textsf{OPT}}
\def\eOPTk{\emph{\textsf{OPT}}}
\def\E{\mathbb{E}}
\def\tpa{\textsf{BCA}}
\def\dta{\textsf{DTA}}
\def\fs{\textsf{FS}}
\def\rs{\textsf{RS}}
\def\H{\mathcal{H}}
\def\E{\mathcal{E}}
\def\topk{\textsf{top}-k}
\def\IMM{\textsf{IMM}}
\def\DPIMA{\textsf{DSSA}}
\def\hedge{\textsf{Hedge}}
\def\yalg{\textsf{Y-alg}}
\newcommand*{\Resize}[2]{\resizebox{#1}{!}{$#2$}}%
\def\lc{\left\lceil}
\def\rc{\right\rceil}
\let\oldnl\nl
\newcommand{\nonl}{\renewcommand{\nl}{\let\nl\oldnl}}
\newcommand{\twopartdef}[3]
{
	\left\{
	\begin{array}{ll}
		#1 & \mbox{if } #2 \\
		#3 & \mbox{otherwise}
	\end{array}
	\right.
}
\newcolumntype{Y}{>{\raggedleft\arraybackslash}X}
\begin{document}
\title{Approximate k-Cover in Hypergraphs: Efficient~Algorithms, and Applications}

\author{Hung Nguyen}
\authornote{The work of Hung Nguyen was done while he was a Ph.D student at Virginia Commonwealth University}
\affiliation{%
 \institution{Carnegie Mellon University}
}
\email{hungnt@vcu.edu}
\email{hungnguy@andrew.cmu.edu}
\author{Phuc Thai}
\affiliation{%
 \institution{Virginia Commonwealth University}
}
\email{thaipd@vcu.edu}
\author{My Thai}
\affiliation{%
 \institution{University of Florida}
}
\email{mythai@cise.ufl.edu}
\author{Tam Vu}
\affiliation{%
 \institution{University of Colorado Boulder}
}
\email{tam.vu@colorado.edu}
\author{Thang Dinh}
\affiliation{%
 \institution{Virginia Commonwealth University}
}
\email{tndinh@vcu.edu}
%
%
%
%
%
%
%

\begin{abstract}
Given a weighted hypergraph $\H(V, \E \subseteq 2^V, w)$, the approximate $k$-cover problem seeks for a size-$k$ subset of $V$ that has the maximum weighted coverage by \emph{sampling only a few hyperedges} in $\E$. The problem has emerged from several  network analysis applications including viral marketing, centrality maximization, and landmark selection. Despite many efforts, even the best approaches require $O(k n \log n)$ space complexities, thus, cannot scale to, nowadays, humongous networks without sacrificing  formal guarantees. In this paper, we propose \tpa{}, a family of algorithms for approximate $k$-cover that can find  $(1-\frac{1}{e} -\epsilon)$-approximation solutions within an \emph{$O(\epsilon^{-2}n \log n)$ space}. That is a factor $k$ reduction on space comparing to the state-of-the-art approaches with the same guarantee.  We further make \tpa{} more efficient and robust on real-world instances by introducing a novel adaptive sampling scheme, termed \dta.
Comprehensive evaluations on applications confirm consistent superiority of \dta{} over the state-of-the-art approaches. \dta{} reduces the sketch size up to 1000x   and runs 10x faster than the competitors while providing  solutions with the same theoretical guarantees and comparable quality.
\end{abstract}

%



\maketitle

\section{Introduction}
\label{sec:intro}
Given a hypergraph $\H=(V, \E \subseteq 2^V)$, in which each hyperedge $E\in \E$ can contains one, two, or more nodes, we investigate the \emph{approximate~$k$-cover}  problem that seeks a subset of $k$ nodes in $V$ that cover maximum number of hyperedges. While approximate $k$-cover is equivalent to the classic Max-$k$-cover problem\footnote{ We can create an equivalent instance of Max-$k$-cover in which there is a subset $S_v$ for each nodes $v \in V$, an element $e$ for each hyperedges $E \in \E$, and $e \in S_v$ iff $v \in E$} \cite{Karp72} allowing a standard greedy $(1-1/e)$-approximation algorithm, the high space and time complexity  $O(|V| |\E|)$ of the greedy makes it intractable for large instances originated from real-world applications in network analysis. In those applications, including influence maximization \cite{Borgs14, Tang14, Tang15}, landmark selection \cite{Potamias09}, and centrality maximization \cite{Mahmoody16}, $\E$ often contains \emph{(exponentially) large number of hyperedges}, thus, we can only afford finding solutions via \emph{sampling a few hyperedges}. 

For example, the landmark selection (LMS) problem in \cite{Potamias09} aims to find $k$ landmarks (nodes) that lies on the maximum number of shortest paths in a graph $G=(V, E)$. The problem  can be transformed into an instance of approximate $k$-cover
on a hypergraph $\H'=(V'=V, \E')$ with  $\E' =\{ E_{s,t}| s, t \in V\}$ contains $O(n^2)$ hyperedges $E_{s, t}$ defined as  all vertices that are on some shortest path from $s$ to $t$. Finding an $(1-1/e)$ approximate solution for LMS takes $O(n^3)$ space and time and, thus, not tractable for  networks with millions of nodes. For other applications like influence maximization, the number of hyperedges, $O(n 2^n)$, limits the exact greedy to only toy networks.

State-of-the-art approaches for influence maximization \cite{Borgs14,Tang14,Tang15,Nguyen163} and centrality maximization \cite{Yoshida14,Mahmoody16}, despite different applications, all solve \emph{implicit} instances of approximate $k$-cover via a common framework, termed \emph{full sketch}.
The framework contains two phases: 1) a collection of many random hyperedges are generated using a random oracle, and, 2) greedy algorithm is invoked to solve max-$k$-cover over the generated hyperedges. 
To guarantee $(1-1/e-\epsilon)$-approximate solutions, even the best approaches incurs a \emph{high space complexity} $O( n k \log n)$.   Empirically, two recent works in \cite{Tang17,Arora17} indeed confirm that existing algorithms for influence maximization like \IMM{} \cite{Tang15} and \SSA/\DPIMA{} \cite{Nguyen163} only perform well on the \textsf{WC} weight setting but run out of memory for other weight settings, e.g., the trivalency model \cite{Arora17} or the slightly perturbed \textsf{WC} \cite{Tang17}.

In this paper, we introduce \tpa{}, a framework for efficient approximation algorithms for approximate $k$-cover. \tpa{} provides an $(1-1/e -\epsilon)$-approximation within $O( n \log n)$ space, a factor $k$ reduction comparing to the state-of-the-art methods. This space reduction is critical for large instances of approximate $k$-cover, especially, when running on memory-bounded systems. The key novelty of our \tpa{} framework is a coupling of a \emph{reduced sketch} and \emph{space-efficient bounding methods}. Our {reduced sketch} allows non-essential samples (hyperedges) to be removed from the sketch from time to time, thus, keeping a small memory footprint. This contrasts the full-sketch framework used in  existing approaches \cite{Tang14, Tang15, Tang17, Nguyen163,Mahmoody16} of which all generated samples needs to be kept on the memory in order to apply the standard greedy algorithm. The size of our sketch is reduced in order the magnitude when compare the full sketch, while the running time and the solution quality are  comparable to those in the full sketch.
Our space-efficient bounding methods helps make smart decisions on when we  select the next node into the solution and reduce the sketch, accordingly.


We further reduce the space and time consumption of \tpa{} by introducing a novel adaptive sampling scheme, termed \dta. \dta{} can adapt well to the real complex of the real-world instances, providing the best overall approach across different applications of approximate $k$-cover without compromising the theoretical guarantee.

Our comprehensive experiments on various applications including k-dominating set, landmarks selection (aka coverage centrality maximization), and influence maximization, show the consistent superiority of our \tpa{} framework, particularly, \dta{} over the existing approaches. \dta{} uses substantially less time and memory than its state-of-the-art competitors, up to 10x faster and 1000x reduction in sketch size while providing comparable quality. Our algorithm can find good landmarks in large networks in minutes, and, remains the only algorithm for \IM{} that scale to billion-scale networks on challenging input settings.


We summarize our contributions as follows:
\vspace{-0.1in}
\begin{itemize}
	\item We formally state the approximate $k$-cover problem and its connection to many important network analysis problems including influence maximization, landmark selection, and centrality maximization problems.
	\item We propose a family of $(1-1/e-\epsilon)$-approximation algorithm, called \tpa, that use only  $O(\epsilon^{-2} n \log n)$ spaces,  a factor $k$ reduction comparing to the state-of-the art approaches. 
	We provide natural extensions of \tpa{} to the budgeted problem. 
	\item  We introduce a novel adaptive sampling scheme, termed \dta, that makes \tpa{} adapt to real-world instances. \dta{} consistently provides the fastest overall approach across different applications of approximate $k$-cver without compromising the theoretical guarantee.
	\item We carry comprehensive experiments on various network anlysis problems to demonstrate the consistent superiority of \tpa{} and its adaptive version \dta{} over the existing approaches, as well as to reveal the time-quality trade-off of each algorithm.
\end{itemize}

\vspace{-0.08in}
\setlength{\textfloatsep}{4pt}
\section{Preliminaries}
\textbf{Hypergraph.} Let $\H=(V, \E)$ be a hypergraph that consists of a set $V$ of $n$ nodes and a multiset $\E$ of $m$ hyperedges such that $\forall E \in \E, E \subseteq V$, i.e., each hyperedge can have zero, one, two, or more nodes. A node $v\in V$ is said to \emph{cover} a hyperedge $E \in \E$,  if $v \in E$. Equivalently, we say that such hyperedge $E$ is \emph{incident} to $v$. 

Given a multiset $\E'$ of hyperedges in $\E$, the \emph{coverage} of a node $v$, denoted by $\Cov(v, \E')$ is defined as the number of hyperedges in $\E'$ incident to $v$. Generally, we define the coverage of a subset of nodes $S \subseteq V$ over $\E'$ as the number of hyperedges in $\E'$ incident to at least one node in $S$, i.e.
%
\begin{align}
	\Cov(S, \E') = \left | \left \{ E \in \E | S \cap E \neq \emptyset \right \}\right|.
\end{align}
\textbf{Weighted Hypergraph.} We consider the more general weighted hypergraph $\H = (V, \E, w)$ in which each hyperedge $E \in \E$ is associated with a weight $w_E$ and $\sum_{E \in \E} w_E = 1$, thus, $w$ can be seen as a probability distribution over hyperedges. The \emph{weighted coverage} of a set of nodes $S$ on the complete set $\E$ of all the weighted hyperedges, denoted and shortened by $\Covw(S)$, is defined as follows:
\begin{align}
	\Covw(S) = \sum_{E \in \E, E \cap S \neq \emptyset} w_E.
\end{align}

\begin{Definition}[$k$-cover]
	\label{def:max_k_cover}
	Given a weighted hypergraph $\H=(V,\E,w)$ and an integer $k$, find a subset $S^* \subseteq V$ of size $k$ that has the maximum weighted coverage among all the size-$k$ subsets $S \subseteq V, |S| = k$, i.e.
	\begin{align}
		 \Covw(S^*) = \emph{\OPTk{}} = \max_{S \subseteq V, |S| = k} \Covw(S).
	\end{align}
	$S^*$ is called the optimal solution.
\end{Definition}

The $k$-cover is a typical NP-hard problem \cite{Nemhauser81}. Furthermore, the coverage function is monotone and submodular, hence, the standard Greedy algorithm finds an $(1-1/e)$-approximate solution \cite{Nemhauser81} and the approximation factor is tight under a widely accepted assumption \cite{Feige98}. However, it requires the presence of all the hyperedges and in most applications, this requirement is impractical.
\vspace{0.03in}

\noindent \textbf{Random Sample Oracle and Approximate $k$-cover.} We are interested in the cases when the multiset of hyperedges $\E$ is too large (in terms of memory and time) to be listed explicitly. However, there exists an \emph{oracle} to sample random hyperedges from $\E$ with probability of generating the hyperedge $E$ equal to its weight $w_E$. Let $E_j$ be a random hyperedge generated by the oracle, then
\begin{align}
	\label{eq:sample_prob}
	\Pr[E_j = E] = w_E.
\end{align}
Many $k$-cover instances originated from network analysis problems such as influence and centrality maximizations are, indeed, of exponentially large size and such random sample oracles for them are known (detailed descriptions in Subsection~\ref{sec:apps}). Then, the sample hyperedges generated by the oracle can be used to approximate the weighted coverage of any subset $S \subseteq V$ as shown below.

\begin{Lemma}
	\label{lem:sample_weight}
	Given a weighted hypergraph $\H = (V, \E, w)$, let $E_j$ be a random hyperedge generated by the random oracle satisfying Eq.~(\ref{eq:sample_prob}) and a subset of nodes $S \subseteq V$. Let
	\begin{align}
		\label{xsj}
		X^{(j)}_S = \twopartdef{1}{E_j \cap S \neq \emptyset}{0}
	\end{align}
	be a Bernoulli random variable defined on $E_j$. Then, we have
	\begin{align}
		\mathbb{E}[X^{(j)}_S] = \mu_S = \Covw(S).
	\end{align}
\end{Lemma}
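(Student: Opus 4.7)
The plan is to use the standard fact that the expectation of a Bernoulli indicator equals the probability of the event it indicates, and then to collapse that probability into the defining sum of $\Covw(S)$ using the oracle's sampling distribution given in Eq.~(\ref{eq:sample_prob}). Since $X^{(j)}_S$ takes only the values $0$ and $1$, I would begin by writing
\begin{equation*}
\mathbb{E}[X^{(j)}_S] = 1\cdot \Pr[X^{(j)}_S = 1] + 0\cdot \Pr[X^{(j)}_S = 0] = \Pr[E_j \cap S \neq \emptyset].
\end{equation*}

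Next, I would decompose the event $\{E_j \cap S \neq \emptyset\}$ as the disjoint union, over all concrete hyperedges $E \in \E$ with $E \cap S \neq \emptyset$, of the events $\{E_j = E\}$. Because $\E$ is a multiset and the oracle samples a single hyperedge, these events are mutually exclusive, so countable additivity yields
\begin{equation*}
\Pr[E_j \cap S \neq \emptyset] = \sum_{E \in \E,\, E \cap S \neq \emptyset} \Pr[E_j = E].
\end{equation*}
Substituting the oracle's sampling probability $\Pr[E_j = E] = w_E$ from Eq.~(\ref{eq:sample_prob}) turns the right-hand side into $\sum_{E \in \E,\, E \cap S \neq \emptyset} w_E$, which is exactly the definition of $\Covw(S)$ given earlier. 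Chaining the equalities gives $\mathbb{E}[X^{(j)}_S] = \Covw(S) = \mu_S$, as desired.

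There is no real obstacle here; the only thing to be careful about is the multiset nature of $\E$ (so that summing over multiset elements does not accidentally double-count or under-count) and making sure the disjointness of the events $\{E_j = E\}$ is justified before applying additivity. Because the oracle returns a single hyperedge and the probabilities $w_E$ sum to one by assumption, both points are immediate, so the entire argument is a two-line computation once the indicator is rewritten as a probability.
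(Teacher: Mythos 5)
Your proposal is correct and follows essentially the same route as the paper's proof: rewrite $\mathbb{E}[X^{(j)}_S]$ as $\Pr[E_j \cap S \neq \emptyset]$, expand over the disjoint events $\{E_j = E\}$, and substitute $\Pr[E_j = E] = w_E$ to recover $\Covw(S)$. The paper phrases the decomposition with an explicit indicator $\mathds{1}_{E \cap S \neq \emptyset}$ inside a sum over all of $\E$, but this is the same two-line computation.
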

\begin{proof}

Let $\mathds{1}_{E \cap S \neq \emptyset} = 1$ if $E \cap S \neq \emptyset$ and $0$ otherwise.
\begin{align}
\mathbb{E}[X^{(j)}_S] & = \Pr[E_j \cap S \neq \emptyset] = \sum_{E \in \E} \mathds{1}_{E \cap S \neq \emptyset} \cdot \Pr[E_j = E] \nonumber \\
& = \sum_{E \in \E} \mathds{1}_{E \cap S \neq \emptyset} \cdot w_E = \sum_{E \in \E, E \cap S \neq \emptyset} w_E = \Covw(S). \nonumber
\end{align}
\end{proof}
Then, given a set $\E'$ of random hyperedges generated by the oracle, one can obtain an approximate $\Cov(S, \E')/|\E'|$ of $\Covw(S)$ for any set $S \subseteq V$. Hence, the $k$-cover, i.e. maximizing $\Covw(S)$, can be solved approximately through finding a set $\hat S$ of $k$ nodes to cover the most generated hyperedges in $\E'$, i.e. maximizing $\Cov(S, \E')$. Meanwhile, we also desire to find a provably good solution to the original $k$-cover problem. This gives rise to an approximate version of $k$-cover defined as follows:


\begin{Definition}[Approximate $k$-cover]
	\label{def:approx_k_cover}
	Given a hypergraph $\H=(V,\E, w)$, a random oracle satisfying Eq.~(\ref{eq:sample_prob}), an integer $k$, an approximation factor $\rho \in (0, 1)$, find a subset $\hat S \subseteq V$ of size $k$ based on samples generated by the oracle that has its weighted coverage at least $\rho$ times the optimal with a high probability of $(1-\delta)$ for a small $\delta < 1$, e.g., $\delta = 1/n$,
	\begin{align}
		\Pr[ \Covw(\hat S) \geq \rho \emph{\OPTk} ] \geq 1-\delta.
	\end{align}
\end{Definition}
The goal is to identify such an approximate solution $\hat S$ in space and time \emph{independent of the size of $\E$}. The commonly desired approximation ratio is $\rho = 1-1/e-\epsilon$ provided a constant $\epsilon$. Many works in different applications of $k$-cover \cite{Yoshida14,Mahmoody16,Tang14,Tang15,Nguyen163} attempt to generate just enough random hyperedges (deciding memory usage and running time) to provide an $(1-1/e-\epsilon)$ approximate solution. Thresholds on that number of hyperedges are proposed and work in certain settings. However, space and time complexities are still high that stops them from running on very large networks as demonstrated in our experiments (Section~\ref{sec:exps}).


\subsection{Applications and Related Work}
\label{sec:apps}

Here we present a variety of network analysis  applications that reduce to approximate $k$-cover, their corresponding hyperedge sampling oracles, and the related works. 


\renewcommand{\arraystretch}{1.2}
\setlength\tabcolsep{5pt}
\begin{savenotes}
	\begin{table*}[!h] \small
		\caption{Summary of algorithms for $k$-cover applications. The time to sample hyperedges are excluded. $\tilde{O}()$ suppresses the \textsf{polylog} factors from the complexities. Adaptivity means the ability to adapt to actual complexity of real-world instances.}
		\vspace{-0.12in}
		\label{tab:algorithms}
		\centering
		\begin{tabular}{L{2cm}L{4cm}|C{3cm}L{1.5cm}L{1.5cm}C{1.8cm}C{1cm}}
			\toprule
			\textbf{Algorithm} & \textbf{Problem} & \textbf{Guarantee} & \textbf{Space} & \textbf{Time} & \textbf{Adaptivity} & \textbf{Prefix} 
			\\
			\midrule
			\hedge{}\cite{Mahmoody16} 
			& k-Dominating set & $1-\frac{1}{e}-\epsilon$ \emph{only~if~$\OPTk=\Omega(m)$} & $\tilde{O}(\epsilon^{-2} k n)$ & - & \xmark{} & \xmark{} 
			\\
			\midrule
			\yalg{} \cite{Yoshida14} &  \CCM{}(LANDMARKS$_d$\cite{Potamias09}) & $(1-\frac{1}{e})\OPTk - \epsilon n^2$ & $\tilde{O}(\epsilon^{-2} n^2)$ & $\tilde{O}(\epsilon^{-2} n^2)$ & \xmark{} & \xmark{} 
			\\
			\midrule
			\IMM{}\cite{Tang15} & Influence Maximization & $1-\frac{1}{e}-\epsilon$ & $\tilde{O}(\epsilon^{-2} k n)$ & $\tilde{O}(\epsilon^{-2} k n)$& \xmark{} & \xmark{} 
			\\			
			\DPIMA{}\cite{Nguyen163}& Influence Maximization & $1-\frac{1}{e}-\epsilon$ & $\tilde{O}(\epsilon^{-2} k n)$ & $\tilde{O}(\epsilon^{-2} k n)$& \cmark{} & \xmark{} 
			\\			
			\midrule
			Distributed sketch\cite{Bateni18} & $k$-cover & $1-\frac{1}{e}-\epsilon$ & $\mathbf{\tilde{O}(\epsilon^{-2} n)}$ /machine& - & \xmark{} & \xmark{} 
			\\
			\midrule
			\tpa{}($z^*$)[Here] & All above & $1-\frac{1}{e}-\epsilon$ & $\mathbf{\tilde{O}(\epsilon^{-2} n)}$ & $\tilde{O}(\epsilon^{-2} k n)$& \xmark{} & \cmark{} 
			\\
			\dta{}[Here] & All above & $1-\frac{1}{e}-\epsilon$ & $\mathbf{\tilde{O}(\epsilon^{-2} n)}$ & $\tilde{O}(\epsilon^{-2} k n)$& \cmark{} & \xmark{} 
			\\
			\bottomrule
		\end{tabular}
		\vspace{-0.1in}
	\end{table*}
\end{savenotes}


\vspace{0.03in}

\noindent \textbf{Landmark Selection.} To find good landmarks for shortest path queries, Potamias et al. \cite{Potamias09} solve a coverage centrality maximization (\CCM) problem.
Define combined coverage centrality of a group of nodes $S \subseteq V$ as follows:
\begin{align}
\mathbb{C}(S) = \sum_{s \neq t \in V} I_{s,t}(S),
\end{align}
where $I_{s,t}(S) = 1$ if at least one shortest path between $s$ and $t$ passes through some node in $S$.

Finding good landmarks is done by finding $k$ nodes $S$ that maximize the combined coverage centrality $\mathbb{C}(S)$. The greedy algorithm for LANDMARKS$_d$ in \cite{Potamias09} has a time complexity $O(n^3)$, thus, is not efficient for large networks. Our proposed approach here can be used to find good landmarks in much less time and space.

The corresponding $k$-cover for \CCM{} is defined on weighted hypergraph $\H = (V, \E, w)$ where $\E$ contains all hyperedges $R_{s\neq t \in V} = \{ u \in V | u \text{ is on at least one } s-t \text{ shortest path} \}$ with the same weight $w_{s-t} = 1/(n^2 -n)$ for all of them. Thus, $\H$ can be seen as an unweighted hypergraph and $\mathbb{C}(S) = (n^2 -n) \cdot \Covw(S)$.

\emph{Sample Oracle.} To generate a random hyperedge, we pick a random node pair $s \neq t$ from $V$ and return a set of nodes that are on at least one of $s-t$ shortest paths. Apparently, the probability of generating $R_{s\neq t \in V}$ is $w_{s-t}$ due to random selection of $s,t$ pair.

%

\vspace{0.03in}

\noindent \textbf{Influence Maximization (\IM{}).} Given a graph $\mathcal{G} = (V,E,p)$ where edge $(u,v) \in E$ has probability $p(u,v) \in (0,1]$ representing the influence of $u$ towards $v$ under some diffusion model, the influence of a set $S \subseteq V$, denoted by $\I(S)$, is the expected number of nodes who are eventually influenced by $S$.
The Influence Maximization (\IM{}) problem \cite{Domingos01, Kempe03} seeks a set $S_k$ of at most $k$ nodes with the maximum influence. 

\IM{} can be seen as an $k$-cover instance with $\H=(V, \E, w)$ where $\E$ contains all \emph{reverse influence sets} (\RIS{}s) \cite{Borgs14}. A random \RIS{} $R_j$ associates with a weight $w_j = \frac{1}{n} Pr_g, $ where $Pr_g$ is the probability that a sample graph $g$ is generated from $\mathcal{G}$ following the diffusion model. Then, the influence of a set of nodes $S$ is proportional to the weighted coverage of $S$ in $\H$, i.e. $\I(S) = n \cdot \Covw(S)$ \cite{Kempe03}, hence, maximizing $\I(S)$ is equivalent to maximizing $\Covw(S)$.

The size of $\E$ is exponentially large in most stochastic diffusion model. For example, in the popular independent cascade model \cite{Kempe03}, $|\E| = n 2^{|E|}$.

\emph{Sample Oracle}. A random \RIS{} $R_j \in \E$ is generated by 1) selecting a random node $u \in V$ 2) including into $R_j$ all the nodes that are reachable to $u$ in a sample graph $g$ generated according to a given diffusion model \cite{Borgs14}. Then, the probability of generating an \RIS{} $R_j$ is precisely $\frac{1}{n} Pr_g = w_j$ \cite{Borgs14}. This \RIS{} sampling algorithm serves as the oracle in our Approximate $k$-cover instance for \IM{}.

\emph{Related work.} \IM{} \cite{Leskovec07,Tang14,Tang15,Nguyen163,Ohsaka16,Du13} recently emerged from the vital application of designing a marketing strategy to maximize the benefit. Kempe et al. \cite{Kempe03} proved the submodularity property of influence function, thus, greedy algorithm provides an $(1-1/e-\epsilon)$-approximation guarantee. However, the naive greedy algorithm is not efficient due to the \#P-hardness of computing influences \cite{Chen10}. Lazy greedy technique was studied in \cite{Leskovec07}.
Popular algorithms using \RIS{} sampling \cite{Borgs14}, are summarized in Table~\ref{tab:algorithms}. For a more comprehensive literature review on \IM{}, see \cite{Arora17}.
\vspace{0.03in}

\textbf{$k$-Dominating Set \cite{Bateni18}}. In this problem, we are given a graph $G(V, E)$ and we aim to find $k$ nodes that cover maximum number of (multi-hop) neighbors.

The $k$-dominating set is a special case of $k$-cover $\H=(V, \E)$ with $\E = \{ v \in V | E_v \}$ in which hyperedge $E_v$ contains all nodes in $V$ that can cover $v$ (via multi-hop neighborhoods).

The sample oracle would sample a random node $v \in V$ and generate via reverse reachability from $v$ all nodes that can cover $v$ \cite{Bateni18}.


\textbf{$k$-cover and Large-scale Submodular Maximization.} Many techniques have been developed for large-scale submodular maximization (see \cite{Mcgregor17,Bateni18} and the reference therein). However, many of the existing techniques only achieve sub-optimal approximation guarantees and/or space complexities and/or do not scale in practice. 

A recent work by Bateni et al. in KDD'18 \cite{Bateni18} provided the first algorithm for $k$-cover, a special case of submodular maximization, with \emph{optimal approximation guarantee}, $1-1/e-\epsilon$, and \emph{optimal space complexity} $\tilde{O}(n)$ per machine. We achieve the same optimal approximation guarantee and space complexity requiring only a \emph{single machine}. Further, the empirical results in \cite{Bateni18} had to resort to empirical method to determine parameters (e.g., $\sigma$), thus, it is not clear if the proposed algorithm still provide the worst-case guarantee $1-1/e-\epsilon$. In contrast, our proposed algorithms require only parameter $\epsilon$ on the approximation guarantee and can automatically decide the number of samples and other parameters. Yet our algorithm scale to very large instances of billion-scale.

\renewcommand{\arraystretch}{1.2}
\setlength\tabcolsep{3pt}
\begin{table}[hbt]\small
	\centering
	\caption{Summary of notations}
	\vspace{-0.15in}
	\begin{tabular}{m{1.8cm}|m{6.3cm}}
		\addlinespace
		\toprule
		\bf Notations  &  \quad \quad \bf Descriptions \\
		\midrule 
		$\mathcal{H} = (V, \E,w)$ & Weighted hypergraph of $n$ nodes, $m$ hyperedges\\
		\hline
		$\E_r$ & Reduced sketch maintained in \tpa{}\\
		\hline
		$\E_f$ & Full sketch of all generated hyperedges in \tpa{}\\
		\hline
		$\Cov(S, \E')$ & Coverage of a set $S$ on $\E'$ of hyperedges\\
		\hline
		$\Covw(S)$ & Weighted coverage of a set $S$ on $\E$\\
		\hline
		$S^*, \OPTk$ & The true optimal solution of $k$ nodes for $k$-cover and its optimal weighted coverage value\\
		\hline
		$\OPT$ & Running maximum coverage of $k$ nodes on a set of generated hyperedges so far in \tpa{} algorithm.\\
		\hline
		$f(S,d_S,\E_r)$ & Upper-bound function on the maximum coverage\\
		\hline
		$f_L(n,\delta,\hat \mu, N)$ & Lower-bound function of $\mu$ with probability $1-\delta$ given an estimate $\hat \mu$ on $n$ hyperedges\\
		\hline
		$f_U(n,\delta,\hat \mu, N)$ & Upper-bound function of $\mu$ \\
		\bottomrule
	\end{tabular}
	\label{tab:syms}
\end{table}
\section{Coverage Thresholding Algorithm}
\label{sec:tpa}
We propose a family of simple, efficient \tpa{} algorithms for $k$-cover problem. It guarantees to find an $(1-1/e-\epsilon)$-approximate solution with probability $1-\delta$ for any given $0 < \epsilon < 1-1/e, 0 < \delta < 1$. Moreover, it is easily extendable to the budgeted version of the problem with minor changes (See extensions in Section~\ref{sec:extensions}). We summarize the notations in Table~\ref{tab:syms}.

\subsection{Reduced Sketch \& A Family of Bounded Coverage Algorithms (BCA)}

Our family of Bounded Coverage Algorithms (\tpa{}) is presented in Algorithm~\ref{alg:tpa} which: 
\begin{itemize}
	\item uses the random oracle to generate random hyperedges $E_1, E_2, \dots, E_j,\ldots$ (Line 6).
	\item receives a threshold $z$ bounding the maximum coverage of any set of $k$ nodes on generated hyperedges. 
	\item gradually constructs a partial solution $S$ and maintains its coverage $d_S$ on the full sketch (Lines~8,11,12).
	\item only maintains hyperedges that are not covered by the partial solution $S$ in a \emph{reduced sketch} $\E_r$ (Line~10).
\end{itemize}
\begin{algorithm} \small
	\caption{\tpa{} - Bounded Coverage Algorithms}
	\label{alg:tpa}
	\KwIn{A budget $k$, a threshold $z$ and a random oracle}
	\KwOut{ $T$ - maximum number of generated hyperedges,\\ 				  \qquad \qquad $S$ - a candidate solution, and \\
		\qquad \qquad $d_S$ - coverage of $S$}
	$S \leftarrow \emptyset$; \Comment{Gradually growing solution set}\\ 
	$d_{S} \leftarrow 0; $\Comment{Coverage of $S$ on all the generated hyperedges} \\
	$\E_r \leftarrow \emptyset$; \Comment{Reduced sketch: hyperedges not covered by $S$}\\
	\For{$i=1$ \emph{\textbf{to}} $k$}{
		\While{$f(S,d_S,\E_r) < z$}{
			Generate a new hyperedge $E_j$;\\
			\If{$E_j \cap S \neq \emptyset$}{
				$d_S = d_S+1$ and discard $E_j$;\\
			} \uElse{
				Add $E_j$ to $\E_r$;\\
			}
		}
		Add $u=\arg\max_{v\notin S} \Cov (v,\E_r)$ to $S$;\\
		Remove hyperedges covered by $u$ from $\E_r$, update $d_S$;\\ 
	}
	\textbf{return} $<T = \text{size of full sketch}, S, d_S>$\\
\end{algorithm}

\tpa{} goes over $k$ iterations to select $k$ nodes into $S$ (Lines~4-12). In iteration $i$, it keeps generating new hyperedges as long as the threshold $z$ holds as a bound on the maximum coverage of $k$ nodes on the generated hyperedges so far. To enforce this condition on threshold $z$, we use a function $f(S, d_S, \E_r)$ to compute an upper-bound of the maximum coverage on the set of all generated hyperedges. Then, $z$ holds as long as $f(S, d_S, \E_r) < z$ (Line~5). When the threshold $z$ fails the check $f(S, d_S, \E_r) < z$, the node $u$ with maximum coverage on the reduced sketch $\E_r$ is inserted to the partial solution $S$ (Line~11). As a results, hyperedges covered $u$ are removed from reduced sketch $\E_r$ and the coverage $d_S$ of $S$ increases accordingly (Line~12). If a hyperedge $E_j$ is covered by the partial solution $S$, \tpa{} simply increases its coverage $d_S$ by 1 and discards $E_j$ (Lines~7,8). Otherwise, $E_j$ is added into the reduced sketch $\E_r$ (Line~10).

\renewcommand{\arraystretch}{1.2}
\setlength\tabcolsep{1.2pt}
\begin{table}[!h] \small
	\caption{A running example of \tpa$(k=2,z=4)$ on a toy network of 3 nodes $\{1,2,3\}$.}
	\vspace{-0.1in}
	\label{tab:bca_example}
	\centering
	\begin{tabular}{c | c c c c c}
		\toprule
		Event & $\E_r$ & $S$ & $d_S$ & $f(S, d_s, \E_r)$ & Notes\\
		\midrule
		\_ & $\emptyset$ & $\emptyset$ & 0 & 0 & Initialization\\
		$E_1 = \{1,2\}$ & $\{E_1\}$ & $\emptyset$ & 0 & 2 & $f(S, d_s, \E_r) < z$\\
		$E_2 = \{1,3\}$ & $\{E_1, E_2\}$ & $\emptyset$ & 0 & 4 & $f(S, d_s, \E_r^) = z$\\
		\textbf{Select node 1} & \textbf{$\emptyset$} & \textbf{$\{1\}$} & \textbf{2} & \textbf{2} & \textbf{Remove $E_1, E_2$ from $\E_r$}\\
		$E_3 = \{2\}$ & $\{E_3\}$ & $\{1\}$ & 2 & 4 & $f(S, d_s, \E_r) = z$\\
		\textbf{Select node 2} & \textbf{$\emptyset$} & \textbf{$\{1,2\}$} & \textbf{3} & \textbf{3} & \textbf{Remove $E_3$ from $\E_r$}\\
		\bottomrule
	\end{tabular}
\end{table}
\vspace{0.03in}

\textbf{Example.} Let $f(S, d_S, \E_r) = d_S + k \cdot \max_{v \notin S} \Cov(v,\E_r)$ be the upper-bound function (details in our analysis). We give a running example in Table~\ref{tab:bca_example} of $\tpa{}(k=2,z=4)$ on an instance of 3 nodes $\{1,2,3\}$. 
Let assume an instantiation of generated hyperedges as given in Table~\ref{tab:bca_example}. After initialization of $\E_r, S, d_S$, \tpa{} generates $E_1 = \{1,2\}$ and put in the reduced sketched since $f(S,d_S,\E_r) = 2 < z$. When $E_2 = \{1,3\}$ is generated, the upper-bound function $f(S,d_S,\E_r) = 4 = z$, triggering the selection of the node with maximum coverage, i.e. node 1, and both hyperedges $E_1, E_2$ are removed from $\E_r$ since they are covered by node 1. After selecting node 1, $f(S,d_S,\E_r) = 2 < z$ and the next hyperedge $E_3 = \{2\}$ is generated causing $f(S,d_S,\E_r) = 4 = z$ and node 2 having the maximum coverage of 1 is selected. \tpa{} stops as $|S| = 2$.
\vspace{0.03in}

\textbf{Upper-bound function.} The $f(S, d_S, \E_r)$ function provides a trade-off between solution quality and time/memory usage. On one hand, the tighter bounds provided by $f(S, d_S, \E_r)$ delay the selection of nodes in \tpa{}. That intuitively leads to a better solution. On the other hand, tighter bounds often require more time and space to find. We will present 5 different upper-bound functions with varying quality and time/memory relation.
\vspace{0.03in}

\textbf{Reduced sketch $\E_r$.} By only maintaining a reduced sketch $\E_r$ of uncovered hyperedges, \tpa{} has a strong advantage of throwing out (considerably many) unnecessary hyperedges reducing (considerable) memory space, up to 100x in practice (see Table~\ref{tab:rtime}). This is in stark contrast to existing approaches, e.g. \hedge{}\cite{Mahmoody16}, \yalg{}\cite{Yoshida14}, \IMM{}\cite{Tang15}, \DPIMA{}\cite{Nguyen163}, that have to store all generated hyperedges.

\textbf{Maintaining reduced sketch.} To guarantee, linear-time update for our reduced sketch, we store our sketch using an \emph{incidence graph} and apply a  \emph{lazy removal} strategy. 

 The incidence graph maintains mappings between nodes in $V$ and hyperedges in $\E_r$ as well as the reversed index. Specifically, each nodes $u \in V$ has a list of incident hyperedges and vice versus, each hyperedge has a list of nodes contained in it. Adding hyperedges to the incidence graph is straightforward and takes only linear time.

\emph{Removing hyperedges.} When a new node $u$ is selected into the solution (Line~11, \tpa{}), we need to remove from the sketch all hyperedges that covered by $u$ (Line~12, \tpa{}). That is we need to 1) remove all the corresponding hyperedges from the incidence graph   and 2) remove those hyperedges from all the incident lists of all nodes that appear in those hyperedges. A naive approach will incur expensive cost. 

\emph{Lazy removal.} Upon removing a hyperedge, we simply mark that hyperedge ``deleted'' without actual removal of the hyperedges and updating the incident lists of the affected nodes. We, however, keep a count of how many entries in the incidence graph are removed, thus, knowing the ratio between the amount of ``trash'' contained in the incidence graph.

Whenever the fraction of ``trash'' in the incidence graph exceeds a predefined constant $\alpha$, set to be $1/3$ in our experiment,  we invoke a `cleaning' routine. The `cleaning' routine will simply reconstruct the incidence graph from scratch, removing all marked for removal hyperedges. This batch removal of hyperedges guarantees \emph{linear amortized time complexity} for hyperedges removal in terms of the hyperedges' size.

\vspace{-0.1in}
\subsection{Analysis for an arbitrary threshold $z$}
\textbf{Solution guarantee analysis.} We show that \tpa{} algorithm returns $S$ with coverage $d_S \geq (1-(1-\frac{1}{k})^k) \cdot z \geq (1-\frac{1}{e}) \cdot z$ for any given threshold $z$, providing `sufficiently tight' upper-bound function $f(S, d_S, \E_r)$.

Denoted by $\OPT$, the \emph{running maximum coverage}, defined as the maximum coverage of any $k$ nodes on the full sketch $\E_f$.  Here, the full sketch refers to the collection of all the hyperedges that have been generated. Clearly, $E_r$, $E_f$, and $\OPT$ varies during the run-time of the algorithm. Thus, before using $\E_r, \E_f$ and $\OPT$, we will specify the moment at which they are defined. 

Particularly, at the \emph{termination of \tpa{}} with $T$ generated hyperedges, we will use the notations $\E_r^{(T)}, \E_f^{(T)}$ and $\OPT^{(T)}$ for  $E_r$, $E_f$, and $\OPT$, respectively.

Let
\begin{align}
\label{eq:f_r_func}
f_r(S, d_S, \E_r) = d_S + k \cdot \max_{v \notin S} \Cov(v,\E_r),
\end{align} 
termed the \emph{requirement function}. We first show that $f_r(S, d_S, \E_r)$ provides an upper-bound for $\OPT$.
\begin{Lemma}
	\label{lem:exist_f}
After generating hyperedge $E_j$ (Line~6, Alg.~\ref{alg:tpa}), the requirement function $f_r(S, d_S, \E_r)$  satisfies
	\begin{align}
		 f_r(S, d_S, \E_r) \geq \OPT.
	\end{align}
\end{Lemma}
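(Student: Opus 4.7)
The plan is to prove the bound by decomposing the full sketch $\E_f$ into two disjoint parts and bounding the optimal coverage on each part separately.

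First I would establish the key invariant maintained by Algorithm~\ref{alg:tpa}: at every point during execution (in particular immediately after Line~6), the full sketch $\E_f$ partitions into (i) the set of hyperedges already covered by the current partial solution $S$, whose cardinality is exactly $d_S$, and (ii) the reduced sketch $\E_r$, which contains precisely the hyperedges not covered by any node in $S$. This follows by a simple induction on the events in the algorithm: at initialization the invariant holds trivially; a newly generated hyperedge is either covered by $S$ (so $d_S$ is incremented by $1$ and the hyperedge is discarded, Lines~7--8) or not covered (so it is added to $\E_r$, Line~10); and when a node $u$ is added to $S$, all hyperedges of $\E_r$ that become covered are removed from $\E_r$ and their count is added into $d_S$ (Line~12).

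Next, let $T^* \subseteq V$ be any set of $k$ nodes that attains $\OPT$ on $\E_f$. Using the partition above,
\begin{align}
\OPT = \Cov(T^*, \E_f) = \Cov(T^*, \E_f \setminus \E_r) + \Cov(T^*, \E_r) \leq d_S + \Cov(T^*, \E_r),
\end{align}
since $|\E_f \setminus \E_r| = d_S$. For the remaining term, the union bound over the $k$ elements of $T^*$ gives
\begin{align}
\Cov(T^*, \E_r) \leq \sum_{v \in T^*} \Cov(v, \E_r) \leq k \cdot \max_{v \in V} \Cov(v, \E_r).
\end{align}
Finally I would observe that the maximum on the right is attained outside $S$: by the invariant, every hyperedge in $\E_r$ is disjoint from $S$, so $\Cov(v, \E_r) = 0$ for $v \in S$, and therefore $\max_{v \in V} \Cov(v, \E_r) = \max_{v \notin S} \Cov(v, \E_r)$. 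Chaining these inequalities yields $\OPT \leq d_S + k \cdot \max_{v \notin S} \Cov(v, \E_r) = f_r(S, d_S, \E_r)$.

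I do not expect a real obstacle here; the only subtle point is the bookkeeping of the invariant linking $d_S$, $\E_r$, and $\E_f$, which must hold precisely after Line~6 (the moment stated in the lemma) rather than in the middle of the update on Lines~11--12. Once the invariant is pinned down, the bound is a one-line union argument combined with the trivial observation that $S$-nodes contribute nothing to $\E_r$.
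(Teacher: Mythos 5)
Your proposal is correct and follows essentially the same route as the paper: decompose the full sketch into the hyperedges already covered by $S$ (bounded by $d_S$) and the reduced sketch $\E_r$, then bound the optimal coverage on $\E_r$ by $k$ times the best single-node coverage outside $S$. The only cosmetic difference is that you justify the second bound by a union bound together with the observation that nodes of $S$ have zero coverage on $\E_r$, whereas the paper cites monotonicity and submodularity of $\Cov$; both yield the same inequality.
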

\begin{proof}
	Consider the moment after sampling any hyperedge $E_j$, the full sketch $\E_f$ is composed of two parts: the reduced sketch $\E_r$ and hyperedges that are covered by $S$, i.e., $\E_f \backslash \E_r$. Thus, $|\E_f \backslash \E_r| = d_S$. We have 
  $\Cov(S^*, \E_f \backslash \E_r) \leq |\E_f \backslash \E_r| \leq d_S$. Since $\Cov$ is a monotone and submodular function \cite{Mcgregor17}, thus, $\Cov(S^*, \E_r) \leq k \cdot \max_{v\notin S} \Cov(v,\E_r)$. Therefore,
	 \[
	\OPT = \Cov(S^*, \E_f \backslash \E_r)+\Cov(S^*, \E_r)  \leq d_S + k \cdot \max_{v \notin S} \Cov(v,\E_r).
	\]
\end{proof}

We show that for any upper-bound function $f(S,d_S,\E_r)$ that is, at least, as `tight' as the requirement function, then the coverage of the returned solution $S$ is also  a constant factor close to $\OPT^{(T)}$.

\begin{Lemma}
	\label{lem:anyz}
	Let $\langle T, S, d_S \rangle$ be the output of \tpa{} for an integer $z > 0$ and $\rho_k =  1 -\left(1-\frac{1}{k}\right)^k \geq 1-\frac{1}{e}$. For any function $f(S,d_S,\E_r)$ satisfying the condition $$\emph{\OPT} \leq f(S,d_S,\E_r) \leq f_r(S, d_S, \E_r)$$ for all the moments after generating $E_j, 1 \leq j \leq T$, then
    \begin{align}
    	\label{eq:anyz}
    	z \geq \emph{\OPT}^{(T)} \geq d_S \geq \rho_k z.
    \end{align}
\end{Lemma}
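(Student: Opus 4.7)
I would split the triple inequality $z \geq \OPT^{(T)} \geq d_S \geq \rho_k z$ into three parts and dispose of the easy middle one first: $d_S = \Cov(S, \E_f^{(T)})$ is the coverage of the specific $k$-set returned by \tpa{}, while $\OPT^{(T)}$ is the maximum such coverage over all $k$-subsets of $V$, so the inequality is immediate.

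Next I would prove $z \geq \OPT^{(T)}$ by focusing on the last hyperedge generated, call it $E_T$. The loop test $f(S, d_S, \E_r) < z$ must have been true at the moment $E_T$ was generated; combined with the hypothesis $\OPT \leq f$, this gives $\OPT < z$ at that moment. Using that both $\OPT$ (a count of hyperedges) and $z$ are integers, this sharpens to $\OPT \leq z - 1$. Appending a single hyperedge can increase the coverage of any fixed $k$-subset by at most $1$, hence $\OPT^{(T)} \leq (z-1) + 1 = z$, and no further hyperedges are produced after $E_T$.

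For $d_S \geq \rho_k z$ I would run a standard greedy-style induction over the outer iterations $i = 1, \dots, k$. Let $d_i$ denote the value of $d_S$ at the end of iteration $i$ (so $d_0 = 0$, $d_k = d_S$) and let $d_i^-$ and $\E_r^{(i)}$ denote the values of $d_S$ and $\E_r$ at the moment the while loop of iteration $i$ exits. From the exit condition $f \geq z$ together with the hypothesis $f \leq f_r(S_{i-1}, d_i^-, \E_r^{(i)}) = d_i^- + k \cdot \max_{v \notin S_{i-1}} \Cov(v, \E_r^{(i)})$, the greedy pick $u_i$ must have marginal coverage at least $(z - d_i^-)/k$, so
\[
d_i \;\geq\; d_i^- + (z - d_i^-)/k \;=\; d_i^- (1 - 1/k) + z/k \;\geq\; d_{i-1}(1 - 1/k) + z/k,
\]
where the last step uses $d_i^- \geq d_{i-1}$ since the while loop only grows $d_S$ (via Line~8). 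Unrolling the recurrence $z - d_i \leq (z - d_{i-1})(1 - 1/k)$ from $d_0 = 0$ gives $d_k \geq z\bigl(1 - (1 - 1/k)^k\bigr) = \rho_k z$.

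The main obstacle is being careful about timing: $\E_r$, $d_S$, and $\OPT$ all change during execution, so each inequality has to be anchored at the correct snapshot (just before $E_T$ for the upper bound, just before selecting $u_i$ for the lower bound). In particular, the upper bound genuinely relies on the integrality of coverage counts and on $z$ being an integer to bridge $\OPT < z$ before $E_T$ to $\OPT^{(T)} \leq z$ after $E_T$; without this discreteness we would lose an additive slack of $1$. Note also that each direction of the hypothesis $\OPT \leq f \leq f_r$ is used at a separate step ($\OPT \leq f$ for the upper bound, $f \leq f_r$ for the lower bound), which is exactly what makes the statement hold uniformly over the whole family of bounding functions $f$.
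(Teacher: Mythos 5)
Your proposal is correct and follows essentially the same route as the paper's proof: the same greedy recurrence $z - d_i \leq (1-1/k)(z - d_{i-1})$ derived from the exit condition together with $f \leq f_r$, and the same integrality argument at the moment just before the last hyperedge $E_T$ to get $z \geq \OPT^{(T)}$ (with $\OPT^{(T)} \geq d_S$ immediate). No gaps to report.
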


\begin{proof}
The proof consists of two components.

$\bullet$ \textit{Proving $\Cov(S, \E_f^{(T)}) \geq \rho_k z$.}	Consider the moment when we add the $i^{th}$ node (Line 11, Alg.~\ref{alg:tpa}), denoted by $u_i$, into $S$ for $i=1..k$. Denote by $d_i^-$ and $d_i^+$ the values of $d_S$ \emph{right before} and \emph{right after} adding the node, respectively. We also define $d_0^+= 0$.

Since $d_S$ is non-decreasing,  we have $d_{i-1}^+ \leq d_{i}^- \leq d_{i}^+,\ \forall i=1..k$.
Consider the moment right before the selection of $u_i$,
\begin{align}
& f(S,d_S,\E_r) \geq z \nonumber \\
\Rightarrow \text{ } & d_S + k \cdot \Cov(u_i, \E_r) \geq z \text{ \ \ (due to the lemma condition)}\nonumber \\
\Rightarrow \text{ } & d_i^- + k \cdot (d_i^+ - d_i^-) \geq z, \text{ for } i=1..k \nonumber \\
\Rightarrow \text{ } & d_i^{+} \geq d_i^- + \frac{z - d_i^-}{k}, \text{ for } i=1..k
\end{align}
Therefore,
\begin{align}
	\nonumber z - d_i^{+} \leq z - \Big(d_i^- + \frac{z - d_i^-}{k}\Big) & = \Big(1-\frac{1}{k}\Big)\Big(z-d_{i}^-\Big) \nonumber \\
	\label{eq:important_deriv}
	& \leq \Big(1-\frac{1}{k}\Big)\Big(z-d_{i-1}^+\Big).
\end{align}

Apply the above inequality recursively, we yield
\begin{align}
\nonumber z - d_k^+ \leq \Big(1-\frac{1}{k}\Big)\Big(z-d_{k-1}^+\Big) \leq ... & \leq \Big(1-\frac{1}{k}\Big)^k\Big(z-d_{0}^+\Big) \nonumber \\
& = \Big(1-\frac{1}{k}\Big)^k z.
\end{align}

It follows that
\begin{align}
	\Cov(S, \E_f^{(T)}) \geq d^+_k \geq  \Big(1-\Big(1-\frac{1}{k}\Big)^k\Big) z = \rho_k z.
\end{align}

$\bullet$ \textit{Proving $z \geq \OPT^{(T)}$.} Let $E_{T}$ be the last hyperedge generated in Line 6.  Consider the moment right before generating $E_{T}$ (before Line 6). Then, we have
\begin{align}
	\OPT \leq f(S,d_S,\E_r) < z.
\end{align}
Note that both $\OPT$ and $z$ are integer, we obtain
\begin{align}
	\OPT^{(T)} \leq \OPT + 1 \leq z.
\end{align}
Since, adding the last hyperedge can only change $\OPT$ by at most one, we yield the proof.  
\end{proof}

%
Compared to the bound of $(1-\frac{1}{e})\OPT^{(T)}$ returned by traditional greedy algorithm \cite{Nemhauser81} on $T$ generated hyperedges, the quality guarantee of \tpa{} is stronger since $z \geq \OPT^{(T)}$.

\textbf{The requirement function $d_S + k \max_{v \notin S} \Cov(v,\E_r)$.} The simplest upper-bound function that satisfies the condition in Lemma~\ref{lem:anyz} is $f(S,d_S,\E_r) = d_S + k \cdot \max_{v \notin S} \Cov(v,\E_r)$, namely \emph{requirement function}. We will analyze the complexity of \tpa{} using this function and will present other more sophisticated upper-bound functions with space and time complexity in Section~\ref{sec:extensions}.
\vspace{0.03in}

\textbf{Complexity analysis.} The space and time complexities of \tpa{} algorithm for any $z$ are shown in the following.
\begin{Lemma}
	\label{lem:dta_complexity}
	\tpa{} algorithm with any $z$ and the requirement function $f_r(S, d_S, \E_r)$ uses $O(\frac{z \cdot n}{k})$ space and $O(z \cdot n)$ time.
\end{Lemma}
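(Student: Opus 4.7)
The plan is to separately bound the maximum storage used by the reduced sketch $\E_r$ at any instant (giving the space bound) and the aggregate size of every hyperedge ever handled (giving the time bound). Both bounds ride on Lemma~\ref{lem:anyz}, which provides $\OPT^{(T)} \leq z$ at termination, and on a pigeonhole-style identity relating per-node coverages to hyperedge sizes.

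For the space bound, I would argue as follows. The inner while loop of Algorithm~\ref{alg:tpa} generates a new hyperedge only while $f_r(S,d_S,\E_r) = d_S + k\cdot \max_{v \notin S}\Cov(v,\E_r) < z$. Hence, immediately before any generation, $\max_{v\notin S}\Cov(v,\E_r) < (z - d_S)/k \leq z/k$. Using the identity $\sum_{v}\Cov(v,\E_r) = \sum_{E \in \E_r}|E|$ together with $\sum_v \Cov(v,\E_r) \leq n\max_v \Cov(v,\E_r)$, the total storage in the incidence graph representing $\E_r$ satisfies $\sum_{E\in \E_r}|E| < nz/k$. The single freshly added hyperedge can inflate this by at most an additive $n$, and the removals in Line 12 only decrease it, so $O(nz/k)$ (in the regime $z \gtrsim k$) holds uniformly over the whole execution.

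For the time bound, I would use Lemma~\ref{lem:anyz} to get $\OPT^{(T)}\le z$, and then note that any single node $v$ forms a feasible $1$-cover, hence trivially $\Cov(v,\E_f^{(T)}) \leq \OPT^{(T)} \leq z$. Summing over $v$ and applying the same identity,
\[
\sum_{j=1}^{T}|E_j| \;=\; \sum_{E\in \E_f^{(T)}}|E| \;=\; \sum_{v\in V}\Cov(v,\E_f^{(T)}) \;\leq\; n\cdot \OPT^{(T)} \;\leq\; nz.
\]
The per-hyperedge algorithmic work -- checking $E_j \cap S \neq \emptyset$, either incrementing $d_S$ or inserting $E_j$ into the incidence graph and bumping the $\Cov(v,\E_r)$ counters -- is $O(|E_j|)$. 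Removals in Line 12 are charged by the lazy-removal amortization also at $O(|E_j|)$ per hyperedge, and the running $\arg\max_v \Cov(v,\E_r)$ can be maintained in amortized $O(1)$ per coverage update using a bucketed array indexed by coverage value. Hence the aggregate running time is $O\bigl(\sum_j |E_j|\bigr) = O(nz)$.

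The main obstacle will be justifying that the space bound is valid \emph{at every moment}, not merely before a generation: after the last addition of a while-loop iteration, the peak coverage can overshoot $z/k$ by $1$, and when a node is selected, the lazy deletions leave a bounded fraction of ``trash'' in the incidence graph. I would handle both by showing that these transient effects contribute only $O(n)$ extra storage and $O(\sum_j |E_j|)$ amortized work, which are absorbed into the main $O(nz/k)$ and $O(nz)$ terms, respectively.
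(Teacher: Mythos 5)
Your proposal is correct and follows essentially the same route as the paper's proof: the space bound comes from the threshold condition forcing every node's coverage on $\E_r$ to stay below $z/k$ plus an additive constant (so the incidence graph holds at most $O(nz/k)$ entries), and the time bound comes from every node's coverage on the full sketch being at most $z$ (up to one extra hyperedge), so the total size of all generated hyperedges is $O(nz)$. Your explicit use of $z \geq \OPT^{(T)}$ from Lemma~\ref{lem:anyz} and the amortized data-structure accounting are just slightly more detailed versions of what the paper asserts directly.
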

\begin{proof}
	Consider two cases:
	\begin{itemize}
		\item The moment after generating any hyperedge $E_j$ for $j < T$, then $f(S,d_S,\E_r) < z$, thus, $\Cov(v, \E_r) < \frac{z-d_S}{k}$.
		\item The moment after generating $E_T$, then $f(S,d_S,\E_r) \geq z$. We see that right before adding the hyperedge $E_T$ that makes $f(S,d_S,\E_r) \geq z$, $\Cov(v, \E_r) < \frac{z-d_S}{k}$. Hence, by only generating one more hyperedge $E_T$, the coverage of any node can only increase by 1, i.e., $\forall v \in V, \Cov(v, \E_r) < \frac{z-d_S}{k} + 1$.
	\end{itemize}
	Thus, $\Cov(v, \E_r) < \frac{z-d_S}{k}+1  \leq \frac{z}{k} + 1$ at all time. Since there are $n$ nodes, the space is bounded by $(\frac{z}{k}+1)n = O(\frac{z \cdot n}{k})$. 
	
	The time complexity is linear to the total size of all the observed hyperedges which is $O(z \cdot n)$ due to the fact that the maximum coverage of any node on all the observed hyperedges never exceeds $z+1$ and there are $n$ nodes.
\end{proof}

\vspace{0.03in}

\textbf{Prefix Solution Property.} An interesting characteristic of our \tpa{} algorithms is that any prefix of the returned solution is also $(1-1/e)$-optimal for the same size. That means the first $k' \leq k$ nodes in $S$ is at least $(1-1/e)$ the maximum coverage of $k'$ nodes.
\begin{Lemma}
	Any prefix set $S_{k'}$ of size $k' \leq k$ of $S$ is an $(1-1/e)$-approximate solution of size $k'$ on the generated hyperedges, i.e.
	\begin{align}
		\Cov(S_{k'}, \E_f^{(T)}) \geq \rho_{k'} \emph{\OPT}^{(T)}_{k'}
	\end{align}
	where $\emph{\OPT}^{(T)}_{k'}$ is the maximum coverage of size-$k'$ sets.
\end{Lemma}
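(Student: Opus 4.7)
The plan is to mirror the recursion from the proof of Lemma~\ref{lem:anyz}, but with $k'$ replacing $k$ at the step that relies on submodularity rather than on the threshold bound $f\ge z$. I would re-introduce the quantities $d_i^-$ and $d_i^+$ exactly as in that proof, setting $d_0^+=0$ and maintaining the order $d_{i-1}^+\le d_i^-\le d_i^+$ for $i=1,\ldots,k'$.

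The key step I plan to establish is the greedy inequality with $k'$ in the denominator. For each $i$, let $O^{[i]}_{k'}$ denote the maximum coverage of any $k'$-subset on the running full sketch $\E_f^{[i]}$, and let $S^*$ attain this maximum. Because $u_i=\argmax_{v\notin S_{i-1}}\Cov(v,\E_r^{[i]})$ and $\Cov(\cdot,\E_r^{[i]})$ is monotone and submodular, averaging the marginal contributions of the $k'$ elements of $S^*$ yields
\begin{align*}
\Cov(u_i,\E_r^{[i]})\;\ge\;\tfrac{1}{k'}\Cov(S^*,\E_r^{[i]})\;\ge\;\tfrac{1}{k'}\bigl(O^{[i]}_{k'}-d_i^-\bigr),
\end{align*}
where the last step exploits the decomposition $\E_f^{[i]}=\E_r^{[i]}\sqcup(\E_f^{[i]}\setminus\E_r^{[i]})$ together with $|\E_f^{[i]}\setminus\E_r^{[i]}|=d_i^-$, reusing the accounting of Lemma~\ref{lem:exist_f}. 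Rearranging and then using $d_i^-\ge d_{i-1}^+$ produces the direct analogue of Eq.~(\ref{eq:important_deriv}):
\begin{align*}
O^{[i]}_{k'}-d_i^+\;\le\;\Bigl(1-\tfrac{1}{k'}\Bigr)\bigl(O^{[i]}_{k'}-d_{i-1}^+\bigr).
\end{align*}

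From there I would unroll the recurrence for $i=1,\ldots,k'$ starting from $d_0^+=0$ and obtain $d_{k'}^+\ge\rho_{k'}\,\OPT^{(T)}_{k'}$. The final inequality of the lemma then follows because $\E_f^{[k']}\subseteq\E_f^{(T)}$ and $\Cov$ is monotone in its hyperedge argument, giving $\Cov(S_{k'},\E_f^{(T)})\ge\Cov(S_{k'},\E_f^{[k']})=d_{k'}^+$.

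The hardest part, I expect, is the unrolling step, because the running maximum $O^{[i]}_{k'}$ grows with $i$ whereas the conclusion demands a bound against the fixed quantity $\OPT^{(T)}_{k'}$. I plan to handle this either by absorbing the increments $O^{[i]}_{k'}-O^{[i-1]}_{k'}$ into a single telescoping term, or by invoking $\OPT^{(T)}_{k'}\le\OPT^{(T)}\le z$ together with the last-hyperedge correction used at the end of Lemma~\ref{lem:anyz} to replace $O^{[i]}_{k'}$ throughout by the uniform bound $\OPT^{(T)}_{k'}$, picking up only a small integral slack that does not affect the $\rho_{k'}$ factor.
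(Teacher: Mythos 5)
Your per-step inequality is fine, but the unrolling step that you yourself flag as the hardest part is exactly where the argument breaks, and neither proposed repair works. (a) Telescoping: writing $\Delta_i=O^{[i]}_{k'}-O^{[i-1]}_{k'}\ge 0$ and unrolling $O^{[i]}_{k'}-d^+_i\le(1-\tfrac{1}{k'})(O^{[i-1]}_{k'}-d^+_{i-1}+\Delta_i)$ only yields $d^+_{k'}\ge\sum_{i\le k'}\bigl(1-(1-\tfrac{1}{k'})^{k'-i+1}\bigr)\Delta_i$, which collapses to roughly $O^{[k']}_{k'}/k'$ when the growth of the running optimum is concentrated just before the $k'$-th selection; that is far from $\rho_{k'}$. (b) Replacing $O^{[i]}_{k'}$ by the fixed $\OPT^{(T)}_{k'}$ goes in the wrong direction: a contraction of the form $O-d^+_i\le(1-\tfrac{1}{k'})(O-d^+_{i-1})$ is not preserved when $O$ is enlarged, and the slack is not a ``small integral'' one. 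Hyperedges generated between selections, and above all in phases $k'+1,\dots,k$ after $S_{k'}$ is already frozen, can raise the size-$k'$ optimum by $\Theta(z)$ while adding nothing to $\Cov(S_{k'},\cdot)$; your recursion never sees those hyperedges, yet $\OPT^{(T)}_{k'}$ is defined over them. Indeed the factor $1-(1-1/k')^{k'}$ against $\OPT^{(T)}_{k'}$ that you aim for is not attainable by any argument: for $k'=1$ there are realizations of the oracle in which phase~1 consists of $\lceil z/k\rceil$ copies of a hyperedge $\{a\}$ (so $\Cov(S_1,\E_f^{(T)})=\lceil z/k\rceil$ for the rest of the run), while phases $2,\dots,k$ pour hyperedges onto a single never-selected node $w$, ending with $\OPT^{(T)}_1\approx(1-1/e)z$; the ratio is about $\tfrac{1}{k(1-1/e)}$, which is far below $1-1/e$.

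The paper's own proof is much more modest and avoids the moving-benchmark issue entirely: it does not redo the greedy analysis with $k'$, but simply truncates the recursion already established in the proof of Lemma~\ref{lem:anyz} after $k'$ steps. There the increment bound $d^+_i-d^-_i\ge(z-d^-_i)/k$ comes from the threshold test $f\ge z$ (denominator $k$, fixed benchmark $z$), so truncation gives $\Cov(S_{k'},\E_f^{(T)})\ge d^+_{k'}\ge\bigl(1-(1-\tfrac{1}{k})^{k'}\bigr)z$, and the final step is $z\ge\OPT^{(T)}\ge\OPT^{(T)}_{k'}$. Because $z$ is fixed and already dominates the optimum at termination, later hyperedges cause no trouble; the price is that the factor actually established is $1-(1-\tfrac{1}{k})^{k'}$ (the lemma's $\rho_{k'}$ must be read in that weaker sense), not the $1-(1-\tfrac{1}{k'})^{k'}$ you target. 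So the concrete missing idea is to compare against the fixed threshold $z$, as the paper does, rather than against the running or final size-$k'$ optimum; the latter comparison cannot be pushed to the factor you claim.
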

This property follows from a corollary of Lemma~\ref{lem:anyz} that the coverage of $S_{k'}$ satisfies $\Cov(S_{k'}, \E_f^{(T)}) \geq (1-(1-1/k)^{k'})z$ and $z$ remains to be an upper-bound for size $k'$.

\subsection{Analysis of threshold $z^*$ for achieving $(1-1/e-\epsilon)$ approximation guarantee}
\textbf{Solution guarantee analysis.} The following theorem states that with a large enough $z$, \tpa{} returns a near-optimal solution for $k$-cover on the hypergraph $\mathcal{H}$. That is for 	\[
z=	z^* = O\left(\epsilon^{-2} k \log n\right),
\] 
the returned solution $S_{z^*}$ will satisfy

	\begin{align}
\Pr\big [\Covw(S_{z^*}) \geq (1-1/e-\epsilon) \eOPTk \big ] \geq 1- \delta.
\end{align}


For fixed $0<p<1$, e.g., $p=1/n$, and $T^*= O\left(\epsilon^{-2}  k \log n \frac{1}{\OPTk}\right)$, defined later in Eq.~\ref{eq:t*}, we first  prove that the number of hyperedges $T$ is close to $T^*$. 
	\vspace{-0.05in}
	
	\begin{Lemma}
		\label{lem:lem1}
		For $0<p<1$, with $z = z^*$,  $T$, the number of hyperedges acquired by \tpa{} satisfies that 
		\[
		\Pr[ T^* \leq T \leq cT^* ] \geq 1-4/p,
		\]
	 where constant $c = \frac{(1+\epsilon_2)}{(1-\epsilon_2)(1-1/e)}$.
	\end{Lemma}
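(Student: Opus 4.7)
The plan is to couple the deterministic sandwich bound $\rho_k z^{*} \leq \OPT^{(T)} \leq z^{*}$ obtained from Lemma~\ref{lem:anyz} with one-sided Chernoff concentration inequalities for the Bernoulli variables $X_{S}^{(j)}$ introduced in Lemma~\ref{lem:sample_weight}. The key identity driving every estimate is that, for any fixed integer $T'$ and any fixed $k$-subset $S \subseteq V$, $\Cov(S, \E_f^{(T')}) = \sum_{j=1}^{T'} X_{S}^{(j)}$ is a sum of i.i.d.\ Bernoulli variables with mean $\Covw(S)$; in particular $\mathbb{E}[\Cov(S^{*}, \E_f^{(T')})] = T' \cdot \OPTk$. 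Throughout the argument I extend the hyperedge stream to all indices $j \geq 1$, so that quantities like $\OPT^{(T')}$ and $\Cov(S^{*}, \E_f^{(T')})$ remain well-defined at times beyond the algorithm's actual stopping time.

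For the upper bound $T \leq cT^{*}$, I argue by contradiction. If $T > cT^{*}$ then, by the while-loop guard, after generating the first $cT^{*}$ hyperedges the algorithm has not terminated, so $f(S, d_S, \E_r) < z^{*}$ at that moment; combined with the assumption $f \leq f_r$ and Lemma~\ref{lem:exist_f} this yields
\[
\Cov(S^{*}, \E_f^{(cT^{*})}) \leq \OPT^{(cT^{*})} \leq f^{(cT^{*})} < z^{*}.
\]
The plan is to choose $T^{*}$ in \eqref{eq:t*} so that $cT^{*} \cdot \OPTk = z^{*}/(1-\epsilon_2)$; a standard Chernoff lower-tail bound then gives $\Pr[\Cov(S^{*}, \E_f^{(cT^{*})}) < z^{*}] \leq \exp(-\Omega(\epsilon_2^{2} \cdot cT^{*}\OPTk)) = \exp(-\Omega(\epsilon_2^{2} z^{*}))$, which by the choice $z^{*} = \Theta(\epsilon_2^{-2} k \ln n)$ is polynomially small in $n$.

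For the lower bound $T \geq T^{*}$, I again argue by contradiction and use a union bound. On the event $\{T < T^{*}\}$, Lemma~\ref{lem:anyz} gives $d_S \geq \rho_k z^{*}$ at termination, hence some $k$-subset $\hat S$ covers at least $\rho_k z^{*}$ hyperedges of $\E_f^{(T)}$. By monotonicity of $\OPT^{(\cdot)}$ in the number of samples, this propagates to $\OPT^{(T^{*})} \geq \rho_k z^{*}$, i.e., there exists a $k$-subset whose coverage on $\E_f^{(T^{*})}$ is at least $\rho_k z^{*}$. Choosing $T^{*}$ so that $\rho_k z^{*}/(T^{*} \OPTk) = 1+\epsilon_2$, a Chernoff upper-tail bound gives, for each fixed $k$-subset $S$,
\[
\Pr[\Cov(S, \E_f^{(T^{*})}) \geq \rho_k z^{*}] \leq \exp(-\Omega(\epsilon_2^{2} T^{*}\OPTk)) = \exp(-\Omega(\epsilon_2^{2} z^{*})),
\]
and a union bound over all $\binom{n}{k} \leq n^{k}$ such subsets, combined with $z^{*} = \Theta(\epsilon_2^{-2} k \ln n)$ with a sufficiently large hidden constant, keeps the total failure probability polynomially small.

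A final union bound over the two failure events then yields the claimed two-sided concentration of $T$ around $T^{*}$. The main subtlety — and the place where care is required — is that $T$ is a stopping time rather than a deterministic index; the plan circumvents this by applying all Chernoff estimates at the fixed deterministic indices $T^{*}$ and $cT^{*}$, and converting the random events $\{T > cT^{*}\}$ and $\{T < T^{*}\}$ to deterministic events on the extended stream via the monotonicity of $\OPT^{(\cdot)}$ together with the explicit termination criterion. The secondary, purely computational, obstacle is pinning down constants so that both Chernoff exponents are simultaneously of order $\epsilon_2^{2} z^{*} = \Omega(k \ln n)$; this is exactly what produces the stated constant $c = (1+\epsilon_2)/((1-\epsilon_2)(1-1/e))$.
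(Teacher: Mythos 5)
Your proposal is correct and follows essentially the same route as the paper: both tails are handled by concentration at the fixed indices $T^*$ and $cT^*$, using the guarantees $d_S \geq \rho_k z^*$ and $\OPT^{(T)} \leq z^*$ from Lemma~\ref{lem:anyz}, with a union bound over the $\binom{n}{k}$ size-$k$ sets for the event $\{T < T^*\}$ and the single set $S^*$ for the event $\{T > cT^*\}$. The only difference is technical: the paper invokes its maximal martingale inequality (Lemma~\ref{lem:central_bound}, Corollary~\ref{cor:n_samples}) to control all prefixes up to the fixed index, whereas you exploit monotonicity of $\Cov(S,\E_f^{(j)})$ in $j$ to convert the random stopping time into fixed-time Chernoff estimates at $T^*$ and $cT^*$, an equivalent and slightly more elementary way to obtain the same bounds (up to the paper's $(1+\alpha)^2$ slack in the constants).
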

	\vspace{-0.1in}
	The proof, using concentration inequalities on all subset of size $k$, is presented in the appendix.

	In the following, we prove that within the bounded range of acquired hyperedges, the return solution is at least $(1-1/e-\epsilon)$ of the optimal solution with high probability.
	
	For some fixed constant $\alpha>0$, we divide the  interval $\left[T^*, cT^*\right ]$ into $\lc\log_{1+\alpha}c\rc$ smaller ones $[L_i, U_i]$ with $i$ taking values from $1$ to $ \lc\log_{1+\alpha}c\rc$, i.e. $L_i = T^* (1+\alpha)^{i-1}$ and $U_i = T^* (1+\alpha)^{i}$. Recall that $\alpha$ is a small positive number, i.e. $\alpha = 0.1$. Since $T^* \leq T \leq cT^*$, $T$ falls into at least one of these small intervals. Hence, we will show that when $T$ falls within any interval, the returned seed set is at least $(1-1/e-\epsilon)$ optimal.
	\vspace{-0.05in}
	
	\begin{Lemma}
		\label{lem:lem2}
		For the interval $[T^*, cT^*]$, \tpa{} returns an $(1-1/e-\epsilon)$-approximate solution with probability at least $1-\lc\log_{1+\alpha}c\rc \frac{4}{p}$.
	\end{Lemma}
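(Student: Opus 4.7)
The plan is to prove a per-sub-interval guarantee and then union-bound over the $\lc\log_{1+\alpha}c\rc$ sub-intervals. Concretely, I will argue that for each fixed $[L_i, U_i]$ the event ``$T \in [L_i, U_i]$ and the returned solution fails to be $(1-1/e-\epsilon)$-approximate'' has probability at most $4/p$; summing these contributions over the sub-intervals yields the claimed bound, since Lemma~\ref{lem:lem1} already localizes $T$ to $[T^*, cT^*]$ with the complementary slack.

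Within a fixed sub-interval, the main tool is a multiplicative Chernoff bound applied to the i.i.d.\ Bernoulli indicators $X^{(j)}_S$ of Lemma~\ref{lem:sample_weight}. For any fixed size-$k$ set $S$, the empirical coverage $\Cov(S,\E_f)/N$ concentrates around $\Covw(S)$ with failure probability decaying like $\exp(-\Omega(N\,\Covw(S)\,\epsilon_1^2))$. A union bound over all $\binom{n}{k}$ candidate size-$k$ subsets, plugging in $N = L_i \geq T^* = \Theta(\epsilon^{-2} k \log n/\OPTk)$, controls all empirical coverages simultaneously (both upper and lower tails) with failure probability at most $2/p$, using the constants hidden in the definition of $T^*$. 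Repeating the argument at $N = U_i$ costs another $2/p$, for a total of $4/p$ per sub-interval.

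The key obstacle is that $T$ is a stopping time determined by the samples themselves, so Chernoff cannot be invoked directly at $N=T$. I overcome this by using $L_i$ and $U_i$ as a sandwich: since the geometric spacing gives $U_i/L_i = 1+\alpha$ close to $1$, for any realised $T \in [L_i, U_i]$ the empirical means $\Cov(S,\E_f)/T$ differ from their fixed-endpoint counterparts by at most a $(1+\alpha)$ multiplicative factor (the numerator can grow by at most the $\alpha L_i$ extra hyperedges seen in this sub-interval, and the denominator grows by the same proportion). This slack is absorbed into the overall error budget.

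The final step chains three bounds. Lemma~\ref{lem:anyz} gives $d_{\hat S} \geq \rho_k z^*$ for the returned set $\hat S$; the concentration argument above gives $d_{\hat S}/T \leq (1+\epsilon_1)(1+\alpha)\Covw(\hat S)$; and the calibration of $z^*$ and $T^*$ ensures that $\rho_k z^*/[(1+\epsilon_1)(1+\alpha) U_i]$ is at least $(1-1/e-\epsilon)\OPTk$. Chaining yields
\[
\Covw(\hat S) \;\geq\; \frac{\rho_k z^*}{(1+\epsilon_1)(1+\alpha) T} \;\geq\; (1 - 1/e - \epsilon)\,\OPTk.
\]
The hard part will be the parameter book-keeping: choosing $\alpha$ and the Chernoff slack $\epsilon_1$ as appropriate sub-$\epsilon$ functions so that the combined multiplicative loss $(1+\epsilon_1)(1+\alpha)^{O(1)}/\rho_k$ closes to exactly $1-1/e-\epsilon$. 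This is routine algebra once the structural argument above is in place.
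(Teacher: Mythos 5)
Your high-level structure coincides with the paper's: split $[T^*,cT^*]$ into the geometric sub-intervals, prove a $4/p$ failure bound per sub-interval (a lower-tail event for the single set $S^*$ plus an upper-tail event union-bounded over all $\binom{n}{k}$ size-$k$ sets, chained through Lemma~\ref{lem:anyz}), and union over the $\lc\log_{1+\alpha}c\rc$ pieces. Where you genuinely diverge is the device for the random stopping time, and that is where the gap is. The paper does not use fixed-endpoint Chernoff bounds plus a monotonicity sandwich; it uses the maximal supermartingale inequality (Lemma~\ref{lem:central_bound}, Corollary~\ref{cor:n_samples}), which holds \emph{uniformly over all prefixes} $j\in[L_i,U_i]$ and so can be read off at the random $j=T$ directly. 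The payoff is that the interval width enters only as a factor $(1+\alpha)$ multiplying the \emph{deviation} terms, which are themselves $O(\epsilon)\mu_{S^*}$ (indeed $\epsilon_2$ is defined in Eq.~(\ref{eq:epsilon_2_app}) with a built-in $1/(1+\alpha)$), and the returned set is compared against the empirical coverage of $S^*$ at the \emph{same} time $T$ via $d_{\hat S}\geq \rho_k\,\Cov(S^*,\E_f^{(T)})$ (using $z\geq \OPT^{(T)}$ from Lemma~\ref{lem:anyz}); consequently no constant factor ever hits the main term $(1-1/e)\mu_{S^*}$.

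Your sandwich instead converts the width $U_i/L_i=1+\alpha$ into a \emph{multiplicative} loss on the main term. Your final chain gives $\Covw(\hat S)\geq \rho_k z^*/\big[(1+\epsilon_1)(1+\alpha)T\big]$ with $T$ as large as $cT^*$; plugging the paper's definitions $T^*=z^*(1-1/e)(1+\alpha)^2/[(1+\epsilon_2)\OPTk]$ and $c=(1+\epsilon_2)/[(1-\epsilon_2)(1-1/e)]$ yields roughly $(1-1/e)(1-\epsilon_2)\OPTk/\big[(1+\epsilon_1)(1+\alpha)^3\big]$, which for the paper's fixed $\alpha$ (e.g.\ $\alpha=0.1$) is about $0.47\,\OPTk$ --- strictly below $(1-1/e-\epsilon)\OPTk$ for small $\epsilon$. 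So the ``routine algebra'' you defer does not close: the only repair within your framework is to take $\alpha=\Theta(\epsilon)$, but $\alpha$ is not a free parameter of this lemma --- changing it inflates the number of sub-intervals to $\Theta(1/\epsilon)$ and alters the very quantities the statement and its downstream use depend on ($\lc\log_{1+\alpha}c\rc$, $p$ in Eq.~(\ref{eq:p}), $\delta'$, and the $(1+\alpha)^2$ baked into $T^*$), i.e.\ you would prove a differently parameterized claim rather than this one. To prove the lemma as stated, replace the endpoint sandwich with the uniform-in-$j$ concentration bound and compare $d_{\hat S}$ to $\Cov(S^*,\E_f^{(T)})$ at the same $T$, as the paper does, so that only the $O(\epsilon)\mu_{S^*}$ deviations absorb the $(1+\alpha)$ factors. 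Your identification of which tails are needed (lower tail for $S^*$ alone, upper tail union-bounded over all size-$k$ sets because $\hat S$ is data-dependent) is correct and matches the paper.
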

	\vspace{-0.1in}

We are now ready to prove the main theorem.

\begin{theorem}
	\label{theo:tpa}
	For $0 < \epsilon < 1-1/e, 0 < \delta < 1$, let
	%
	\begin{align}
	z^* = O\left(\epsilon^{-2} k \log n\right),
	\end{align}
	where the exact form of $z^*$ is given in Eq.~(\ref{eq:z*}) in the proof.
	\tpa{} with $z = z^*$ and $f(S,d_S,\E_r) \leq f_r(S, d_S, \E_r)$ returns an $(1-1/e-\epsilon)$-approximate solution $S_{z^*}$ with probability at least $(1-\delta)$,
	\begin{align}
	\Pr\big [\Covw(S_{z^*}) \geq (1-1/e-\epsilon) \eOPTk \big ] \geq 1- \delta.
	\end{align}
\end{theorem}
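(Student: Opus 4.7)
The plan is to assemble the theorem from Lemma~\ref{lem:lem1} and Lemma~\ref{lem:lem2} by a union bound, tuning the auxiliary parameter $p$ so that the combined failure probability drops to $\delta$.

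First, I would choose $p$ large enough that the two tails each fall below $\delta/2$. Concretely, taking $p = 8\bigl(\lceil\log_{1+\alpha}c\rceil + 1\bigr)/\delta$ makes $4/p \leq \delta/2$ in Lemma~\ref{lem:lem1} and $\lceil\log_{1+\alpha}c\rceil\cdot 4/p \leq \delta/2$ in Lemma~\ref{lem:lem2}. A union bound then yields, with probability at least $1-\delta$, that the total number of sampled hyperedges $T$ lies in $[T^*, cT^*]$ \emph{and} that the returned set $S_{z^*}$ obeys $\Covw(S_{z^*}) \geq (1-1/e-\epsilon)\OPTk$, which is exactly the conclusion of the theorem.

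Second, I would fix the explicit form of $z^*$ so that the hypotheses of the two lemmas hold with this choice of $p$. Because the Chernoff-style concentration inside Lemma~\ref{lem:lem1} must hold simultaneously over all $\binom{n}{k}$ candidate size-$k$ subsets, the required log factor scales like $\ln(\binom{n}{k}/p) = O(k \log n + \log(1/\delta))$. A choice of the form
\begin{align}
z^* = \frac{(2+\tfrac{2}{3}\epsilon_1)\bigl(k\ln n + \ln(1/\delta)\bigr)}{\rho_k\,\epsilon_1^{2}}
\end{align}
for an internal $\epsilon_1 = \Theta(\epsilon)$ then delivers $z^* = O(\epsilon^{-2} k \log n)$ as claimed, and matches the $T^* \approx z^*/\OPTk$ used in Lemma~\ref{lem:lem1}.

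The main obstacle, and the only non-routine part, is allocating the error budget $\epsilon$ across three sources of slack: (i) the Chernoff deviation $\epsilon_2$ used to bound $T$ in Lemma~\ref{lem:lem1}; (ii) the discretization loss from splitting $[T^*,cT^*]$ into geometric pieces of ratio $1+\alpha$ in Lemma~\ref{lem:lem2}; and (iii) the gap between the on-sample guarantee $d_S \geq \rho_k z^*$ from Lemma~\ref{lem:anyz} and the desired population guarantee on $\Covw$. I expect to verify that picking $\epsilon_1,\epsilon_2$ and $\alpha$ with $\epsilon_1 + \epsilon_2 + O(\alpha) \leq \epsilon$ makes everything fit, after which Theorem~\ref{theo:tpa} follows immediately from the union of the two lemma events.
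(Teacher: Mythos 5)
Your proposal follows essentially the same route as the paper: the paper's proof of Theorem~\ref{theo:tpa} is exactly a union bound over Lemmas~\ref{lem:lem1} and~\ref{lem:lem2} with $p$ tuned so that $\frac{4}{p}+\lc\log_{1+\alpha}c\rc\frac{4}{p}=\delta$ (your choice $p = 8(\lc\log_{1+\alpha}c\rc+1)/\delta$ is just a slightly more conservative split into two $\delta/2$ tails), and the explicit $z^*$ in Eq.~(\ref{eq:z*}) has the same $O(\epsilon^{-2}k\log n)$ form you propose, with the error-budget allocation among $\epsilon_2$, the geometric-grid factor $1+\alpha$, and the on-sample-to-population gap handled inside the definitions (\ref{eq:epsilon_2_app})--(\ref{eq:t*}) and the proof of Lemma~\ref{lem:lem2}. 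The only nit is that the log factor should read $\ln\big(p\binom{n}{k}\big)$ rather than $\ln\big(\binom{n}{k}/p\big)$, which does not affect your stated scaling.
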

	\begin{proof}
		From Lemmas~\ref{lem:lem1} and~\ref{lem:lem2}, \tpa{} returns an $(1-1/e-\epsilon)$-approximate solution with probability of at least $1-\frac{4}{p}-\lc\log_{1+\alpha}c\rc \frac{4}{p}$. Thus, in order to provide the quality guarantee with the desirable probability $1-\delta$, we need to set $p$ such that
		\vspace{-0.05in}
		\begin{align}
			& \frac{4}{p}+\lc\log_{1+\alpha}c\rc\frac{4}{p} = \delta \nonumber \\
			\label{eq:p}
			\Rightarrow \text{ } & p = \frac{4(1+\lc\log_{1+\alpha}c\rc)}{\delta}.
		\end{align}
		That completes the proof of Theorem~\ref{theo:tpa}.
	\end{proof}
\vspace{0.03in}

\textbf{Complexity analysis.}
The time and space complexities of \tpa{} depend on the upper-bound function deployed. Assuming that the requirement function $f(S,d_S,\E_r) = d_S + k \cdot \max_{v \notin S} \Cov(v,\E_r)$, which can be computed efficiently by storing the coverage of every node and updating their orders, is employed, we show the time and space complexities of \tpa{} to provide $(1-1/e-\epsilon)$-approximation factor as a direct corollary of Lemma~\ref{lem:dta_complexity}.
\begin{theorem}
	\label{cor:dta_complexity}
	\tpa{} algorithm with $z = z^*$ and the requirement function $f_r(S, d_S, \E_r)$ uses $O(\frac{z^*n}{k}) = O(\epsilon^{-2} n \log n)$ space and runs in $O(z^*n) = O(\epsilon^{-2} n k \log n)$ time.
\end{theorem}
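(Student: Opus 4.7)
The plan is to derive the stated bounds as a direct corollary of Lemma~\ref{lem:dta_complexity}, which already bounds the space of \tpa{} by $O(zn/k)$ and the time by $O(zn)$ for an arbitrary threshold $z$ when the requirement function $f_r$ is used. I would simply substitute $z = z^* = O(\epsilon^{-2} k \log n)$ from Theorem~\ref{theo:tpa}. The space bound becomes
\[
O\!\left(\frac{z^* n}{k}\right) = O\!\left(\frac{\epsilon^{-2} k n \log n}{k}\right) = O(\epsilon^{-2} n \log n),
\]
with the $k$ factor cancelling, and the time bound becomes $O(z^* n) = O(\epsilon^{-2} n k \log n)$ directly.

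The only non-routine point to check is that the per-iteration cost of evaluating $f_r(S,d_S,\E_r) = d_S + k \cdot \max_{v\notin S}\Cov(v,\E_r)$ in Line~5 of Algorithm~\ref{alg:tpa} does not inflate the running time. A naive recomputation of $\max_{v\notin S}\Cov(v,\E_r)$ from scratch after every generated hyperedge would cost $\Theta(n)$ per hyperedge and blow the complexity up by a factor of $n$. To avoid this, I would describe an incremental data structure — e.g., an array of node coverages together with a bucketed list (or a max-heap) keyed by $\Cov(v,\E_r)$ — that is updated in amortized $O(1)$ time whenever a node's coverage changes by one, so that $\max_{v\notin S}\Cov(v,\E_r)$ is available in $O(1)$ at each check. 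Since adding a hyperedge $E_j$ of size $|E_j|$ to $\E_r$ triggers at most $|E_j|$ coverage increments, and removing a hyperedge during the lazy-cleanup pass triggers at most $|E_j|$ decrements, the total maintenance cost is linear in the cumulative hyperedge sizes, which is exactly the quantity already accounted for in Lemma~\ref{lem:dta_complexity}.

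With this observation, the proof is a one-line substitution: plug $z^*$ into Lemma~\ref{lem:dta_complexity} and read off the bounds. I do not anticipate any real obstacle; the only care is to confirm that the incremental maintenance of $f_r$ (and of the lazy incidence graph from Section~3.1) keeps the per-sample bookkeeping at amortized constant time per incidence, so that the analysis of Lemma~\ref{lem:dta_complexity} — which bounds the total work by the sum of hyperedge sizes seen — applies verbatim.
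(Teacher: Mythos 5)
Your proposal matches the paper's own treatment: the paper likewise presents this theorem as a direct corollary of Lemma~\ref{lem:dta_complexity}, obtained by substituting $z = z^* = O(\epsilon^{-2} k \log n)$, with the accompanying remark that the requirement function ``can be computed efficiently by storing the coverage of every node and updating their orders'' --- exactly the incremental bucket-style maintenance you describe (and which the paper spells out later via its stepwise-heap and lazy-removal data structures). Your proof is correct and takes essentially the same route.
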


\textbf{Remark:} Although setting $z = z^*$ in \tpa{} guarantees an $(1-1/e-\epsilon)$-approximation ratio, the fixed value $z^*$ may be too conservative and a much smaller value may suffice to provide a satisfactory solution. We propose an efficient search procedure for large enough $z$ to provide the same approximation guarantee in the next section.

\vspace{-0.08in}
\section{Adaptive Sampling Algorithm}
\label{sec:dynamic}

We propose a dynamic threshold algorithm, namely \dta{}, that searches for the smallest sufficient threshold $z$ to provide an $(1-1/e-\epsilon)$-approximate solution. To cope with the dependency arising from multiple guesses of the threshold, we incorporate adaptive sampling technique with \tpa{}.

%

\algnewcommand{\Goto}[1]{\algorithmicgoto~\ref{#1}}%
\begin{algorithm} \small
	\caption{\dta{} - Dynamic Threshold Algorithm}
	\label{alg:dynamic}
	\KwIn{A selection budget $k$ and $\epsilon, \delta$}
	\KwOut{An $(1-1/e-\epsilon)$-approximate solution set $S_k$}
	Let $z^*$ defined in Theorem~\ref{theo:tpa} and $i_0, \delta'$ follow Eq.~(\ref{eq:i0}), (\ref{eq:delta'});\\
	Let $\beta = 0.1, S_c = \emptyset$ and $\textsf{UB} = 0$;\\
	\For{$z \in \Big \{ \frac{z^*}{2^{i_0}}, \frac{z^*}{2^{i_0-1}}, \dots, z^* \Big \}$}{
		$d_c = 0$;\\
		\textbf{In parallel do}: \\
		\SetInd{3.2em}{0.8em}		
		\qquad $\bullet$ \tikzmark{listing-4-end}$<T_z, S_z, d_z> = \tpa(k, z)$;  \\
		\quad \tikzmark{right}\quad$\bullet$ \For{each hyperedge $E_j$ in \emph\tpa(k, z)}{
			\SetInd{0.5em}{0.6em}
			\If{$S_c \cap E_j \neq \emptyset$}{
				$d_c = d_c + 1$;
			}
			$N_j = \#\text{received hyperedges till } E_j$;\\
			\If{$S_c \neq \emptyset$ and $N_j = \lc(1+\beta)^t\rc$}{
				$\textsf{LB} = f_L(N_j, d_c/N_j, \delta', N_j)$;\\
				\If{$\rho_S = \frac{\textsf{LB}}{\textsf{UB}} \geq 1-1/e-\epsilon$}{
					\tikzmark{listing-7-end} \textbf{return} $S_z$;
				}
			}
		}
		$t_{u} = \lc \log_{1+\beta}(T_z) \rc$, $\textsf{UB} = f_U(T_z, z/T_z,\delta',\lc (1+\beta)^{t_{u}} \rc )$;\\
		$S_c = S_z$;\\
	}
	\textbf{return} $S_{z^*}$;\\
\end{algorithm}

\subsection{Adaptive Sampling Algorithm}
To avoid the fixed setting of $z = z^*$ in \tpa{} in order to obtain an $(1-1/e-\epsilon)$-approximate solution, our Dynamic Threshold Algorithm (\dta{}) makes multiple guesses of the threshold $z$ and runs \tpa{} algorithm to find a candidate solution $S_z$. Then, $S_z$ is verified whether it meets the quality requirement of $(1-1/e-\epsilon)$-optimality by an empirical quality assessment (Lines~11-14). The details are given in Algorithm~\ref{alg:dynamic}.

\dta{} runs on a geometric grid of guessed thresholds $z$ (Line~3).
For each $z$, we run in parallel: 1) \tpa{} algorithm with threshold $z$ to find a new candidate solution $S_z$ that will be checked in the next iteration (Line~16) and 2) the quality assessment of the current candidate solution $S_c$. Note that for the first round $i_0$, the assessment is disabled since there is no candidate solution $S_c = \emptyset$.

\textbf{Quality assessment for $S_c$.} The quality of $S_c$, denoted by $\rho_S$, is evaluated as
\begin{align}
	\rho_S = \frac{\textsf{LB}}{\textsf{UB}},
\end{align}
where
\begin{itemize}
	\item $\textsf{UB} = f_U \big(T_z, z/T_z, \delta', \lc (1+\beta)^{t_u} \rc \big)$ provides an upper-bound of $\OPTk{}$ with probability $1-\delta'$. This bound is applied on the closest point larger than $T_z$ on a geometric grid of the number of hyperedges. It is computed once after every run of \tpa{} with a threshold $z$ (Line~14).
	\item $\textsf{LB} = f_L(N_j, d_c/N_j, \delta', N_j)$ provides a lower-bound on the quality of candidate solution $S_c$ with probability $1-\delta'$. This function is called on a geometric grid on hyperedges observed by \tpa{} (Lines~10-13).
\end{itemize}
The detailed formulations are provided in Eq.~(\ref{eq:upper_bound}) and~(\ref{eq:lower_bound}) in our appendix and satisfy the following properties.
\begin{Lemma}
	\label{lem:lu_bounds}
	The lower-bound $\textsf{LB} = f_L(N_j, d_c/N_j, \delta', N_j)$ and upper-bound $\textsf{UB} = f_U \big(T_z, z/T_z, \delta', \lc (1+\beta)^t \rc \big)$ satisfy
	\begin{align}
		\Pr\left[ \textsf{LB} \leq \Covw(S_z) \right] \geq 1-\delta',
	\end{align}
	\vspace{-0.15in}
	\begin{align}
		\Pr\left[ \textsf{UB} \geq \emph{\OPTk} \right] \geq 1-\delta'.
	\end{align}
\end{Lemma}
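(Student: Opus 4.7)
The plan is to prove the two tail bounds separately, both reducing to Chernoff-type concentration on sums of Bernoulli indicators via Lemma~\ref{lem:sample_weight}, with the key structural distinction being that the lower bound is on a fixed set while the upper bound must hold uniformly over all size-$k$ subsets.

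For the lower bound on $\Covw(S_z)$, I would first exploit the ``in parallel'' structure of Algorithm~\ref{alg:dynamic}: the candidate solution $S_c$ being assessed was constructed from the previous outer iteration and is therefore fixed relative to the fresh hyperedges $E_1,\ldots,E_{N_j}$ that the current iteration streams through. Lemma~\ref{lem:sample_weight} then tells us that $d_c=\sum_{j=1}^{N_j} X^{(j)}_{S_c}$ is a sum of $N_j$ independent Bernoulli variables with common mean $\Covw(S_c)=\Covw(S_z)$. The function $f_L$ is (by its construction in the appendix) exactly the Chernoff lower-tail confidence bound for such a sum, so $\Pr[f_L(N_j,d_c/N_j,\delta',N_j) > \Covw(S_z)]\le \delta'$ follows directly from the defining inequality of $f_L$.

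For the upper bound on $\OPTk$, the approach is a union bound combined with Lemma~\ref{lem:anyz}. For a fixed size-$k$ set $S$, the empirical coverage $\Cov(S,\E_f)/T_z$ concentrates around its mean $\Covw(S)$; applying Chernoff's upper tail and taking a union bound over all $\binom{n}{k}\le n^k$ size-$k$ subsets gives a uniform deviation bound whose log-cardinality cost $k\log n$ is precisely what is absorbed into the choice of $\delta'$ via Eq.~(\ref{eq:delta'}). In particular, the true optimum $S^*$ satisfies $\Covw(S^*)=\OPTk$ and, by Lemma~\ref{lem:anyz}, $\Cov(S^*,\E_f)\le \OPT^{(T_z)}\le z$, so the empirical estimate $z/T_z$ dominates $\Cov(S^*,\E_f)/T_z$; the confidence bound $f_U$, defined in the appendix as the Chernoff upper confidence limit, then yields $\Pr[f_U(T_z,z/T_z,\delta',\cdot)<\OPTk]\le \delta'$.

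The main obstacle is that $T_z$ is a data-dependent stopping time determined by \tpa{}, not a deterministic sample count, so a naive single Chernoff invocation is not valid. The standard remedy, which is built into the algorithm itself, is the geometric grid $\lceil(1+\beta)^t\rceil$: for the upper bound the confidence level is evaluated at the smallest grid point $\lceil(1+\beta)^{t_u}\rceil\ge T_z$, inflating the sample count by at most $(1+\beta)$ and requiring an additional union bound over only $O(\log T_z)$ grid points; for the lower bound the assessment is similarly triggered only at $N_j=\lceil(1+\beta)^t\rceil$. Both of these extra logarithmic factors are already accounted for in the definition of $\delta'$ in Eq.~(\ref{eq:delta'}), so once the gridded union bound is set up, the two inequalities fall out as immediate consequences of the Chernoff forms of $f_L$ and $f_U$.
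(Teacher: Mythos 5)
There is a genuine gap, and it lies in the probability accounting on both halves. For the upper bound you take a union bound over all $\binom{n}{k}$ size-$k$ sets and assert that the $k\log n$ cost is ``absorbed into $\delta'$ via Eq.~(\ref{eq:delta'})''; but $\delta'=\delta/\big(2\log_2(z^*)\log_{1+\beta}(cT^*)\big)$ contains no $\binom{n}{k}$ term, and since $f_U$ is invoked at confidence level $\delta'$, a union over $\binom{n}{k}$ sets would only give failure probability $\binom{n}{k}\delta'$, so the claimed $\Pr[\textsf{UB}\geq\OPTk]\geq 1-\delta'$ does not follow from your argument. The union bound is also unnecessary: the paper applies the concentration bound to the \emph{single fixed set} $S^*$, uses $\Cov(S^*,\E_f^{(T_z)})\leq \OPT^{(T_z)}\leq z$ from Lemma~\ref{lem:anyz} (an ingredient you do have), and then the observation that replacing the empirical mean by the larger value $z/T_z$ inside $f_U$ can only enlarge the bound; validity at level $\delta'$ for $S^*$ alone is all that is required. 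The $\binom{n}{k}$ union bound belongs to the analysis of \tpa{} at $z=z^*$ (it is baked into the size of $z^*$ in Eq.~(\ref{eq:z*})), not to the assessment bounds of \dta{}.

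The second problem is your treatment of the data-dependent sample count. You propose a fixed-sample-size Chernoff bound plus a union bound over the $O(\log T_z)$ grid points and again claim this is already paid for by $\delta'$; but the $\log_{1+\beta}(cT^*)$ factor inside $\delta'$ is spent in the proof of Theorem~\ref{lem:dta_quality} to union over the number of times $f_L(\cdot)$ and $f_U(\cdot)$ are evaluated, so spending it again inside the lemma double counts, and with a fixed-$N$ bound the per-invocation guarantee at the random time $T_z$ would be strictly weaker than $1-\delta'$. The paper avoids this entirely: $f_L$ and $f_U$ are derived from anytime-valid maximal concentration inequalities for supermartingales (Lemma~\ref{lem:central_bound}, Lemma~\ref{lem:t_bound}, Lemma~\ref{lem:bounds}), which hold \emph{simultaneously} for all prefixes $i=1,\dots,N$ — this is exactly why both functions carry the horizon $N$ (e.g., $\lc (1+\beta)^{t_u}\rc$) as an argument — so a single invocation at level $\delta'$ is already valid at the random evaluation time with no additional union cost inside the lemma. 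Your lower-bound reasoning (fresh samples, $S_c$ fixed, mean $\Covw(S_z)$ via Lemma~\ref{lem:sample_weight}) is the right structure, but it too should rest on these uniform-in-time bounds rather than on a per-grid-point bound stitched together by a union that the confidence budget cannot afford.
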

\dta{} compares $\rho_S$ with the desired approximation guarantee $(1-1/e-\epsilon)$ and stops when $\rho_S \geq (1-1/e-\epsilon)$ (Line~12).

\textbf{Internal constant settings.} The two constants $i_0$ and $\delta'$ decides the first threshold $\frac{z^*}{2^{i_0}}$ and the success probability of \dta{}. We set $i_0$ to be the smallest integer such that $z^*/2^{i_0} \geq (2+2/3\epsilon)\log(1/\delta)/\epsilon^2$ which is the threshold on the optimal coverage \cite{Dagum00} if we want to estimate it with high precision:
\begin{align}
	\label{eq:i0}
	i_0 = \lc\log_2\frac{z^*\epsilon^2}{(2+2/3\epsilon)\log(1/\delta)}\rc,
\end{align}
\vspace{-0.3in}

\begin{align}
	\label{eq:delta'}
	\delta'=\frac{\delta}{2\log_2(z^*) \log_{1+\beta}(cT^*)},
\end{align}
where $T^*$ and $c$ are defined in Theorem~\ref{theo:tpa} with $\OPTk$ set to the smallest possible value $k$ (to make it computable).


\vspace{-0.08in}
\subsection{The Analysis}
\textbf{Adaptive Sampling.}
The \dta{} algorithm generates hyperedges adaptively which means that the test (Line~4 in Algorithm~\ref{alg:dynamic}) performed on the hyperedges generated so far determines whether it terminates right there or more hyperedges are necessary. This introduces dependency between the stopping time at which the algorithm terminates and the correctness of achieving an $(1-1/e-\epsilon)$-approximate solution. This stopping time is also a random variable depending on the obtained hyperedges. Thus, we cannot apply a typical concentration bound.

To cope with the random stopping time, we adopt concentration inequalities on weakly-dependent random variables from martingale theory. These inequalities transform the bounds holding at a deterministic time $t$ to bounds that hold at every point of time up to $t$ (see Lemma~\ref{lem:bounds}).


Based on Lemma~\ref{lem:lu_bounds}, we prove quality guarantee of \dta{}.
\begin{theorem}
	\label{lem:dta_quality}
	Given $0 < \epsilon < 1-1/e, 0 < \delta < 1$, \dta{} returns an $(1-1/e-\epsilon)$-approximate solution with a probability of at least $1-7\delta/3$.
\end{theorem}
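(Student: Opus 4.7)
The plan is a routine union bound over the failures of the sampling-based bounds, combined with the fallback guarantee of Theorem~\ref{theo:tpa} at the largest threshold $z^*$. The heart of the argument is that Lemma~\ref{lem:lu_bounds} has already done the hard work of handling the random stopping time, so once we accept it, the main theorem reduces to bookkeeping.

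First, I would enumerate the failure events. Let $A$ be the event that, if \dta{} is forced to run all the way to $z=z^*$, the solution $S_{z^*}$ produced by $\tpa(k,z^*)$ fails to be $(1-1/e-\epsilon)$-optimal; Theorem~\ref{theo:tpa} gives $\Pr[A]\leq\delta$. For each of the $i_0+1=O(\log_2 z^*)$ outer iterations $i$, let $B_i$ be the event that the upper bound $\textsf{UB}$ computed at line~14 of iteration $i$ is smaller than $\eOPTk$; Lemma~\ref{lem:lu_bounds} gives $\Pr[B_i]\leq\delta'$. For each iteration $i$ and each of the at most $\log_{1+\beta}(cT^*)$ checkpoint indices $t$ at which a lower bound is tested, let $C_{i,t}$ be the event that $\textsf{LB}>\Covw(S_z)$; again $\Pr[C_{i,t}]\leq\delta'$ by Lemma~\ref{lem:lu_bounds}.

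Next, I would argue that on the complement of the union of all these events, \dta{} must output a $(1-1/e-\epsilon)$-approximate set. If \dta{} stops early at some iteration when $\rho_S=\textsf{LB}/\textsf{UB}\geq 1-1/e-\epsilon$, chaining the validity of the two bounds yields
\[
\Covw(S_z)\;\geq\;\textsf{LB}\;\geq\;(1-1/e-\epsilon)\,\textsf{UB}\;\geq\;(1-1/e-\epsilon)\,\eOPTk.
\]
Otherwise \dta{} exhausts the grid and returns $S_{z^*}$, in which case $A^c$ provides the guarantee directly. Collecting failure probabilities with $\delta'$ as in Eq.~(\ref{eq:delta'}), $\Pr[\bigcup_i B_i]\leq\log_2(z^*)\,\delta' = \delta/(2\log_{1+\beta}(cT^*))$ and $\Pr[\bigcup_{i,t} C_{i,t}]\leq \log_2(z^*)\log_{1+\beta}(cT^*)\,\delta' = \delta/2$. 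Combined with $\Pr[A]\leq\delta$, the total failure probability is at most $\delta+\delta/2+\delta/(2\log_{1+\beta}(cT^*))\leq 7\delta/3$ in the parameter regime of interest, which is the claimed bound.

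The principal obstacle lies not in this union bound but in Lemma~\ref{lem:lu_bounds} itself: because \dta{} is adaptive, the index $N_j$ at which a test is performed is a data-dependent stopping time, so a pointwise Chernoff inequality at a \emph{fixed} $N$ does not transfer to the random time at which we actually test. The martingale-type concentration of Lemma~\ref{lem:bounds} is what bridges this gap: it upgrades a fixed-sample tail bound on the geometric grid $\{\lceil(1+\beta)^t\rceil\}$ of sample sizes into a bound that holds uniformly on that grid, at a $\log$-factor cost in confidence that is precisely absorbed by the definition of $\delta'$ in~(\ref{eq:delta'}). Once that uniform validity is in hand, the rest of the theorem is the simple union-bound and case analysis sketched above.
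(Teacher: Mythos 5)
Your high-level architecture is the same as the paper's: chain $\Covw(S_z)\ge\textsf{LB}\ge(1-1/e-\epsilon)\textsf{UB}\ge(1-1/e-\epsilon)\eOPTk$ when the early-stopping test fires, fall back to Theorem~\ref{theo:tpa} for $S_{z^*}$ otherwise, and union-bound the failures of $f_L(\cdot)$, $f_U(\cdot)$ and the fallback. However, there is a genuine gap in the union bound: you treat the number of quality-assessment checkpoints as deterministically at most $\log_{1+\beta}(cT^*)$ per outer iteration. That count is random. The checkpoints occur at $N_j=\lceil(1+\beta)^t\rceil$ for as long as \tpa$(k,z)$ keeps generating hyperedges, and $T_z$ has no deterministic cap of $cT^*$; Lemma~\ref{lem:lem1} only gives $T\le cT^*$ with probability $1-4/p$, and only for $z=z^*$. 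So your family of events $C_{i,t}$ is indexed by a random, a priori unbounded set, and summing $\delta'$ over ``at most $\log_{1+\beta}(cT^*)$'' of them is not justified as stated.

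The paper closes exactly this hole with an extra step you are missing: reusing the argument behind Eq.~(\ref{eq:u_bound}), it shows that \tpa{} with threshold $z=z^*/2^i$ acquires $T\ge cT^*$ hyperedges with probability at most $2(1/p)^{2^i}<2(1/p)^i$, sums the geometric series over the grid of thresholds to conclude that with probability at least $1-\delta/3$ the total number of evaluations of $f_L(\cdot)$ and $f_U(\cdot)$ is at most $\log_2(z^*)\log_{1+\beta}(cT^*)$, and only then applies the union bound with $\delta'$ from Eq.~(\ref{eq:delta'}). This $\delta/3$ is precisely where the theorem's constant comes from: $\delta$ for the accumulated $f_L/f_U$ failures, $\delta/3$ for controlling the random checkpoint count, and $\delta$ for the fallback $S_{z^*}$, giving $7\delta/3$. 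Your accounting lands near $2\delta$ only because of the unjustified deterministic cap, which is why you had to hedge with ``in the parameter regime of interest''; adding the missing tail bound on $T_z$ (and hence on the number of tests) makes the argument rigorous and recovers the stated constant.
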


The proof are presented in our appendix. Thus, to achieve the desired solution guarantee with a specific probability $1-\tilde\delta$, we set $\delta = 3\tilde\delta/7$ (used in our experiments). The space and time complexities of \dta{} are bounded as follows.

\begin{theorem}
	\label{thm:dta_complexity}
	\dta{} uses only $O(\epsilon^{-2}n \log n)$ space and runs in time $O(\epsilon^{-2}kn \log n)$.
\end{theorem}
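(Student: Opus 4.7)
The plan is to reduce the complexity bound for \dta{} to that of \tpa{} by exploiting (i) the geometric structure of the grid of thresholds tried in Algorithm~\ref{alg:dynamic}, and (ii) the per-threshold complexity already established in Lemma~\ref{lem:dta_complexity}. The central observation is that although \dta{} may invoke \tpa{} for every value in $\{z^*/2^{i_0}, z^*/2^{i_0-1}, \dots, z^*\}$ in the worst case, each invocation can be executed and torn down independently, so the peak memory is controlled by the single largest invocation while the total running time telescopes via a geometric sum whose maximum term is $z^* n$.

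For the space bound, I would first argue that at any instant the only persistent objects \dta{} keeps are the current candidate solution $S_c$ and the upper bound $\textsf{UB}$ (both of size $O(k)$ or $O(1)$), together with the state of the \emph{current} \tpa{}$(k,z)$ invocation. By Lemma~\ref{lem:dta_complexity}, that invocation uses $O(zn/k)$ space, and since the loop in Line~3 iterates over $z$ values bounded by $z^*$, the peak space is $O(z^* n/k)$. Substituting $z^* = O(\epsilon^{-2} k \log n)$ from Theorem~\ref{theo:tpa} yields the claimed $O(\epsilon^{-2} n \log n)$ space bound. The reduced sketch from each prior \tpa{} run can be discarded as soon as that run finishes, and the geometric checkpoints $N_j = \lceil (1+\beta)^t \rceil$ used for the assessment contribute only $O(\log T_{z^*}) = O(\log n + \log(1/\epsilon))$ additional storage, which is absorbed into the bound.

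For the time bound, Lemma~\ref{lem:dta_complexity} gives $O(zn)$ for a single \tpa{}$(k,z)$. Summing over the geometric grid,
\begin{align}
\sum_{i=0}^{i_0} O\!\left(\tfrac{z^*}{2^i}\, n\right) \;=\; O(z^* n) \;=\; O(\epsilon^{-2} k n \log n),
\end{align}
since the series is dominated by its last term. The quality assessment in Lines~7--13 processes each generated hyperedge once, performs an $O(1)$ membership update of $d_c$, and triggers the $O(1)$ evaluations of $f_L$ only at the $O(\log_{1+\beta} T_z)$ grid points $N_j = \lceil (1+\beta)^t \rceil$; likewise $f_U$ is computed once per outer iteration. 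Hence the assessment adds only a lower-order $O(T_{z^*} + \log^2 n)$ overhead, which is absorbed into $O(z^* n)$.

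The main technical care point, and the only place where one must be precise, is the accounting across iterations of the outer loop: one must check that the reduced sketch $\E_r$ belonging to the previous $z$ is actually deallocated before the next \tpa{} invocation begins (so that the space bound is a maximum rather than a sum), and that the bookkeeping for $S_c, \textsf{UB}$ and the running counter $d_c$ does not re-scan old hyperedges. Once these are handled, combining the grid analysis with Lemma~\ref{lem:dta_complexity} and the value of $z^*$ from Theorem~\ref{theo:tpa} yields the stated $O(\epsilon^{-2} n \log n)$ space and $O(\epsilon^{-2} k n \log n)$ time.
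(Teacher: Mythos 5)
Your proposal is correct and follows essentially the same route as the paper: both invoke Lemma~\ref{lem:dta_complexity} per threshold, bound peak space by the largest call with $z=z^*$ (the assessment storing no hyperedges), and bound total time by the geometric sum $\left(\frac{z^*}{2^{i_0}}+\cdots+z^*\right)n = O(z^*n)$ with $z^* = O(\epsilon^{-2}k\log n)$. The only cosmetic difference is that the paper simply charges the quality assessment $O(zn)$ time per run (since checking $S_c \cap E_j \neq \emptyset$ costs up to $O(|E_j|)$ per hyperedge rather than $O(1)$), which is absorbed the same way as your lower-order accounting.
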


\begin{proof}
Since the space complexity of \tpa{} with threshold $z$ is $O(\frac{zn}{k})$ and \dta{} is dominated by the last call to \tpa{} with the maximum threshold $z^*$ in term of memory usage, \dta{} uses $O(\frac{z^* n}{k}) = O(\epsilon^{-2} n \log n) = \tilde{O}(\epsilon^2 n)$ space. Note that the quality assessment does not incur extra space for storing hyperedges.

As shown by Lemma~\ref{lem:dta_complexity}, assuming requirement upper-bound function, each run of \tpa{} with threshold $z$ uses $O(z \cdot n)$ time. Note that the quality assessment uses the same amount of time $O(z \cdot n)$. Thus, the time complexity of \dta{} is $O((\frac{z^*}{2^{i_0}} + \frac{z^*}{2^{i_0-1}} + \dots + z^*) \cdot n) = O(z^* n) = O(\epsilon^{-2} n k \log n)$.
\end{proof}

\subsection[Section title sans citation]{Comparison to Recent Adaptive Schemes \cite{Nguyen163,Tang18}}
Adaptive sampling was used in \SSA{} and \DPIMA{} \cite{Nguyen163} \footnote{The original version of the papers \cite{Nguyen163} contains flaws \cite{Huang17} that have been addressed in the extended version \cite{dssa_fix}}. The main idea is an \emph{out-of-sample validation scheme} that uses two separate pools of samples: one for finding solution and one for validating the found solution. The algorithm returns the solution if passing the validation, otherwise, it doubles the number of hyperedges and starts over. 
	
\DPIMA{} also \emph{construct an upper bound on the optimal solution} using the candidate solutions and the fact that the candidate solution is $1-1/e$ optimal on the first pool of samples. The \emph{lower-bound on the candidate solution} is found via a concentration inequality using the second pool of samples \cite{dssa_fix}.

A recent work in \cite{Tang18} also follows out-of-sample-validation approach and improves the upper-bound on the optimal solution using the online bound from Leskovec \cite{Leskovec07} and improve the lower-bound on the candidate solution via a slightly stronger concentration inequality.


Although our proposed \dta{} algorithm shares the idea of adaptive sampling and empirical quality bound, there are critical difference in our approach
\begin{itemize}
	\item \textbf{Space efficiency.} \dta{} uses a reduced sketch to provide the same approximation gurantee in only $\tilde{O}(n)$ space, a factor $k$ reduction comparing to \DPIMA{}, \SSA{}, and other methods that  store all samples before invoking the greedy algorithm for max-coverage.
	\item \textbf{Stronger assessment bound.} \dta{} uses a tighter quality bound (Lemma~\ref{lem:lu_bounds}) and assessment methods that further reduces the number of samples.  In this work, we provide new data structure to efficiently update the online bound from \cite{Leskovec07,Tang18}, termed top-$k$ function. We also derive other bounds including requirement functions and dual-fitting bounds. 
\end{itemize}
\vspace{-0.12in}

\vspace{-0.08in}
\section{Extensions}
\label{sec:extensions}
We discuss different upper-bound functions for \tpa{} algorithm and extend it for the budgeted and weighted versions of the $k$-cover problem and for a streaming data model.

\subsection{Other Upper-bound Functions}
\label{subsec:ub_fun}
We present 3 other upper-bound functions that satisfy the requirements for \tpa{} and analyze their complexities.

\vspace{0.03in}

\textbf{Top-$k$ function $f(S,d_S,\E_r) = d_S + \sum_{v \in top-k} \Cov(v,\E_r)$.}
The $\topk$ is the set of $k$ nodes with largest coverages over the reduced sketch $\mathcal{E}_r$.  Similarly to the argument for the requirement function, this \topk{} upper bound satisfies $\OPT \leq d_S + \sum_{v \in \topk{}} \Cov(v,\E_r) \leq d_S + k \cdot \max_{v\notin S} \Cov(v,\E_r)$. This upper-bound function is a special case of the online bound in \cite{Leskovec07} that is applicable for any monotone submodular function.

Unlike the requirement function, that can be updated within linear time during insertion/deletion of hyperedges, a naive approach for updating the Top-$k$ upper-bound can incur an $O(k)$ cost per insertion/deletion of hyperedges, making the \tpa{} inefficient.

\underline{Fast Top-$k$ function update using stepwise-heap:} 
To compute $\sum_{v \in \topk{}} \Cov(v,\E_r)$ quickly and the node with maximum coverage (Lines~7), we use a data structure, called \textit{Stepwise-Heap}, that maintains an increasing order of nodes according the their hyperdegrees. When adding/removing a hyperedge $E_j$, the stepwise-heap can be updated in a linear times in terms of $|E_j|$. This data structure is \emph{essential to keep the time complexity for the \tpa{} algorithm low}, i.e., $O(\epsilon^{-2} n k \log n)$.

Specifically, our stepwise-heap stores nodes in $V$ using an array. The nodes are kept in a \emph{non-decreasing order} of their (hyper) degrees. All nodes with the same (hyper) degree $i$ belong to the same ``bucket'' $i$ and the starting position of the bucket $i$ is tracked by a variable $b_i$, for $i=0..z$. Initially, all the nodes are in bucket~$0$.

\emph{Adding/removing a hyperedge $E_j$}: When a hyperedge is added (Line~6), the degree of a node $u$ node in the hyperedge  will increase  from $d(u)$ to $d(u)+1$ and will be moved from the bucket $d(u)$ to $d(u)+1$.  In fact, we only need to swap $u$ with the node at the end of bucket $d(u)$ and decrease the value of $b(d(u)+1)$ by one.  This takes $O(1)$ time for each node in $E_j$ and $O(|E_j|)$ in total. We can remove an hyperedge in a similar way, swapping each node $u \in E_j$ with the node at the beginnning of bucket $d(u)$ and increase $b(d(u))$ by one.

\emph{Updating the value of Top-$k$ function}:  Initially,  top-$k$ sum is zero. We keep track the bucket and the position of the top-$k$ during addition/removal of hyperedges. During the update (either adding or removing),  each node $u \in E_j$ will change the value of the sum of top-$k$ by at most one, and can be kept tracked in $O(1)$. Thus, the sum of top-$k$ can be kept tracked in $O(|E_j|)$ during the adding/removing of $E_j$.
\vspace{0.03in}

\textbf{Dual Fitting Upper-bound Functions.} We can find the exact bound that matches the optimal value using Integer Linear Programming (\textsf{ILP}), unfortunately the time complexity would be exponential and cannot solve even small problems. We devise a dual fitting bound that bases on the dual form of the Linear Programming relaxation (LP) of the problem. The LP relaxation and dual forms are as follows.

\noindent \begin{minipage}{0.48\linewidth} \small
	\vspace{0.12in}
	\textit{Primal form (LP)}
	\vspace{-0.01in}
	\begin{align}
	& \max \sum_{i = 1}^{m} y_i \text{ \ \ \ s.t. } \nonumber\\
	& y_i - \sum_{v_j \in E_i} x_j \leq 0 \text{ \ } (\forall i = 1..m) \nonumber \\
	& \sum_{j = 1}^{n} x_j \leq k, \text{ } x_j \geq 0 \text{ } (\forall j=1..n) \nonumber\\
	& 0 \leq y_i \leq 1, \text{ } (\forall i = 1..m) \nonumber
	\end{align}
	\vspace{-0.1in}
\end{minipage}
\begin{minipage}{0.5\linewidth} \small
	\vspace{-0.1in}
	\textit{Dual form (dual fitting - \textsf{DF})}
	\vspace{-0.01in}
	\begin{align}
	& \min \Big (kt + \sum_{i = 1}^{m}z_i \Big ) \text{ \ \ \ s.t. } \nonumber \\
	& t - \sum_{E_i \ni v_j} f_i \geq 0 \text{ \ } (\forall j = 1..n) \nonumber\\
	& z_i + f_i \geq 1 \text{ \ } (\forall i = 1..m) \nonumber \\
	& t, z_i, f_i \geq 0 \text{ \ } (\forall i = 1..m) \nonumber
	\end{align}
	\vspace{-0.1in}
\end{minipage}

Any feasible solution of the dual program provides an upper-bound due to the weak duality theorem and the optimal bound of this type is obtained by an exact LP solver. We call this bound \textsf{LP} which has high time complexity \cite{Borgwardt87}.

We develop a fast algorithm, namely \textsf{DF} (dual fitting), that does not solve the dual program to optimality but delivers good practical performance. The algorithm starts with the setting of variables: for those hyperedges $E_i$ covered by $S$, set $z_i = 1, f_i = 0$; for the rest of hyperedges, $z_i = 0, f_i = 1$. It consists in two operations: \textit{Increase-$t$} and \textit{Decrease-$t$} that adjust the value of $t$ and $f_i, z_i$ to a feasible solution while trying to reduce the value of objective function $\left (kt + \sum_{i = 1}^{m}z_i \right )$.

\textbf{2-Dimensional Relax Dual Fitting.} The dual form can be simplified based on an observation that for the optimal solution, all the constraints $z_i + f_i \geq 1, \forall i = 1..m$ must be tight. That is because if there is an untight constraint $z_i + f_i > 1$, we can decrease the value of $z_i$ to make it tight resulting in better objective value. Thus, we can eliminate the variables $f_i$ completely and replace $f_i$ by $1-z_i$ with a range constraint that $0 \leq z_i \leq 1$.

We relax the resulted dual program to reduce the size all the way down to a 2-dimensional problem
as follows: \textit{Assign value $\alpha \in [0,1]$ to variables $z_i$ that their corresponding hyperedges $E_i$ are covered by $S$ and assign 0 for the others}. Hence, we obtain the following 2-dimensional relaxed dual program with two variables, $\alpha$ and $t$.

\textit{2-Dimensional dual fitting (\textsf{DF-2D})}
\begin{align}
	& \text{ } \min \left( kt + \alpha d_S \right )\\
	\text{s.t. } & t - (D_i - \alpha S_i) \geq 0 \text{ \ } (\forall i = 1..n)\\
	& t \geq 0, 0 \leq \alpha \leq 1 \text{ \ } \nonumber
\end{align}

Note that by setting $\alpha = 1$, the resulted objective value comes back to the requirement function. Thus, the optimal $\alpha$ at least gives a tighter upper-bound than the first requirement function.

The 2D dual program can be solved efficiently in linear time $O(n)$ and constant time to update if a new constraint is added into the program \cite{Megiddo84}. 
Similarly to the two other upper-bound functions, \textsf{DF-2D} requires only $O(\epsilon^{-2} n \log n)$ space and constant time to update in expectation \cite{Megiddo84}.

\subsection{Budgeted $k$-cover}

 \begin{table*}
 	\centering
 	\caption{Comparison of reduced sketch (\rs{}) and full sketch (\fs{}) on different problems for $k = 100$ and $\epsilon = 0.05$.}
 	\vspace{-0.1in}
 	\begin{tabular}{l r r r rrr rr rr}
 		\toprule
 		& & & & \multicolumn{3}{c}{Sketch Size} & \multicolumn{2}{c}{Time (s)} &
 		\multicolumn{2}{c}{Quality (\%)}\\
 		\cmidrule(lr){5-7} \cmidrule(lr){8-9} \cmidrule(lr){10-11}
 		\textbf{Problems} &\textbf{Data} & \#nodes & \#edges & \rs{} & \fs{}& Reduction &\rs{} & \fs{} & \rs{} & \fs{}\\
 		\midrule
  		Dominating Set& DBLP & 655K & 2M & \textbf{21M} & 59M& 2.8x &\textbf{3} & \textbf{3} &{22} & {22}\\
 		 & Orkut & 3M & 234M &\textbf{240M} & 18.3G& 76x & \textbf{47} &1048 & 87 &{89}\\
 		 & Twiter & 41.7M & 1.5B & \textbf{53M} & {7.6G}& 144x &\textbf{27} &457 & 73 & {74}\\
 		 \midrule
 		 Influence Maximiation & DBLP & 655K & 2M & \textbf{54M} & 152M & 2.8x&18.7 &{\textbf{16}} & 0.31 & {0.32}\\
 		 & Orkut & 3M & 234M &\textbf{20.8M} &858M & 41x&\textbf{152} &1723 & 69 & 68\\
 		 & Twiter & 41.7M & 1.5B &\textbf{2.7G} &61G& 22x &9614 & \textbf{9573} &24 & 25\\
 		 \midrule
 		 Lanmark Selection& DBLP & 655K & 2M &\textbf{1M} &3.9M & 3.9x&\textbf{78} & 90 &22 & 22\\
 		 & Orkut & 3M & 234M &\textbf{496K} &4.3M & 8.7x &110 & \textbf{108} &62 &67\\
 		 & Twiter & 41.7M & 1.5B &\textbf{6.8M} &58M & 8.5x&\textbf{1216} &1438 &58 &60\\
 		\bottomrule
 	\end{tabular}
 	\label{tab:rtime}
 	\vspace{-0.1in}
 \end{table*}
The budgeted $k$-cover imposes a cost $c_v$ of selecting node $v$ and has a budget $L$ for the total cost of all the selected nodes. That means we can only select a set of nodes whose total cost is at most $L$. Without loss of generality, assume $L \geq c_v, \forall v \in V$, otherwise, we can just ignore those nodes with cost more than $L$. We adapt \tpa{} algorithm for this budgeted version to return an $(1-1/\sqrt{e})$-approximate solution as follows:
\begin{itemize}
	\item Change the second requirement (R2) on the upper-bound function from $f(S,\E) \leq k \cdot \max_{v\notin S} \Cov(v,\E)$ to $f(S,\E) \leq L \cdot \max_{v \notin S \cup Q} \frac{\Cov(v,\E)}{c_v}$ where $Q$ is the set of nodes that have maximum ratio of coverage over cost but are not added to $S$ due to their costs exceed the remaining budget $L - c_S$. To satisfy this requirement, the similar three possible upper-bound functions can be applied: 
	\begin{itemize}
		\item The requirement function which matches the second condition above.
		\item The \textsf{top-k} function when nodes are ranked according to the ratio of coverage to cost and $k$ is the first value such that $\sum_{v \in \topk{}} c_v \geq L$.
		\item A dual bound is derived similarly to the case of unweighted by solving a linear program in 2D.
	\end{itemize}
	\item A node $v$ is added to $S$ when: 1) the condition in Line~5 of Alg.~\ref{alg:tpa} is met; 2) $v$ is the node with maximum ratio of coverage over its cost; and 3) the cost $c_v$ does not exceed the remaining budget $L - d_S$. The algorithm terminates when all the remaining unselected nodes have costs greater than the leftover budget $L - c_S$. During acquiring the hyperedges, it also maintains the node $v_{\max}$ with maximum hyperdegree $d_{\max}$ among all the nodes in $V$ with respect to all the acquired hyperedges. When terminated, the algorithm returns the current solution set $S$ if $d_S \geq d_{\max}$ or $v_{\max}$ otherwise.
	\item The threshold $z^*$ at which \tpa{} returns an $(1-1/\sqrt{e}-\epsilon)$-approximate solution is modified to
	\vspace{-0.1in}
	\begin{equation}
	    	z^* = O(\epsilon^{-2} k_m \log n),
	\end{equation}
	\vspace{-0.03in}
	\noindent where $k_m$ is size of the largest set whose cost is at most $L$. The value of $z^*$ is set by the observation that in the worst case, we need to bound the bad events in (\ref{eq:bound_s}) for all sets of size at most $k_m$. In the uniform case with cost 1 for every node, we consider all sets of size at most $k$.
\end{itemize}
The approximation guarantee and complexity of our adapted \tpa{} are shown as follows.
\begin{theorem}
	\label{lem:budget}
	The \dta{} algorithm adapted for the budgeted $k$-cover problem gives an $(1-e^{-0.5}-\epsilon)$-approximate solution and uses $O(\frac{c_{\max}}{c_{\min}} n \log n)$ memory space where $c_{\max} \text{ and } c_{\min}$ are max and min costs of selecting nodes.
\end{theorem}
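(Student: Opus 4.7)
The plan is to mirror the unweighted analysis (Lemmas~\ref{lem:exist_f}--\ref{lem:anyz} and Theorem~\ref{theo:tpa}), replacing the ``$1/k$ shrinkage per step'' argument with the standard budgeted-greedy ``$c_u/L$ shrinkage per step,'' and then appealing to the classical max-of-greedy-or-singleton trick to turn $(1-1/e)$ into $(1-e^{-1/2})$.

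First, I would verify that the modified requirement function $f_r(S,d_S,\E_r)=d_S+L\cdot\max_{v\notin S\cup Q}\Cov(v,\E_r)/c_v$ is a valid upper bound on $\OPT$ at every step. As in Lemma~\ref{lem:exist_f}, split $\E_f=\E_r\cup(\E_f\setminus\E_r)$; the second part contributes at most $d_S$ to any solution, while monotonicity and submodularity together with the budget $L$ give $\Cov(S^*,\E_r)\le L\cdot\max_{v\notin S\cup Q}\Cov(v,\E_r)/c_v$, since a cost-weighted LP relaxation of max-coverage on $\E_r$ is bounded by $L$ times the best coverage-to-cost ratio. Next I would establish the analog of Lemma~\ref{lem:anyz}: when node $u_i$ of cost $c_{u_i}$ is added, the requirement condition forces
\begin{align}
d_i^+-d_i^-\ge\frac{c_{u_i}}{L}\,(z-d_i^-),
\end{align}
and iterating yields $z-d_k^+\le z\prod_i(1-c_{u_i}/L)\le z\,e^{-\sum_i c_{u_i}/L}$. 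If the greedy selects nodes whose total cost reaches at least $L/2$ before termination, the return set $S$ satisfies $d_S\ge(1-e^{-1/2})z$; otherwise the terminating node $v_{\max}$ (or a node that could not be added because its cost exceeded the leftover budget) alone covers at least half of what $S$ can contribute, and a standard case split shows $\max(d_S,d_{v_{\max}})\ge(1-e^{-1/2})z$. This is the main technical step, and the place I expect to be careful about handling nodes that are skipped because their cost overshoots the remaining budget.

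Next I would carry through the stochastic half exactly as in Theorem~\ref{theo:tpa}. The concentration inequalities in Lemma~\ref{lem:lem1} and Lemma~\ref{lem:lem2} are proved by a union bound over candidate solutions; here the union ranges over all subsets of $V$ of size at most $k_m$, so $\binom{n}{k}$ becomes $\sum_{j\le k_m}\binom{n}{j}\le n^{k_m}$, and the required threshold becomes $z^*=O(\epsilon^{-2}k_m\log n)$ as claimed. The rest of the \dta{} wrapper — adaptive doubling of $z$ on a geometric grid and the martingale-based validation via $f_L,f_U$ (Lemma~\ref{lem:lu_bounds}) — is agnostic to the submodular objective, so it carries over verbatim and contributes the additive $\epsilon$ to give the $(1-e^{-1/2}-\epsilon)$ guarantee.

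Finally, for the space bound I would reuse the argument of Lemma~\ref{lem:dta_complexity}. While the outer loop is active, the requirement check guarantees $\Cov(v,\E_r)\le c_v(z-d_S)/L+1\le z c_{\max}/L+1$ for every $v$. Summing incidences over the $n$ nodes bounds the sketch size by $O(n z c_{\max}/L)$. Plugging $z=z^*=O(\epsilon^{-2}k_m\log n)$ and using $k_m\le L/c_{\min}$ gives $O(\epsilon^{-2}(c_{\max}/c_{\min})\,n\log n)$, and since \dta{}'s memory is dominated by the final call to \tpa{} with threshold $z^*$, the same bound controls the whole algorithm, completing the proof.
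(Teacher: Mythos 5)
Your proposal follows essentially the same route as the paper's proof: the modified ratio requirement function as an upper bound, the per-step shrinkage $d_i^+-d_i^-\ge \frac{c_i}{L}\,(z-d_i^-)$ iterated to a product bound, the $L/2$ cost split with the best-singleton fallback (implemented in the paper's modified algorithm via the tracked node $v_{\max}$), the union bound over all sets of size at most $k_m$ giving $z^*=O(\epsilon^{-2}k_m\log n)$, and the identical space argument via $\Cov(v,\E_r)\le O\big(\epsilon^{-2}\frac{c_{\max}}{c_{\min}}\log n\big)$ summed over $n$ nodes. The one step you defer to ``a standard case split'' --- when the selected cost stays below $L/2$ --- is exactly what the paper spells out: every unselected node then costs more than the leftover budget $\ge L/2$, so the sketch optimum contains at most one node outside $S$, and the case analysis is on whether any single node covers $z/2$ (rather than the ``covers half of what $S$ contributes'' heuristic, which by itself would only yield a $\frac{1}{2}(1-1/e)$-type bound).
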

\subsubsection{Approximation guarantee in Theorem~\ref{lem:budget}}
Let $r$ be the number of selection iterations at which the stopping condition in Line~5 of Alg.~\ref{alg:tpa} is met until the first set from the optimal solution is considered but not added to $S$ due to causing exceeding the remaining budget $L - c_S$. Let $l$ be the number of sets added to $S$ in the first $r$ iterations. Note that $r \geq 2$ and $l \geq 1$ since every node has cost smaller than $L$. Let $c_i$ be the cost of the $i^{th}$ node added to $S$.

We prove the lemma in two steps:

$\bullet$ \textit{Prove $d^+_i \geq \left [ 1-\prod_{j = 1}^{i} \left( 1-\frac{c_j}{L} \right) \right ] \cdot z$.} From the selection criterion of nodes into $S$, we have that $\forall i = 1, \dots, l+1$,
\vspace{-0.05in}
\begin{align}
& d^-_i + f(S,\E) \geq z \nonumber \\
\Rightarrow \text{ } & d^-_i + L \max_{v \notin S \cup Q} \frac{\Cov(v,\E)}{c_v} \geq z \nonumber \\
\Rightarrow \text{ } & d^-_i + \frac{L}{c_i}(d^+_i - d^-_i) \geq z \nonumber \\
\Rightarrow \text{ } & d^+_i - d^-_i \geq \frac{c_i}{L}(z - d^-_i)
\end{align}
\vspace{-0.15in}

\noindent We prove the statement using induction. First, with $i = 1$, it obviously hold that $d^+_1 \geq \frac{c_1}{L} z$. For any $1 < i \leq l+1$, we have
\vspace{-0.05in}
\begin{align}
d^+_i & = d^-_i + (d^+_i - d^-_i)\geq d^-_i + \frac{c_i}{L}(z - d^-_i) \nonumber \\
&\geq \left( 1-\frac{c_i}{L} \right) d^-_i + \frac{c_i}{L}z \geq \left( 1-\frac{c_i}{L} \right) d^+_{i-1} + \frac{c_i}{L}z \nonumber \\
&\geq \left( 1-\frac{c_i}{L} \right) \left (1-\prod_{j = 1}^{i-1} \left (1-\frac{c_j}{L} \right ) \right )\cdot z + \frac{c_i}{L} \cdot z \nonumber \\
\label{eq:d_i}
&\geq \left ( 1-\prod_{j = 1}^{i} \left( 1-\frac{c_j}{L} \right) \right ) \cdot z
\end{align}
\vspace{-0.05in}

$\bullet$ \textit{Prove $\Cov^{(T)}(S) \geq (1-1/\sqrt{e}) \OPT_k$.} If there exists a node $v$ covering at least $\frac{z}{2}$, then our modified budgeted \tpa{} algorithm will guarantee to return a set with coverage at least $\frac{z}{2}$ or have approximation ratio of at least $\frac{1}{2}$. Otherwise, let consider the case that none of the nodes have coverage more than $\frac{z}{2}$. Under this consideration, if the returned set $S$ has $c_S \leq \frac{L}{2}$, then all other nodes $V \backslash S$ have costs at least $\frac{L}{2}$ (otherwise, it can be added to $S$). Thus, the optimal solution $S^*_z$ on acquired hyperedges contains at most $1$ node in $V \backslash S$. Additionally, since every node has coverage at most $\frac{z}{2}$, we find that $\Cov^{(T)}(S \cap S^*_z) \geq \frac{z}{2}$ implying $d_S \geq \frac{z}{2}$. On the other hand, if $c_S \geq \frac{L}{2}$ or $L \leq 2 c_S$, from inequality (\ref{eq:d_i}), we have,
\vspace{-0.05in}
\begin{align}
d^+_l & \geq \left ( 1-\prod_{j = 1}^{l} \left( 1-\frac{c_j}{L} \right) \right ) \cdot z \nonumber \\
& \geq \left( 1-\left( 1-\frac{1}{2l} \right)^l \right) \cdot z \geq \left( 1-\frac{1}{\sqrt{e}} \right) \cdot z.
\end{align}

\subsubsection{Space complexity in Theorem~\ref{lem:budget}}
By setting $z$ to the modified value of $z^*$, similarly to proof of Theorem~\ref{theo:tpa}, it provides an $(1-e^{-0.5} - \epsilon)$-approximation ratio and uses $O(\frac{c_{\max}}{c_{\min}} n \log n)$-memory space due to the selection criterion,
\vspace{-0.05in}
\begin{align}
	& d_S + f(S, d_S, \E_r) \leq z^* \nonumber \\
	\Rightarrow \text{ } & \frac{\Cov(v,\E_r) \cdot L}{c_v} \leq O(\epsilon^{-2}k_m \log n) \nonumber \\
	\Rightarrow \text{ } & \Cov(v,\E_r) \leq O\left (\epsilon^{-2}\frac{c_v \cdot k_m \log n}{L} \right) \leq O\left(\epsilon^{-2} \frac{c_{\max}}{c_{\min}}  \log n \right)
\end{align}
\vspace{-0.15in}

The last inequality is based on the observations that $c_v \leq c_{\max}$ and $L \geq k_m \cdot c_{\min}$. Thus, the maximum total space is $O\left( \frac{c_{\max}}{c_{\min}} \epsilon^{-2} n \log n \right)$.
The same results hold for \dta{} algorithm on the updated $z^*$.
\vspace{-0.1in}
\section{Experiments}
\label{sec:exps}
\label{subsec:effectinvess}
\begin{figure*}[!ht]
	\label{fig:ds}
    \subfloat[Sketch size - DLBP]{
		\includegraphics[width=0.3\linewidth]{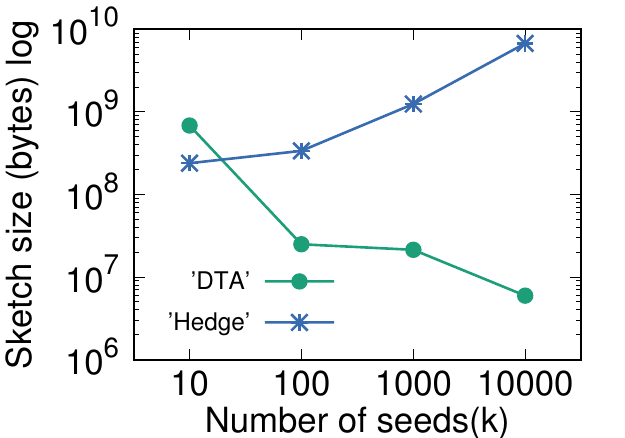}
	}
	\subfloat[Running time - DLBP]{
		\includegraphics[width=0.3\linewidth]{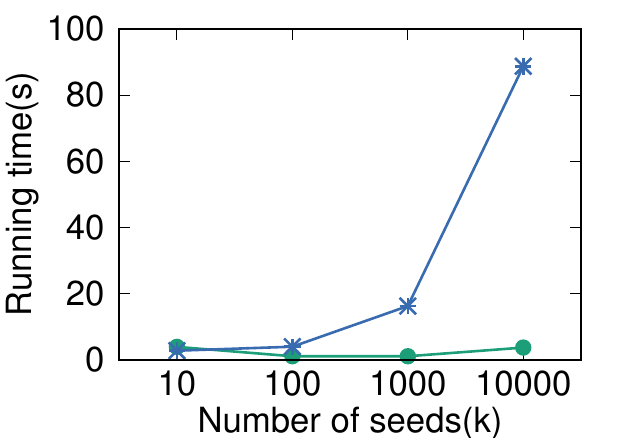}
	}
	\subfloat[Quality - DLBP ]{
		\includegraphics[width=0.3\linewidth]{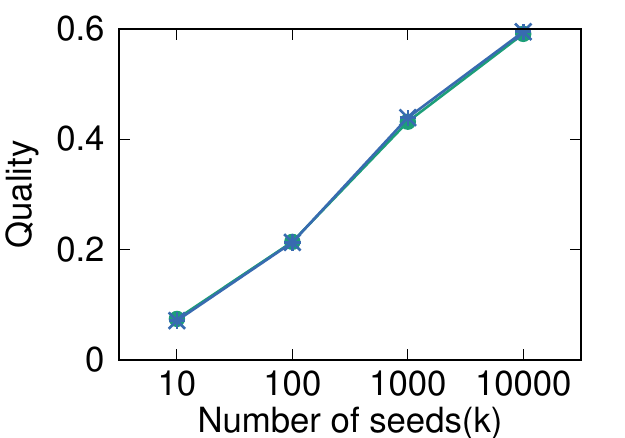}
	}
	
\vspace{-0.1in}
	\subfloat[Sketch size - Orkut]{
		\includegraphics[width=0.3\linewidth]{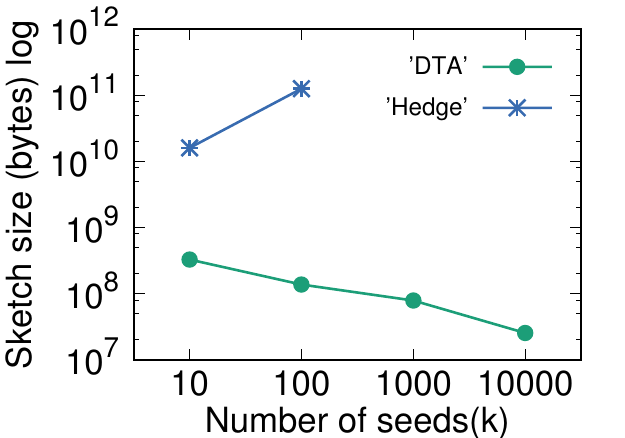}
	}
	\subfloat[Running time - Orkut]{
		\includegraphics[width=0.3\linewidth]{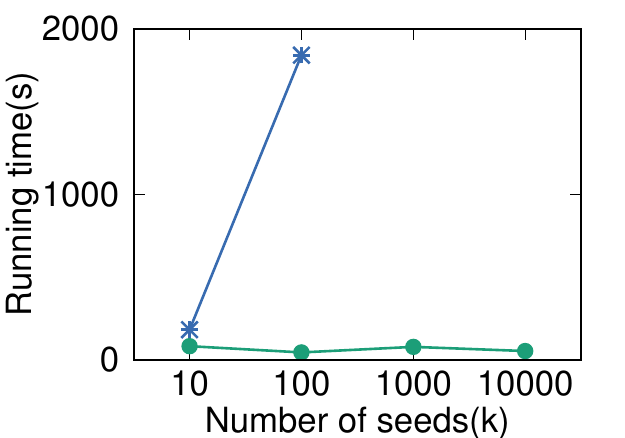}
	}
	\subfloat[Quality - Orkut]{
		\includegraphics[width=0.3\linewidth]{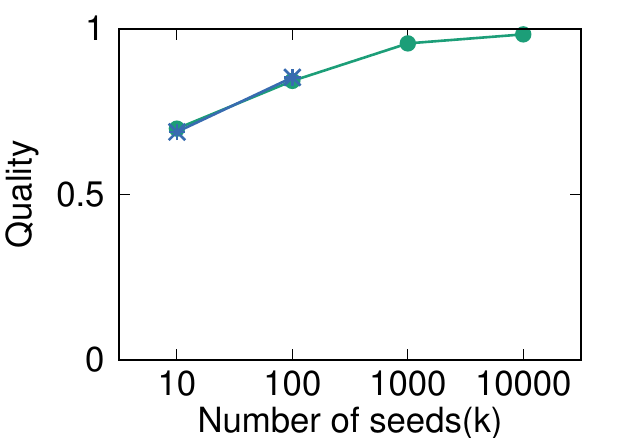}
	}
\vspace{-0.1in}
	\subfloat[Sketch size - Twitter]{
		\includegraphics[width=0.3\linewidth]{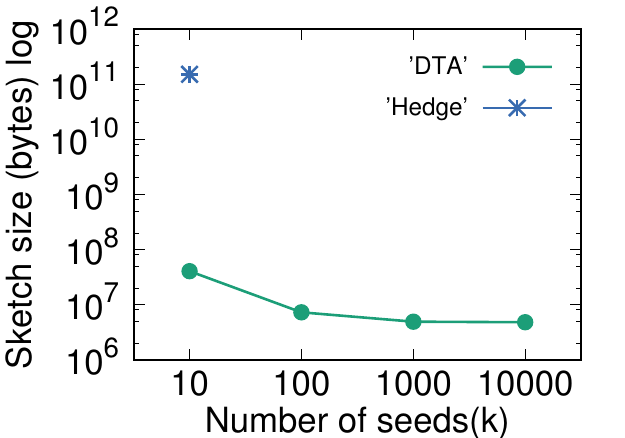}
	}
	\subfloat[Running time - Twitter ]{
		\includegraphics[width=0.3\linewidth]{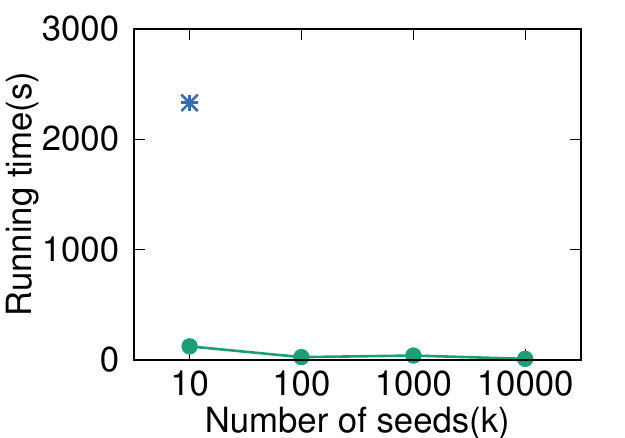}
	}
	\subfloat[Quality - Twitter]{
		\includegraphics[width=0.3\linewidth]{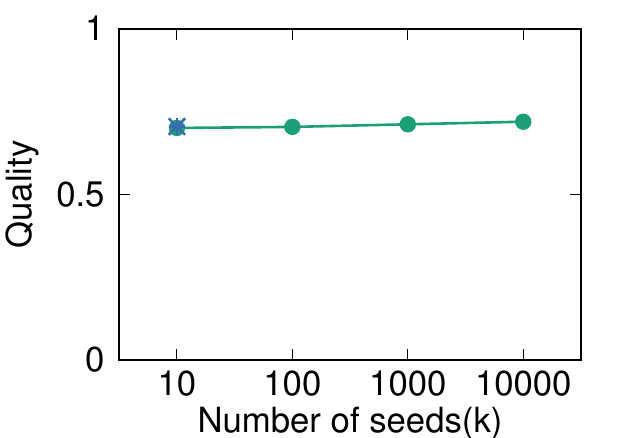}
	}
	\vspace{-0.1in}
	\caption{Performance for $k$-Dominating Set with varying $k$ and $\epsilon = 0.1$. \dta{} reduces sketch size and time up by factors as large as 1000+ while providing comparable solution quality. }
 	\label{fig:ds}
 	\vspace{-0.15in}
\end{figure*}
We conduct extensive experiments to assess the performance of our proposed approaches on three representative applications: $k$-dominating set, influence maximization, and land mark selection. The definitions of the applications can be found in section \ref{sec:apps}.

First, we assess the effectiveness of our reduced sketch in Subsec. \ref{subsec:effectinvess} following by the comparison of our algorithms with the state-of-the-art approaches for each application in Subsec. 6.2 to 6.4

\noindent \textbf{Metrics.} The comparisons are done using the following three metrics.
\begin{itemize}
    \item \textbf{Running time.} We measure the total running time of each algorithm in seconds.
    
    \item \textbf{Sketch size.} We measure the size of the sketches used by the compared algorithms. The sketch size is measured as \emph{the maximum number of elements} in the sketches (e.g. the total size of the incident lists in the incidence graph) through the run-time, multiplied by the size of the data type to store the elements (8 bytes for integers). 
    
    \item \textbf{Quality.} We measure the quality of a solution $S$ as the (expected) coverage of the solution. For $k$-Dominating set, the coverage is the number of 2-hop neighbors covered by $S$. For Influence maximization \cite{Kempe03}, we measure the quality as the expected fraction of influenced nodes. For Landmark selection, the quality of $S$ is measured as the expected fraction of node pairs with shortest path passing through a node in $S$. The expected values are estimated within a relative error of $1\%$ and a confident $1-1/n$ following the approach in \cite{Nguyen172}.
\end{itemize}

\noindent \textbf{Datasets.}
We evaluate the algorithms for the 3 applications on three popular real-world networks from \cite{snap,Kwak10} with sizes from millions to 1.5 billion edges (see Table \ref{tab:rtime})

\vspace{0.03in}

We obtain the implementations of all algorithms by the authors. \DPIMA{} is the latest version after being fixed \cite{dssa_fix}. In comparison with other algorithms, \dta{} uses the requirement function and later, we study different upper-bound functions and their effects to the performance of \tpa{}/\dta{}.


\vspace{0.03in}


\noindent \textbf{Experimental Environment.}
All the experiments are run on a cluster of 16 Linux machines, each of which has a 2.30Ghz Intel(R) Xeon(R) CPU E5-2650 v3 40 core processor and 256GB of RAM. We limit the running time of each algorithm to be within 12 hours and memory to 200GB.


\vspace{-0.12in}
\subsection{Full Sketch vs. Reduced sketch}
 \label{subsec:effectinvess}
We compare our reduced sketch $\E_r$ with the full sketch, in which all hyperedges are stored before invoking the greedy algorithm to solve the max-$k$-coverage.

We first run \dta{} with the basic requirement function and $\epsilon = 0.05, \delta = 1/n, k=100$ and obtain the peak total size of the reduced and full sketch. 
Then on the same set of hyperedges generated by \dta{} without removing any hyperedges, termed full sketch $E_f$, we run the  greedy algorithm \cite{Nemhauser81} on $\E_f$. 

The time, sketch size, and quality for the two sketching methods are shown in Table~\ref{tab:rtime}. Across all three applications, the reduced sketch provide solutions with comparable solution quality and running time. However, the reduced sketch \emph{reduce the space up to several orders of magnitude}. For example, on Twitter, the reduction factors for the three applications are 144x, 22x, and 8.5x, respectively.

\subsection{Results on $k$-Dominating Set}

We compare \dta{} with an adaptation of \hedge{} \cite{Mahmoody16} for the $k$-dominating set problem. The extensions are straightforward by replacing the sampling procedure and keep all the other settings and assumptions to provide the $(1-1/e-\epsilon)$ approximation guarantee.
\begin{figure*}[!ht]
	\subfloat[Sketch size]{
		\includegraphics[width=0.3\linewidth]{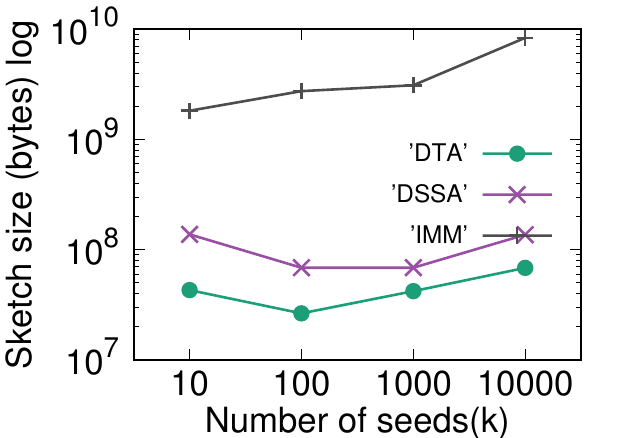}
	}
	\subfloat[Running time]{
		\includegraphics[width=0.3\linewidth]{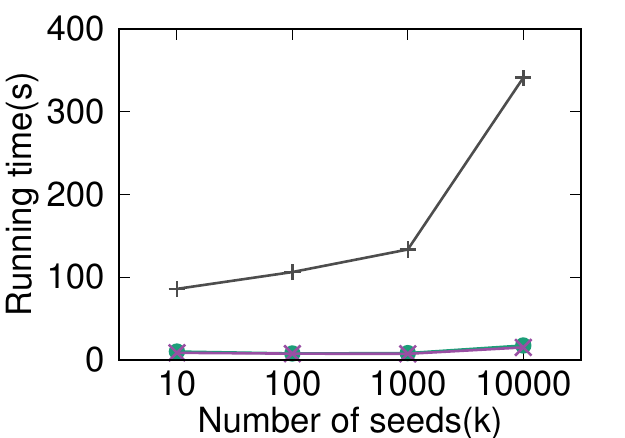}
	}
	\subfloat[Quality ]{
		\includegraphics[width=0.3\linewidth]{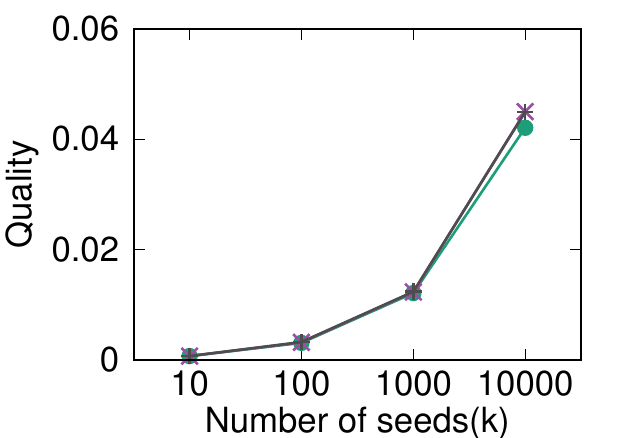}
	}
	\vspace{-0.1in}
	\caption{Influence Maximization experiments across different $k$ on DBLP network with $\epsilon = 0.1$.}
	\label{fig:im}
	\vspace{-0.2in}
\end{figure*}
\begin{figure*}[!ht]
	\subfloat[Sketch size ]{
		\includegraphics[width=0.3\linewidth]{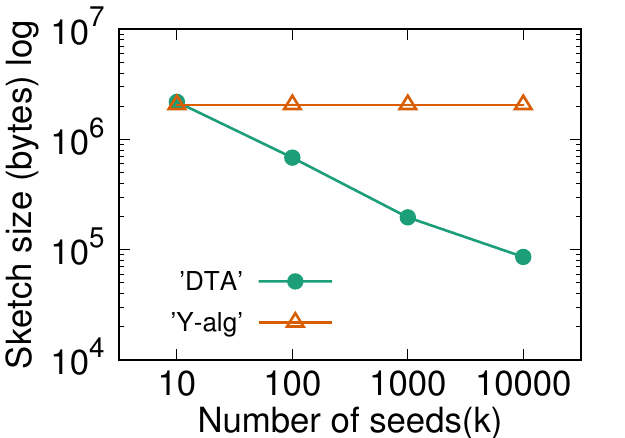}
	}
	\subfloat[Running time]{
		\includegraphics[width=0.3\linewidth]{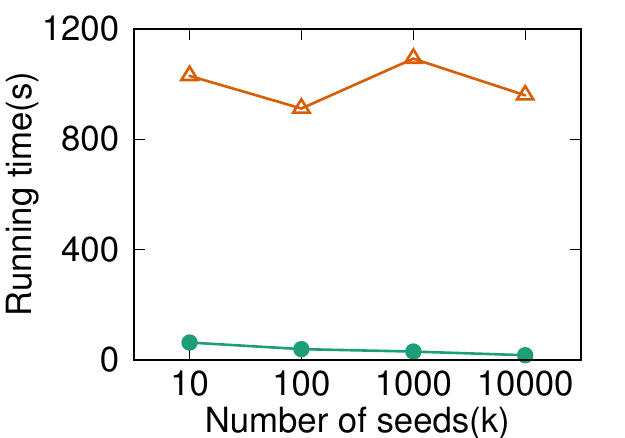}
	}
	\subfloat[Quality]{
		\includegraphics[width=0.3\linewidth]{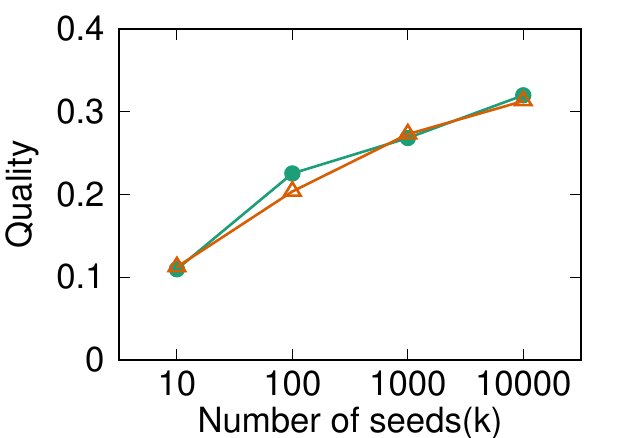}
	}
	\vspace{-0.1in}
	\caption{Landmark Selection experiments across different $k$ on DBLP network with $\epsilon = 0.1$.}
	\label{fig:cc}
\end{figure*}
The experiment results are shown in Fig.~\ref{fig:ds}.  The proposed \dta{} is several orders of magnitude faster and using several orders less space, while providing comparable solution quality. For example, on Twitter, \dta{} uses 3735x less space and 19x faster for $k=10$. \hedge{} cannot complete within the time/space limit for other values of $k$.

\subsection{Results on Influence Maximization}
We compare \dta{} with \IMM{} \cite{Tang15} and \DPIMA{} \cite{Nguyen163} \RIS{}-based algorithms that provide $(1-1/e-\epsilon)$ approximation guarantee.

The algorithms are compared on the popular independent cascade (IC) model \cite{Kempe03} and the probabilities on the edges are selected at random from $\{0.1, 0.01, 0.001\}$  following \cite{Cohen14, Chen10, Jung12}. This settings of weights has been shown to be challenging in recent studies \cite{Tang17,Arora17}. 



The results are shown in Fig.~\ref{fig:im}. Unfortunately, none of the two algorithms  \IMM{} \cite{Tang15} and \DPIMA{} \cite{Nguyen16} can complete within the time/space limit on Orkut and Twitter. Thus, only the results for DBLP are available. 

On DBLP, \dta{} use 122x less space than \IMM{} and 2.6x less space than \DPIMA{}. All algorithms produce solutions with similar quality. We note that on bigger network the reduction in space will be much larger. Unfortunately, none of the two algorithms can complete in the larger networks.

\subsection{Results on Landmark Selection}

For landmark selection, we compare \dta{} with \yalg{} \cite{Yoshida14} state-of-the-art algorithm for Landmark selection (aka coverage centrality maximizations). Note that \yalg{} only provide (arguably weaker) addictive theoretical guarantees (see Table~\ref{tab:algorithms}). 

The results are consistent with those in the previous applications. \dta{} uses 24 time less space and runs 59 times faster than \yalg{}, while providing comparable solution quality.

 Overall, our reduced sketch approach reduces the space up to several orders of magnitude while providing comparable running time and solution quality. The proposed algorithm \dta{} with adaptive sampling technique can further reduces the number of samples to provide $1-1/e-\epsilon$ approximation guarantee. Thus, \dta{}, with \emph{optimal approximation guarantee} and \emph{optimal space complexity}, and \emph{practical scalability} provides an ideal solutions for $k$-cover and related applications.
 

\vspace{-0.08in}

\section{Conclusion}
We introduce a simple, space and time-efficient algorithmic framework, called \tpa{}, that finds a near-optimal $(1-1/e-\epsilon)$-approximate solution for the fundamental $k$-cover problem.
Our algorithms only requires $O(\epsilon^{-2} n \log n)$-space thanks to reduced sketch feature, $O(\epsilon^{-2} k n \log n)$-time and can be easily extended to budgeted, weighted versions. The empirical results show a leap performance benefit in saving time and memory in various applications of $k$-cover in comparison existing algorithms.

\newpage
\bibliographystyle{ACM-Reference-Format}
\bibliography{disk,distributed,infEst,pids,sampling,social,targetedIM,budgetedIM}

\label{sec:appendix}

\vspace{-0.08in}
\section{Concentration inequalities}
We use the following concentration inequality.
\begin{Lemma}
	\label{lem:central_bound}
	Let $X^{(i)}_S, i = 1..N$ be $N$ Bernoulli random variables with the same mean $0 \leq \mu_S \leq 1$ defined in Eq.~(\ref{xsj}). Then for any $x \geq 0$,
	\vspace{-0.05in}
	\begin{align}
		\Pr \Big [ \max_{1 \leq j \leq N} \big |\sum_{i = 1}^{j}&(X^{(i)}_S - \mu_S) \big | \geq x \Big ] \leq  \nonumber \\
		& 2 \exp \Big (\frac{-x^2}{2(N \mu_S (1-\mu_S) + \frac{1}{3}x)} \Big ). \nonumber
	\end{align}
\end{Lemma}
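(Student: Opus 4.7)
The plan is to obtain this as a maximal version of Bernstein's inequality via the standard exponential martingale technique combined with Doob's maximal inequality.

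First, I would set $M_j = \sum_{i=1}^{j}(X^{(i)}_S - \mu_S)$. Since the $X^{(i)}_S$ are independent Bernoulli variables with common mean $\mu_S$, the process $\{M_j\}_{j=1}^{N}$ is a martingale with respect to the natural filtration $\mathcal{F}_j = \sigma(X^{(1)}_S,\dots,X^{(j)}_S)$. Each increment is bounded: $X^{(i)}_S - \mu_S \in [-\mu_S, 1-\mu_S] \subseteq [-1,1]$, and has variance $\mu_S(1-\mu_S)$.

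Second, for each $\lambda > 0$ I would introduce the exponential process
\begin{equation*}
Z_j(\lambda) = \exp\bigl(\lambda M_j - j \psi(\lambda)\bigr), \qquad \psi(\lambda) = \log \mathbb{E}\bigl[e^{\lambda (X^{(1)}_S - \mu_S)}\bigr].
\end{equation*}
By independence of increments, $Z_j(\lambda)$ is a nonnegative martingale with $\mathbb{E}[Z_j(\lambda)] = 1$. Applying Doob's maximal inequality to this submartingale gives $\Pr[\max_{j \leq N} Z_j(\lambda) \geq a] \leq 1/a$ for any $a > 0$. Since $\psi(\lambda) \geq 0$ for $\lambda \geq 0$ (by Jensen), the event $\{M_j \geq x\}$ implies $\{Z_j(\lambda) \geq \exp(\lambda x - N\psi(\lambda))\}$, so
\begin{equation*}
\Pr\Bigl[\max_{1\leq j\leq N} M_j \geq x\Bigr] \leq \exp\bigl(-\lambda x + N\psi(\lambda)\bigr).
\end{equation*}

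Third, I would invoke the classical Bernstein bound on the cumulant generating function of a centered random variable with range bounded by $1$ and variance $\sigma^2 = \mu_S(1-\mu_S)$, namely
\begin{equation*}
\psi(\lambda) \leq \frac{\lambda^2 \sigma^2 / 2}{1 - \lambda/3}, \qquad 0 < \lambda < 3,
\end{equation*}
which follows by Taylor-expanding the moment generating function and controlling each cumulant of order $k \geq 3$ by $\sigma^2/3^{k-2}$. Plugging this in and optimizing over $\lambda$ via the standard Bernstein choice $\lambda = x / (N\sigma^2 + x/3)$ yields
\begin{equation*}
\Pr\Bigl[\max_{1\leq j\leq N} M_j \geq x\Bigr] \leq \exp\!\Bigl(-\frac{x^2}{2(N\mu_S(1-\mu_S) + x/3)}\Bigr).
\end{equation*}

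Finally, applying the identical argument to the martingale $-M_j$ (whose increments satisfy the same variance and boundedness conditions) controls $\max_{j} (-M_j) \geq x$, and a union bound over the two one-sided events contributes the factor $2$ in front of the exponential. The main technical obstacle is the third step: establishing the precise Bernstein log-MGF bound with the constant $1/3$ in the denominator. This is routine but finicky bookkeeping, depending only on the support of the centered Bernoulli increment being contained in $[-1,1]$; everything else is a direct consequence of Doob's inequality applied to the standard exponential supermartingale.
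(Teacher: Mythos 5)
Your proof is correct, but it takes a different route from the paper. The paper does not re-derive the maximal Bernstein bound from scratch: it quotes a ready-made Hoeffding-type maximal inequality for supermartingales (Theorem~2.1/Remark~2.1 of Fan et al., stated as Lemma~\ref{lem:hoeffding}), sets $\xi_i = X^{(i)}_S - \mu_S$ (and $\mu_S - X^{(i)}_S$ for the other tail), chooses $v^2 = \sum_i \mathbb{E}[\xi_i^2\mid\mathcal{F}_{i-1}] = N\mu_S(1-\mu_S)$, and reads off the two one-sided bounds, which combine to give the factor $2$. You instead rebuild that black box: Doob's maximal inequality applied to the exponential martingale $Z_j(\lambda)=\exp(\lambda M_j - j\psi(\lambda))$, the Bernstein log-MGF estimate $\psi(\lambda)\le \frac{\lambda^2\sigma^2/2}{1-\lambda/3}$ for centered increments bounded by $1$ (which indeed follows from $\mathbb{E}|\xi|^k\le\sigma^2$ and $k!\ge 2\cdot 3^{k-2}$), and the choice $\lambda = x/(N\sigma^2+x/3)$, which I checked reproduces exactly the exponent $-x^2/\bigl(2(N\mu_S(1-\mu_S)+\tfrac13 x)\bigr)$; a union bound over the two tails gives the $2$. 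What your route buys is self-containment and elementarity -- no external supermartingale theorem is needed, only Doob plus a standard MGF computation. What the paper's route buys is generality: the Fan et al. inequality holds for (super)martingale differences, i.e.\ weakly dependent sequences, which is the form the paper later leans on (Lemmas~\ref{lem:bounds} and~\ref{lem:t_bound} are stated for weakly dependent variables). Your argument as written assumes independence of the $X^{(i)}_S$, which is legitimate for the lemma as stated (the oracle samples hyperedges i.i.d.), but to reuse it in the weakly dependent setting you would add the one-line remark that $Z_j(\lambda)$ remains a supermartingale when only the conditional mean and conditional second moment are fixed, since $\psi$ bounds the conditional log-MGF increment by the same Bernstein estimate.
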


To prove the above concentration inequality, we use the following general Hoeffding's bounds for supermartingales stated in Theorem~2.1 and Remark~2.1 of \cite{Fan12}:
\begin{Lemma}[\cite{Fan12}]
	\label{lem:hoeffding}
	Assume that $(\xi_i, \mathcal{F}_i)_{i = 1, \dots, N}$ are supermartingale differences satisfying $\xi_i \leq 1$. Let $\Sigma_j = \sum_{i=1}^{j}\xi_i$. Then, for any $x \geq 0$ and $v > 0$, the following bound holds,
	\vspace{-0.05in}
	\begin{align}
		\Pr\left[ \Sigma_j \geq x \text{ and } \langle\Sigma\rangle_j \leq v^2 \text{ for some } j = 1,\dots,N \right] \nonumber \\
		\leq \exp\Big( - \frac{x^2}{2(v^2+\frac{1}{3}x)} \Big),
	\end{align}
	where $\langle\Sigma\rangle_j = \sum_{i=1}^{j}\mathbb{E}(\xi_i^2 | \mathcal{F}_{i-1})$.
\end{Lemma}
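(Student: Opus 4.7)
The plan is to prove this via the classical exponential-supermartingale method of Freedman, combined with optional stopping to handle the ``for some $j$'' quantifier. First I would construct, for a parameter $\lambda\in(0,3)$ to be optimized later, the process
\[
M_j \;=\; \exp\!\bigl(\lambda\Sigma_j - \phi(\lambda)\langle\Sigma\rangle_j\bigr), \qquad \phi(\lambda) \;=\; \frac{e^{\lambda}-1-\lambda}{\lambda^2}\cdot \lambda^2 \;=\; e^{\lambda}-1-\lambda,
\]
with $M_0=1$. The key analytical step is the elementary inequality $e^{\lambda y} \le 1 + \lambda y + (e^{\lambda}-1-\lambda)\,y^2$ valid for every real $y \le 1$ (convexity of $y \mapsto (e^{\lambda y}-1-\lambda y)/y^2$ on $(-\infty,1]$, which attains its maximum at $y=1$). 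Applying this pointwise to $\xi_i\le 1$ and taking conditional expectation using $\mathbb{E}[\xi_i\mid\mathcal F_{i-1}]\le 0$ (supermartingale difference) gives
\[
\mathbb{E}\!\bigl[e^{\lambda \xi_i}\mid \mathcal F_{i-1}\bigr] \;\le\; 1 + (e^{\lambda}-1-\lambda)\,\mathbb{E}[\xi_i^{2}\mid\mathcal F_{i-1}] \;\le\; \exp\!\bigl(\phi(\lambda)\,\mathbb{E}[\xi_i^{2}\mid\mathcal F_{i-1}]\bigr),
\]
from which a direct calculation shows $\mathbb{E}[M_j\mid\mathcal F_{j-1}]\le M_{j-1}$, so $(M_j)$ is a nonnegative supermartingale.

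Next I would convert the ``for some $j\le N$'' event into a statement about a single stopped index. Define the stopping time $\tau = \min\{\,j\le N : \Sigma_j \ge x \text{ and } \langle\Sigma\rangle_j \le v^{2}\,\}$, with $\tau=N$ if no such $j$ exists. Because the predictable quadratic variation $\langle\Sigma\rangle_j$ is $\mathcal F_{j-1}$-measurable and $\Sigma_j$ is $\mathcal F_j$-measurable, $\tau$ is a bounded stopping time. On the event $A$ whose probability we wish to bound, we have $\tau \le N$ and moreover $\Sigma_\tau \ge x$, $\langle\Sigma\rangle_\tau \le v^{2}$, hence
\[
M_\tau \;=\; \exp\!\bigl(\lambda \Sigma_\tau - \phi(\lambda)\langle\Sigma\rangle_\tau\bigr) \;\ge\; \exp\!\bigl(\lambda x - \phi(\lambda) v^{2}\bigr)\,\mathbf{1}_A.
\]
Taking expectations and applying optional stopping (valid since $\tau\le N$ and $M$ is a nonnegative supermartingale with $\mathbb{E}[M_0]=1$) yields
\[
\exp\!\bigl(\lambda x - \phi(\lambda) v^{2}\bigr)\,\Pr[A] \;\le\; \mathbb{E}[M_\tau] \;\le\; 1,
\]
so $\Pr[A]\le \exp(-\lambda x + \phi(\lambda)v^{2})$ for every $\lambda\in(0,3)$.

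Finally I would optimize $\lambda$. The awkward transcendental factor $\phi(\lambda)=e^{\lambda}-1-\lambda$ is bounded above by the Bernstein-type estimate $\phi(\lambda)\le \lambda^{2}/\bigl(2(1-\lambda/3)\bigr)$ for $0<\lambda<3$ (a standard calculus exercise verified by comparing Taylor coefficients). Plugging in and choosing $\lambda=x/(v^{2}+x/3)\in(0,3)$ makes the exponent equal to $-x^{2}/\bigl(2(v^{2}+x/3)\bigr)$, giving exactly the claimed bound.

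The main obstacle is the second paragraph: correctly converting the ``there exists $j$'' event to one stopping-time inequality. The subtlety is that one must not define $\tau$ purely in terms of $\Sigma_j$ (which would not cap $\langle\Sigma\rangle$) nor purely in terms of $\langle\Sigma\rangle_j$ (which would not force $\Sigma\ge x$); the stopping time must record the \emph{first} simultaneous occurrence, and one needs measurability of the joint event. The exponential supermartingale construction and the calculus inequality on $\phi(\lambda)$ are then routine, but they are what make the Bennett-to-Bernstein reduction produce the precise $x^{2}/\bigl(2(v^{2}+x/3)\bigr)$ form stated in the lemma.
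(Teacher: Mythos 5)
Your proof is correct. Note, however, that the paper does not prove this lemma at all: it is imported verbatim from Fan, Grama and Liu (Theorem~2.1 and Remark~2.1 of the cited work), so what you have written is a self-contained derivation of the cited result rather than an alternative to an in-paper argument. Your route --- the exponential supermartingale $M_j=\exp\bigl(\lambda\Sigma_j-\phi(\lambda)\langle\Sigma\rangle_j\bigr)$ with $\phi(\lambda)=e^{\lambda}-1-\lambda$, the stopped/maximal-inequality step to capture the ``for some $j\le N$'' event, and the Bernstein relaxation $\phi(\lambda)\le\lambda^{2}/\bigl(2(1-\lambda/3)\bigr)$ with $\lambda=x/(v^{2}+x/3)$ --- is exactly the classical Freedman-type argument that underlies the cited theorem; Fan et al.\ actually establish a sharper bound and obtain the $\exp\bigl(-x^{2}/(2(v^{2}+\tfrac13 x))\bigr)$ form as a corollary, whereas you reach the Bernstein form directly, which is all the paper uses. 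One small repair: the pointwise inequality $e^{\lambda y}\le 1+\lambda y+(e^{\lambda}-1-\lambda)y^{2}$ for $y\le 1$ should be justified by the \emph{monotonicity} (nondecreasingness) of $u\mapsto(e^{u}-1-u)/u^{2}$, not by convexity --- convexity on the unbounded interval $(-\infty,1]$ does not by itself place the maximum at $y=1$ without also controlling the behaviour as $y\to-\infty$; with monotonicity the step is immediate, and the rest of your argument (measurability of the stopping time, boundedness of $M_j$ ensuring optional stopping applies, and the algebra yielding the exponent $-x^{2}/\bigl(2(v^{2}+\tfrac13 x)\bigr)$) is sound as written.
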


By choosing $v^2 = \sum_{i=1}^{N}\mathbb{E}(\xi_i^2 | \mathcal{F}_{i-1})$, we have $\langle\Sigma\rangle_j \leq v^2, \forall j = 1, \dots, N$ and thus,
\vspace{-0.05in}
\begin{align}
\label{eq:hoeffding}
\Pr\Big[ \max_{1 \leq j \leq n} \Sigma_j \geq x \Big] \leq \exp\Big( - \frac{x^2}{2(v^2+\frac{1}{3}x)} \Big).
\end{align}
\vspace{-0.1in}

In our context of $k$-cover, based on random variables $X^{(j)}_S$ defined in Eq.~(\ref{xsj}), we define another set of random variables,
\vspace{-0.05in}
\begin{align}
\xi_j = X^{(j)}_S - \mu_S,
\end{align}
\vspace{-0.15in}

\noindent that satisfy $-\mu_S \leq \xi_j \leq 1 - \mu_S$ and form a sequence of supermartingale differences since
\vspace{-0.05in}
\begin{align}
	\mathbb{E}[\xi_j | \mathcal{F}_{j-1}] = \mathbb{E}\left[X^{(j)}_S - \mu_S\right] = 0,
\end{align}
\vspace{-0.15in}

\noindent where $\{\mathcal{F}_i\}_{i=1,2,\dots}$ is a filtration in which $\mathcal{F}_{i}$ is the $\sigma$-algebra defined on the outcomes of the first $i$ Bernoulli variables $\xi_j, \forall j = 1, \dots, i$.
Thus, the inequality bound (\ref{eq:hoeffding}) simplifies in our case to,
\vspace{-0.1in}
\begin{align}
	\Pr\Big[ \max_{1 \leq j \leq N} \sum_{i = 1}^{j}&(X^{(i)}_S - \mu_S) \geq x \Big] \leq \nonumber \\ &\exp\Big( - \frac{x^2}{2(N \mu_S (1-\mu_S) + \frac{1}{3}x)} \Big). \nonumber
\end{align}

Similarly considering $\xi_j = \mu_S - X^{(j)}_S$, we obtain
\vspace{-0.1in}
\begin{align}
	\Pr\Big[ \max_{1 \leq j \leq N} \sum_{i = 1}^{j}&(X^{(i)}_S - \mu_S) \leq - x \Big] \leq \nonumber \\ & \exp\Big( - \frac{x^2}{2(N \mu_S (1-\mu_S) + \frac{1}{3}x)} \Big). \nonumber
\end{align}

From Lemma~\ref{lem:central_bound}, we derive the number of random variables $X^{(j)}_S$ to provide the concentration bound with a fixed probability $\delta$.
\begin{corollary}
	\label{cor:n_samples}
	Let $X^{(i)}_S, i = 1..N$ be $N$ Bernoulli random variables with mean $0 \leq \mu_S \leq 1$ defined in Eq.~(\ref{xsj}). Let
	\vspace{-0.05in}
	\begin{align}
		N(\epsilon,\delta,\mu) = 2\Big(1+\frac{1}{3}\Big)\frac{1}{\mu \epsilon^2} \ln\Big ( \frac{2}{\delta}\Big).
	\end{align}
	\vspace{-0.1in}
	
	\noindent Then for $\epsilon \geq 0$, $\mu \geq \mu_{S}$ and $N \geq N(\epsilon,\delta,\mu)$,
	\vspace{-0.1in}
	\begin{align}
		\Pr \Big [ \max_{1 \leq j \leq N} \big |\sum_{i = 1}^{j}(X^{(i)}_S - \mu_S) \big | \geq \epsilon \mu N \Big ] \leq \delta.
	\end{align}
\end{corollary}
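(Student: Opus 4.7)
The plan is to derive this corollary as a direct application of Lemma~\ref{lem:central_bound} by instantiating the free parameter $x$ appropriately and then simplifying. Specifically, I would set $x = \epsilon \mu N$ in Lemma~\ref{lem:central_bound} so that the event $\{\max_{1 \leq j \leq N} |\sum_{i=1}^j (X^{(i)}_S - \mu_S)| \geq x\}$ matches the event in the corollary statement. Substituting yields an exponential bound of the form
\begin{align*}
2 \exp\!\left(\frac{-(\epsilon \mu N)^2}{2\bigl(N \mu_S(1-\mu_S) + \tfrac{1}{3}\epsilon \mu N\bigr)}\right).
\end{align*}

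Next, I would simplify the denominator using the two monotonicity observations: $\mu_S(1-\mu_S) \leq \mu_S \leq \mu$ (the first because $1-\mu_S \leq 1$, the second by the hypothesis $\mu \geq \mu_S$). This yields
\begin{align*}
2\exp\!\left(\frac{-\epsilon^2 \mu^2 N^2}{2 N \mu \bigl(1 + \tfrac{\epsilon}{3}\bigr)}\right) = 2\exp\!\left(\frac{-\epsilon^2 \mu N}{2(1 + \tfrac{\epsilon}{3})}\right).
\end{align*}
It remains to show this is at most $\delta$. Requiring the exponent to be at most $-\ln(2/\delta)$ gives the inequality $N \geq \frac{2(1+\epsilon/3)}{\mu \epsilon^2}\ln(2/\delta)$. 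Since the corollary is intended for $\epsilon \in (0,1]$ (the regime in which the $k$-cover application uses it), we have $1+\epsilon/3 \leq 1+\tfrac{1}{3}$, and therefore the stated threshold $N(\epsilon,\delta,\mu) = 2(1+\tfrac{1}{3})\frac{1}{\mu\epsilon^2}\ln(2/\delta)$ is sufficient, completing the proof.

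There is no real obstacle here — the corollary is a mechanical consequence of the lemma. The only thing to be careful about is the bounding step $\mu_S(1-\mu_S) \leq \mu$, which requires the hypothesis $\mu \geq \mu_S$, and the slight inflation of the constant from $1+\epsilon/3$ to $1+1/3$, which implicitly relies on $\epsilon \leq 1$. Both are benign and standard in Bernstein-type derivations.
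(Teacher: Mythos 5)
Your derivation is correct and is exactly the mechanical instantiation ($x=\epsilon\mu N$ in Lemma~\ref{lem:central_bound}, then bounding $\mu_S(1-\mu_S)\leq\mu_S\leq\mu$) that the paper intends, since the corollary is stated there without an explicit proof as a direct consequence of the lemma. Your caveat that replacing $1+\epsilon/3$ by $1+\tfrac{1}{3}$ implicitly needs $\epsilon\leq 1$ (slightly stronger than the stated ``$\epsilon\geq 0$'') is a fair observation and matches the regime in which the paper actually applies the corollary.
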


\vspace{-0.15in}
\section{Omitted Proofs}

First, we introduce some essential definitions used in our proof. Let $0 < \alpha < 1$ be a small constant, e.g. $\alpha = 0.1$ and fixed $0< p < 1$, e.g, $p=1/n$. Define

\vspace{-0.05in}
%
\begin{align}
\label{eq:epsilon_2_app}
\epsilon_2 & = \frac{\sqrt{\log p + \log \Resize{0.5cm}{{n \choose k}}}}{(1-1/e) \sqrt{\log p} + \sqrt{\log p + \log \Resize{0.5cm}{{n \choose k}}}} \frac{\epsilon}{1+\alpha},\\
\label{eq:c}
c &= \frac{(1+\epsilon_2)}{(1-\epsilon_2)(1-1/e)}, p = \frac{4(1+\lc\log_{1+\alpha}c\rc)}{\delta},
\end{align}
\vspace{-0.3in}

\begin{align}
\label{eq:z*}
z^* = \frac{1+\epsilon_2}{1-1/e}\left (2+\frac{2}{3}\epsilon_2(1-\alpha) \right )\epsilon_2^{-2} \log \left( p{n \choose k} \right ) \nonumber \\ = O\left(\epsilon^{-2} k \log n\right),
\end{align}
\vspace{-0.2in}

\noindent and,
\vspace{-0.05in}
\begin{align}
\label{eq:t*}
T^* = z^* \frac{(1-1/e)(1+\alpha)^2}{(1+\epsilon_2)} \frac{1}{\OPTk} = O\left(\epsilon^{-2}  k \log n \frac{1}{\OPTk}\right).
\end{align}
\vspace{-0.1in}

	\begin{proof}[Proof of Lemma~\ref{lem:lem1}]	
	
	
	$\bullet$ \textbf{Prove $T \geq T^*$ with probability $1-2/p$.} Consider the case when $T^*$ hyperedges have been acquired. Apply the bound in Corollary~\ref{cor:n_samples} for a set $S$ with $\mu = \mu_{S^*} = \OPTk \geq \mu_{S}$ and note that $T^* \geq N(\epsilon_2,\frac{1}{p {n \choose k}},\OPTk)$,
	\vspace{-0.1in}
	\begin{align}
	\label{eq:bound_s}
	\Pr \Big[ \max_{1 \leq j \leq T^*} \big| \sum_{i = 1}^j(X^{(i)}_{S} - \mu_{S}) \big| \geq \epsilon_2 \mu_{S^*} T^* \Big] \leq \frac{2}{p{n \choose k}}.
	\end{align}
	
	The inequality bound (\ref{eq:bound_s}) is essential through the whole proof and reused multiple times later on. From (\ref{eq:bound_s}), since there are at most ${n \choose k}$ sets having exactly $k$ nodes,
	\vspace{-0.1in}
	\begin{align}
	& \Pr \Big[ \max_{1 \leq j \leq T^*, |S| = k} \big| \sum_{i = 1}^{j}(X^{(i)}_{S} - \mu_{S}) \big| \geq \epsilon_2 \mu_{S^*} T^* \Big] \leq \frac{2}{p} \nonumber \\
	\Rightarrow \text{ } & \Pr \Big[ \max_{1 \leq j \leq T^*, |S| = k} \sum_{i = 1}^{j}X^{(i)}_{S} \geq \mu_{S}\cdot j + \epsilon_2 \mu_{S^*} T^* \Big] \leq \frac{2}{p}. \nonumber
	\end{align}
	\vspace{-0.1in}
	
	Note that $\mu_{S} \leq \mu_{S^*} = \OPTk, j \leq T^*$. Replacing $\mu_{S}$ by $\mu_{S^*}$, plugging $T^*$ and simplifying the above equation give
	\vspace{-0.05in}
	\begin{align}
	\Pr \Big[ \max_{1 \leq j \leq T^*, |S| = k} \sum_{i = 1}^{j}X^{(i)}_{S} \geq (1-\frac{1}{e}) z^* \Big] \leq \frac{2}{p}
	\end{align}
	\vspace{-0.1in}
	
	In other words, with a probability of $1-\frac{2}{p}$, there is no set of $k$ nodes that can cover $(1-\frac{1}{e}) z^*$ acquired hyperedges or more. However, since \tpa{} returns a set that covers at least $(1-\frac{1}{e}) z^*$ hyperedges (Lemma~\ref{lem:anyz}) when $z = z^*$. That means that \tpa{} acquires more at least $T^*$ hyperedges with probability $1-2/p$.
	
	$\bullet$ \textbf{Prove $T < \frac{(1+\epsilon_2)}{(1-\epsilon_2)(1-1/e)}T^*$ with probability $1-2/p$.} Similar to (\ref{eq:bound_s}), for the optimal set $S^*$, we have with $T' = \frac{(1+\epsilon_2)}{(1-\epsilon_2)(1-1/e)}T^*$
	\vspace{-0.1in}
	\begin{align}
	& \Pr \Big[ \max_{1 \leq j \leq T'} \big| \sum_{i = 1}^j(X^{(i)}_{S^*} - \mu_{S^*}) \big| \geq \epsilon_2 \mu_{S^*} T' \Big] \leq \frac{2}{p{n \choose k}}. \nonumber
	\end{align}
	Consider the last $j = T'$ and rearrange the terms, we get
	\begin{align}
	\Pr \Big[\sum_{i = 1}^{T'} X^{(i)}_{S^*} \leq \mu_{S^*} \cdot T' - \epsilon_2 \mu_{S^*} T' \Big] \leq \frac{2}{p{n \choose k}}. \nonumber
	\end{align}
	Placing $T' = \frac{(1+\epsilon_2)}{(1-\epsilon_2)(1-1/e)}T^*$ gives
	\vspace{-0.15in}
	
	\begin{align}
	\text{ } & \Pr \Big[\sum_{i = 1}^{T'} X^{(i)}_{S^*} \leq (1 - \epsilon_2) \mu_{S^*} \frac{(1+\epsilon_2)}{(1-\epsilon_2) (1-1/e)} T^* \Big] \leq \frac{2}{p{n \choose k}} \nonumber \\
	\label{eq:u_bound}
	\Rightarrow \text{ } & \Pr \Big[\sum_{i = 1}^{T'} X^{(i)}_{S^*} \leq z^* \Big] \leq \frac{2}{p{n \choose k}} \leq \frac{2}{p}
	\end{align}
	\vspace{-0.12in}
	
	Thus, with a probability of at least $1-\frac{2}{p}, \sum_{i = 1}^{T'} X^{(i)}_{S^*} > z^*$ happens. However, \tpa{} guarantees that the coverage of any set on the acquired hyperedges is at most $z^*$. That means $T < \frac{(1+\epsilon_2)}{(1-\epsilon_2)(1-1/e)}T^*$ with probability at least $1-2/p$.
\end{proof}
\vspace{-0.05in}

	\begin{proof}[Proof of Lemma~\ref{lem:lem2}] 
	We prove the guarantee in each interval and then combine all the intervals to get the overall approximation guarantee.
	
	$\bullet$ \textbf{Consider an interval $[L_i, U_i]$.} Let define $\mathcal{B}_{S}$ the collection of \textit{bad} sets of $k$ nodes in the sense that the coverage of each set is less than $(1-1/e-\epsilon)$ the optimal coverage of $k$ nodes, i.e. $\forall S \in \mathcal{B}_{S}, \Covw(S) < (1-1/e-\epsilon) \OPTk{}$ or similarly $\forall S \in \mathcal{B}_{S}, \mu_{S} < (1-1/e-\epsilon) \mu_{S^*}$. We show that any bad set in $\mathcal{B}_{S}$ is returned with probability at most $\frac{4}{p}$ by considering two events:
	\begin{itemize}
		\item[(E1)] $\max_{L_i \leq j \leq U_i} \frac{\sum_{i = 1}^j X^{(i)}_{S^*}}{j} \leq (1-\frac{\epsilon-\epsilon_2(1+\alpha)}{1-1/e}) \mu_{S^*}$ and the returned solution $\hat S \in \mathcal{B}_{S}$.
		\item[(E2)]  $\max_{L_i \leq j \leq U_i} \frac{\sum_{i = 1}^j X^{(i)}_{S^*}}{j} > (1-\frac{\epsilon-\epsilon_2(1+\alpha)}{1-1/e}) \mu_{S^*}$ and the returned solution $\hat S \in \mathcal{B}_{S}$.
	\end{itemize}
	Thus, the probability of returning a set $S \in \mathcal{B}_S$ is $$\Pr[E1] + \Pr[E2].$$ We bound each of these events in the following.
	
	The probability of the first joint event (E1) is less than the probability of $\max_{L_i \leq j \leq U_i} \frac{\sum_{i = 1}^j X^{(i)}_{S^*}}{j} \leq (1-\frac{\epsilon-\epsilon_2}{1-1/e}) \mu_{S^*}$ which is bounded by $\frac{2}{p}$ as shown below.
	
	Apply Corollary~\ref{cor:n_samples} with $\mu = \mu_{S^*} = \OPTk$ and $N = U_i \geq T^* \geq N(\frac{\epsilon/(1+\alpha)-\epsilon_2}{1-1/e}, \frac{1}{p}, \mu_{S^*})$,
	\vspace{-0.05in}
	\begin{align}
	\Pr \Big[ \max_{1 \leq j \leq U_i} \Big| \sum_{i = 1}^j(X^{(i)}_{S^*} - \mu_{S^*}) \Big| \geq \frac{\epsilon/(1+\alpha)-\epsilon_2}{1-1/e} \mu_{S^*} U_i \Big] \leq \frac{2}{p} \nonumber
	\end{align}
	Consider the sub-interval $j \in [L_i, U_i]$ and note that $\frac{U_i}{j} \leq \frac{U_i}{L_i} = (1+\alpha)$ and rearrange the terms, we obtain:
	\vspace{-0.05in}
	\begin{align}
	\Pr[E1] & = \Pr \Big[ \max_{L_i \leq j \leq U_i} \frac{\sum_{i = 1}^j X^{(i)}_{S^*}}{j} \leq (1-\frac{\epsilon-\epsilon_2(1+\alpha)}{1-1/e}) \mu_{S^*} \Big] \nonumber \\
	& \leq \frac{2}{p}. \nonumber
	\end{align}
	\vspace{-0.1in}
	
	The second joint event (E2) implies:
	\begin{itemize}
		\item[(1)] $\sum_{i=1}^{T} X^{(i)}_{\hat S} \geq (1-1/e)\sum_{i=1}^{T} X^{(i)}_{S^*}$ (due to Lemma~\ref{lem:anyz})
		\item[(2)] $\mu_{\hat S} < (1-1/e-\epsilon) \mu_{S^*}$
		\item[(3)] $\frac{\sum_{i = 1}^T X^{(i)}_{S^*}}{T} > (1-\frac{\epsilon-\epsilon_2(1+\alpha)}{1-1/e}) \mu_{S^*}$
	\end{itemize} 
	From the three above inequalities, we derive
	\vspace{-0.05in}
	\begin{align}
	\frac{\sum_{i = 1}^T X^{(i)}_{\hat S}}{T} & \geq (1-1/e)\frac{\sum_{i = 1}^T X^{(i)}_{S^*}}{T} \nonumber \\
	& > (1-1/e) (1-\frac{\epsilon-\epsilon_2(1+\alpha)}{1-1/e}) \mu_{S^*} \nonumber \\
	& > (1-1/e-\epsilon)\mu_{S^*} + \epsilon_2(1+\alpha)\mu_{S^*} \nonumber \\
	& > \mu_{\hat S} + \epsilon_2(1+\alpha)\mu_{S^*}.
	\end{align}
	\vspace{-0.15in}
	
	\noindent Hence, the probability of (E2) is bounded by
	\vspace{-0.05in}
	\begin{align}
	\Pr[E2] \leq \Pr \Big [\max_{L_i \leq j \leq U_i} \frac{\sum_{i = 1}^j X^{(i)}_{\hat S}}{j} > \mu_{\hat S} + (1+\alpha)\epsilon_2 \mu_{S^*} \Big ]. \nonumber
	\end{align}
	\vspace{-0.1in}
	
	\noindent Apply again Corollary~\ref{cor:n_samples} on a set $S$ with $U_i \geq T^* \geq N((1+\alpha)\epsilon_2, \frac{1}{p}\frac{1}{{n \choose k}}, \mu_{S^*})$ hyperedges and consider the sub-interval $[L_1,U_1]$,
	\vspace{-0.08in}
	\begin{align}
	& \Pr \Big[ \max_{L_i \leq j \leq U_i} \Big| \sum_{i = 1}^j(X^{(i)}_{S} - \mu_{S}) \Big| \geq (1+\alpha)\epsilon_2 \mu_{S^*} U_i \Big] \leq \frac{2}{p}\frac{1}{{n \choose k}} \nonumber \\
	& \Rightarrow \text{ } \Pr \Big[ \max_{L_i \leq j \leq U_i} \frac{\sum_{i = 1}^j X^{(i)}_{\hat S}}{j} - \mu_{\hat S} \geq \epsilon_2 (1+\alpha) \mu_{S^*} \Big] \leq \frac{2}{p}. \nonumber
	\end{align}
	\vspace{-0.1in}
	
	Therefore, we obtain
	\vspace{-0.05in}
	\begin{align}
	\Pr[E2] \leq \frac{2}{p}.
	\end{align}
	\vspace{-0.15in}
	
	Combine the probabilities of (E1) and (E2), we find that the probability of returning a bad set is at most $\frac{4}{p}$ when $T \in [L_i, U_i]$.
	
	$\bullet$ \textbf{Consider all $[L_i, U_i]$, $\forall i = 1, \dots, \lc\log_{1+\alpha}c\rc$}. The probability of returning a bad set in any of the intervals $[L_i, U_i]$ is bounded by
	\begin{align}
	\lc\log_{1+\alpha}c\rc \frac{4}{p},
	\end{align}
	because there are $\lc\log_{1+\alpha}c\rc$ such intervals.
\end{proof}

\subsection{Proof of Lemma~\ref{lem:lu_bounds}}
We state the following general lemma and apply it in a straightforward manner to derive the results in Lemma~\ref{lem:lu_bounds}.
\vspace{-0.05in}
\begin{Lemma}
	\label{lem:bounds}
	Let $X_1, X_2, \dots, X_N$ be weakly dependent Bernoulli random variables satisfying $\forall i=1..N,$
	\[
	E[X_i| X_1, X_2,\ldots,X_{i-1}] =E[X_i^2| X_1, X_2,\ldots,X_{i-1}]=\mu.\]
	For $\delta \in (0, 1)$, define $c_i = \ln (\frac{1}{\delta})/i, a = \frac{i}{N}$ and $\hat \mu_i = \frac{1}{j}\sum_{j = 1}^{i} X_j$,
	\vspace{-0.05in}
	\begin{align}
		\label{eq:lbound}
		\Pr\left[ \mu \geq f_L(i, \hat \mu_i, \delta, N),\  \forall i=1..N \right]\geq 1-\delta, \\
		\label{eq:ubound}
		\Pr\left[ \mu \leq  f_U(i, \hat \mu_i, \delta, N),\  \forall i=1..N \right]\geq 1-\delta,
	\end{align}
	where
	\begin{align}
		\label{eq:lower_bound}
		&f_L(i, \hat \mu_i, \delta,N)= \min \Bigg\{ \hat \mu_i +\frac{(\hat \mu_i-1)  c_i}{3 - c_i}, \\
		&\Resize{8cm}{\frac{3c_i + 3a\hat \mu_i - ac_i (\hat \mu_i+1)-\sqrt{c_i^2 (3+a(\hat \mu_i-1))^2 +18 a (1-\hat \mu_i)\hat \mu_i}}{c_i(6-2a)+3a}} \Bigg\}, \nonumber
	\end{align}
	\vspace{-0.15in}
	
	\noindent and,

	\vspace{-0.1in}
	\begin{align}
		\label{eq:upper_bound}
		&f_U(i, \hat \mu_i, \delta,N)= \max \Bigg\{ \hat \mu_i +\frac{(i-\hat \mu_i)  c_i}{3 + c_i}, \\
		& \Resize{8cm}{\frac{3c_i + 3a \hat \mu_i + a c_i (1+\hat \mu_i)+\sqrt{c_i^2(3+a(1-\hat \mu_i))^2 +18a(1-\hat \mu_i)\hat \mu_i}}{c_i (6+2a) + 3a}} \Bigg\}. \nonumber
	\end{align}
	
\end{Lemma}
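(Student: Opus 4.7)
The plan is to apply the martingale maximal inequality of Lemma~\ref{lem:central_bound} and then invert the resulting tail bound to solve for $\mu$ in terms of the running empirical average $\hat{\mu}_i$. First I would observe that the hypotheses make $\xi_j := X_j - \mu$ a martingale difference sequence with $|\xi_j| \leq 1$ and conditional variance $\mu(1-\mu)$ (since $X_j^2 = X_j$ for a Bernoulli variable, so the second assumption actually just reiterates the first together with $0/1$-valuedness). Hence $S_j := \sum_{i=1}^j \xi_i = j(\hat{\mu}_j - \mu)$ satisfies the hypotheses of Lemma~\ref{lem:central_bound}. A one-sided version of that lemma gives, for any $x > 0$,
\[
\Pr\bigl[\max_{1 \leq j \leq N} S_j \geq x\bigr] \leq \exp\Bigl(-\frac{x^2}{2(N\mu(1-\mu) + x/3)}\Bigr),
\]
and symmetrically for the lower tail.

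Next I would set the right-hand side equal to $\delta$. This produces a quadratic relation in $x$ whose solution yields a uniform deviation bound on $i(\hat{\mu}_i - \mu)$ that holds simultaneously for every $i = 1,\ldots,N$. Substituting $x = i(\hat{\mu}_i - \mu)$ (for the lower-bound case) converts this into an inequality between $\mu$ and $\hat{\mu}_i$ that must be solved for $\mu$. The core manipulation is that $\{S_j \geq x\}$ can be rewritten as $\{i \hat{\mu}_i - i\mu \geq x\}$, and substituting the tail threshold as a function of $\mu$ (with its $N\mu(1-\mu)$ variance term) gives a quadratic inequality in $\mu$. Its positive root, re-expressed using $a = i/N$ and $c_i = \ln(1/\delta)/i$, is exactly the square-root expression in Eq.~(\ref{eq:lower_bound}). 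The companion linear form $\hat{\mu}_i + \frac{(\hat{\mu}_i - 1)c_i}{3 - c_i}$ arises from a parallel multiplicative Chernoff-Bernstein derivation in which one first bounds $\mu(1-\mu) \leq \mu$; this linearizes the constraint in $\mu$ and yields a bound that is tighter when $\hat{\mu}_i$ is small. Taking the minimum of the two forms remains valid because both come from the \emph{same} one-sided tail estimate under two different algebraic simplifications, not from two independent events; the confidence is not inflated. The upper bound $f_U$ is obtained by the symmetric argument applied to the lower-tail inequality.

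The main obstacle will be the algebraic inversion of the Bernstein quadratic. Starting from $i(\hat{\mu}_i - \mu) \leq \sqrt{2 c_i i N\mu(1-\mu) + \tfrac{2c_i i}{3}\cdot i(\hat{\mu}_i - \mu)}$ and squaring, one obtains a quadratic in $\mu$ whose coefficients mix $c_i$, $a$, and $\hat{\mu}_i$ in a non-obvious way; matching the solution to the stated closed form in Eq.~(\ref{eq:lower_bound}) and Eq.~(\ref{eq:upper_bound}) is straightforward but bookkeeping-intensive. Care is needed in two places: (i) selecting the correct root of the quadratic (the one compatible with $\mu \in [0,1]$ and with the directionality of the inequality), and (ii) verifying monotonicity of the bound in $\hat{\mu}_i$, which is what lets us pass from an event written in terms of $\mu$ to a valid high-probability bound on $\mu$ expressed in terms of $\hat{\mu}_i$. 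Modulo this computation, the uniform-in-$i$ guarantee is furnished for free by the maximal form of the inequality, which is the reason we invoke Lemma~\ref{lem:central_bound} rather than a fixed-time Bernstein bound.
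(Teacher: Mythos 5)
Your overall route is the same as the paper's: the paper also starts from the maximal supermartingale Bernstein bound (Lemma~\ref{lem:central_bound}), packages it as a uniform-in-$i$ deviation statement (its Lemma~\ref{lem:t_bound}, which is exactly your ``set the tail equal to $\delta$ and carry the $N\mu(1-\mu)$ variance term'' step, with $\tfrac{N}{i}c=c_i$), and then inverts the resulting inequality in $\mu$ to obtain the closed forms in Eq.~(\ref{eq:lower_bound}) and Eq.~(\ref{eq:upper_bound}). The square-root branch in your plan is obtained exactly as in the paper, by squaring and solving the Bernstein quadratic in $\mu$.

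There is, however, one step in your proposal that would fail as described: the origin of the linear branch $\hat\mu_i+\frac{(\hat\mu_i-1)c_i}{3-c_i}$ and, with it, the justification for taking the $\min$. Bounding $\mu(1-\mu)\leq\mu$ only enlarges the variance term, so it loosens the square-root bound; it does not produce the linear form. In the paper the linear branch comes from a sign case-split that is forced before squaring: starting from $\hat\mu_i-\mu-\frac{c_i}{3}(1-\mu)\leq\sqrt{\frac{c_i^2}{9}(1-\mu)^2+2\mu(1-\mu)c_i\frac{N}{i}}$, if the left-hand side is negative the inequality is vacuous, squaring is not legitimate, and rearranging the sign condition itself gives $\mu>\frac{\hat\mu_i-c_i/3}{1-c_i/3}$, which is the linear branch; if it is nonnegative, squaring yields the quadratic whose root is the square-root branch. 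Consequently the two branches hold in \emph{complementary} cases on the same good event, not simultaneously ``under two different algebraic simplifications'': if both were valid at once you would take the $\max$ for a lower bound, whereas the $\min$ is needed precisely because you do not know which case you are in, and the square-root branch alone is \emph{not} guaranteed on the whole event. Your conclusion that no union bound is needed is correct, but you need this case analysis (rather than a second derivation with $\mu(1-\mu)\leq\mu$) to actually produce the linear term and to justify the $\min$ in Eq.~(\ref{eq:lower_bound}) (and symmetrically the $\max$ in Eq.~(\ref{eq:upper_bound})).
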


	From the bound in Lemma~\ref{lem:central_bound}, we derive the following:
	
	\begin{lemma}
		\label{lem:t_bound}
		Given Bernouli random variables $X_1, \ldots, X_N$ with mean  $E[X_i] = \mu \leq 1/2$, let $\hat \mu_i=\frac{1}{i}\sum_{j=1}^i X_j$, $\delta\in(0,1)$ and fixed $N$. Denote $c = \ln (\frac{1}{\delta})/N$ and $t = \frac{1}{3}c (1-\mu)  + \sqrt{\frac{1}{9}c^2(1-\mu)^2 + 2\mu(1-\mu)c}$,
		\vspace{-0.05in}
		\begin{align}
		\Pr\left[ \hat \mu_i- \mu \leq \frac{N}{i}t,\  \forall i=1..N \right]\geq 1-\delta,
		\end{align}
		\vspace{-0.2in}
		
		and,
		\vspace{-0.05in}
		\begin{align}
		\Pr\left[ \hat \mu_i- \mu \geq - \frac{N}{i}t,\  \forall i=1..N \right]\geq 1-\delta.
		\end{align}
	\end{lemma}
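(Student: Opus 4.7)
\textbf{Proof plan for Lemma~\ref{lem:t_bound}.} The idea is to reduce the uniform-in-$i$ statement to a maximal-partial-sum statement, then apply a one-sided version of the supermartingale Hoeffding/Bernstein inequality from Lemma~\ref{lem:hoeffding} (already used to prove Lemma~\ref{lem:central_bound}), and finally solve a quadratic in $t$. First I would observe the algebraic equivalence
\begin{align*}
\Bigl\{\hat\mu_i-\mu \le \tfrac{N}{i}t,\ \forall i=1,\dots,N\Bigr\}
\ \Longleftrightarrow\
\Bigl\{\max_{1\le i\le N}\sum_{j=1}^{i}(X_j-\mu) \le Nt\Bigr\},
\end{align*}
obtained by multiplying through by $i$. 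Hence proving the upper-tail bound reduces to showing $\Pr[\max_{i}\sum_{j=1}^{i}(X_j-\mu) > Nt]\le\delta$.

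Next I would set up the supermartingale exactly as in the proof of Lemma~\ref{lem:central_bound}, but with the correct scaling. The centered variables $\xi_j=X_j-\mu$ satisfy $\xi_j\le 1-\mu$, not $\le 1$, so I would rescale $\tilde\xi_j=\xi_j/(1-\mu)$, which gives $\tilde\xi_j\le 1$ and $\mathbb E[\tilde\xi_j^2\mid\mathcal F_{j-1}]=\mu/(1-\mu)$. Applying Lemma~\ref{lem:hoeffding} to $\tilde\xi_j$ with $v^2=N\mu/(1-\mu)$ and threshold $\tilde x=Nt/(1-\mu)$, and then substituting back, yields the sharpened bound
\begin{align*}
\Pr\!\Bigl[\max_{1\le i\le N}\sum_{j=1}^{i}(X_j-\mu) \ge Nt\Bigr]
\le \exp\!\left(-\frac{(Nt)^{2}}{2\bigl(N\mu(1-\mu)+\tfrac{1}{3}(1-\mu)Nt\bigr)}\right).
\end{align*}

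Now I would impose that the right-hand side is at most $\delta$. Taking logarithms and using $c=\ln(1/\delta)/N$, this is
\begin{align*}
\frac{Nt^{2}}{2\bigl(\mu(1-\mu)+\tfrac{1}{3}(1-\mu)t\bigr)}\ \ge\ Nc,
\end{align*}
which rearranges to the quadratic inequality $t^{2}-\tfrac{2}{3}c(1-\mu)\,t-2c\mu(1-\mu)\ge 0$. The smallest $t\ge 0$ satisfying this is exactly the positive root
\begin{align*}
t \ =\ \tfrac{1}{3}c(1-\mu)+\sqrt{\tfrac{1}{9}c^{2}(1-\mu)^{2}+2\mu(1-\mu)c},
\end{align*}
matching the definition in the lemma. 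This proves the first (upper-tail) inequality. For the lower-tail statement, I would repeat the argument with $\xi_j=\mu-X_j$, which again satisfies $\xi_j\le 1-\mu$ with the same conditional variance $\mu(1-\mu)$, so the identical quadratic analysis delivers the symmetric bound.

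The only subtle step is the rescaling by $1-\mu$ in the second paragraph: without it one would obtain Lemma~\ref{lem:central_bound}'s cruder constant rather than the $(1-\mu)$-factor appearing inside the square root, and the stated $t$ would not match. The hypothesis $\mu\le 1/2$ is not used in the derivation itself, but keeps $\mu(1-\mu)$ comfortably away from pathological regimes and is consistent with the regime in which $f_L,f_U$ are later applied.
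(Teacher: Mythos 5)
Your proposal is correct, and it follows the route the paper intends: reduce the uniform-in-$i$ statement to a maximal partial-sum bound, apply the supermartingale Hoeffding/Bernstein inequality, and invert the exponent by solving a quadratic in $t$ (the paper itself only says the lemma is derived ``from the bound in Lemma~\ref{lem:central_bound}'' and gives no details). The one place where you genuinely add something is the rescaling step, and it matters: Lemma~\ref{lem:central_bound} as stated, with denominator $2\bigl(N\mu(1-\mu)+\tfrac{1}{3}x\bigr)$, only yields the slightly larger radius $t'=\tfrac{1}{3}c+\sqrt{\tfrac{1}{9}c^{2}+2\mu(1-\mu)c}$, not the stated $t$ with its $(1-\mu)$ factors; applying Lemma~\ref{lem:hoeffding} to the normalized increments $(X_j-\mu)/(1-\mu)$, as you do, is exactly what produces the $\tfrac{1}{3}(1-\mu)x$ term and hence the advertised constant, so your derivation matches the lemma's $t$ where a literal application of Lemma~\ref{lem:central_bound} would not. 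One small correction to your closing remark: the hypothesis $\mu\le 1/2$ \emph{is} used in your own argument — for the lower tail the increments $\mu-X_j$ are bounded by $\mu$, and the claim $\mu-X_j\le 1-\mu$ (equivalently, that the rescaled increments remain $\le 1$) is precisely where $\mu\le 1/2$ enters; without it you would have to rescale by $\mu$ instead and would obtain a different constant for that side.
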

	\vspace{-0.23in}
	
\balance	
\begin{proof}
	\textit{Deriveing Lower-bound function:} By Lemma~\ref{lem:t_bound}, we have
	\vspace{-0.08in}
	\begin{align}
	\Pr\left[ \hat \mu_i- \mu \leq \frac{N}{i} t,\  \forall i=1..N \right]\geq 1-\delta,
	\end{align}
	\vspace{-0.15in}
	
	where
	\vspace{-0.05in}
	\begin{align}
	t = \frac{1}{3}c (1-\mu)  + \sqrt{\frac{1}{9}c^2(1-\mu)^2 + 2\mu(1-\mu)c}.
	\end{align}
	\vspace{-0.15in}
	
	\noindent Therefore, with probability at least $1-\delta$, $\forall i = 1..N$
	\vspace{-0.05in}
	\begin{align}
	& \hat \mu_i - \mu \leq \frac{1}{3}c_i(1-\mu) + \sqrt{\frac{c_i^2}{9}(1-\mu)^2 + 2\mu(1-\mu)c_i\frac{N}{i}} \nonumber \\
	\Leftrightarrow \text{ }& \hat \mu_i - \mu - \frac{1}{3}c_i(1-\mu) \leq \sqrt{\frac{c_i^2}{9}(1-\mu)^2 + 2\mu(1-\mu)c_i\frac{N}{i}} \nonumber
	\end{align}
	\vspace{-0.15in}
	
	\noindent From the above inequality, $\forall i =1..N$, consider two cases:
	\begin{itemize}
		\item $\hat \mu_i - \mu -\frac{1}{3}c_i(1-\mu) < 0$, then $\mu > \frac{\hat \mu_i - c_i/3}{1-c_i/3}$
		\item $\hat \mu_i - \mu -\frac{1}{3}c_i(1-\mu) \geq 0$, then taking the square of two sides, we have,
		\vspace{-0.05in}
		\begin{align}
		\left(\hat \mu_i - \mu -\frac{c_i}{3}(1-\mu)\right)^2 \leq \frac{c_i^2}{9}(1-\mu)^2 + 2c_i\frac{N}{i}\mu(1-\mu).
		\end{align}
		Solve the above quadratic inequality for $\mu$, we obtain another lower-bound for $\mu$.
	\end{itemize}
	
	\textit{Deriving Upper-bound function.} By Lemma~\ref{lem:t_bound},
	\vspace{-0.05in}
	\begin{align}
	\Pr\left[ \hat \mu_i- \mu \geq - \frac{n}{i}t,\  \forall i=1..N \right]\geq 1-\delta.
	\end{align}
	\vspace{-0.15in}
	
	Follow the similar steps as in deriving the lower-bound function above, we obtain the upper-bound function in the lemma.
\end{proof}
\vspace{-0.12in}
The lower-bound function $f_L(N_j, d_c/N_j, \delta', N_j)$ is directly obtained from $f_L(i, \hat \mu_i, \delta, N)$. Notice that we can replace $\hat \mu_i$ in the upper-bound $f_U(i, \hat \mu_i, \delta, N)$ by a larger number and still obtain an upper-bound with the same probability guarantee. Thus, the upper-bound $f_U \big(T_z, z/T_z, \delta', \lc (1+\beta)^{t_u} \rc \big)$ is derived from $f_U(i, \hat \mu_i, \delta, N)$ by using the larger value $z/T$ for the coverage of optimal solution.
\vspace{-0.08in}
\subsection{Proof of Theorem~\ref{lem:dta_quality}}
	\dta{} returns either $S_z$ when the test $\rho_S \geq 1-1/e-\epsilon$ succeeds (Lines~9, 10) or $S_{z^*}$ (Line~13). We show that the corresponding returned solutions are $(1-1/e-\epsilon)$-approximate with probabilities $1-4\delta/3$ and $1-\delta$, respectively. Then the lemma follows by adding these probabilities.
	
	\emph{Return $S_z$ when the test $\rho_S \geq 1-1/e-\epsilon$ succeeds.} The $(1-1/e-\epsilon)$ approximation ratio is directly obtained by Lemma~\ref{lem:lu_bounds}. The probability of this guarantee is computed by taking the sum of the two probabilities in Lemma~\ref{lem:lu_bounds} times the number of times that $\rho_S$ is computed. By Lemma~\ref{lem:lu_bounds}, each bound $\rho_S$ holds with prob. $1-2\delta' = 1-\frac{\delta}{\log_2(z^*) \log_{1+\beta}(cT^*)}$ since the lower and upper bounds returned by $f_L(.)$ and $f_U(.)$ individually hold with probability $1-\delta'$. We show that the total number of times that $f_L(.)$ and $f_U(.)$ are computed is at most $\log_2(z^*) \log_{1+\beta}(cT^*)$ with probability $1-\delta/3$. Then, the accumulated probability is given by $1-\frac{\delta}{\log_2(z^*) \log_{1+\beta}(cT^*)} \cdot \log_2(z^*) \log_{1+\beta}(cT^*) - \delta/3 = 1-4\delta/3$. Thus, it is sufficient to show that the total number of times that $f_L(.)$ and $f_U(.)$ are computed is at most $\log_2(z^*) \log_{1+\beta}(cT^*)$.
	
	For $z = z^*/2^i, i \geq 1$, similar to the proof of Eq.~(\ref{eq:u_bound}), \tpa{} with threshold $z$ acquires $T \geq cT^*$ hyperedges with probability at most
	\vspace{-0.12in}
	\begin{align}
		\Pr \Big[\sum_{i = 1}^{T} X^{(i)}_{S^*} \leq z \Big] \leq 2 \Big (\frac{1}{p} \Big )^{2^i} < 2 \Big (\frac{1}{p} \Big )^{i}.
	\end{align}
	\vspace{-0.12in}
	
	Thus, the probability that \tpa{} with $z$ samples at most $cT^*$ hyperedges, hence, there are at most $\log_{1+\beta}(cT^*)$ times that $f_L(.)$ and $f_U(.)$ are computed, is $1-2 (\frac{1}{p})^{i}$. Then, accumulating this probability over all \tpa{} calls with $z \in \Big \{ \frac{z^*}{2^{i_0}}, \frac{z^*}{2^{i_0-1}}, \dots, z^* \Big \}$, we obtain the probability of \dta{} having at most $\log_2(z^*) \log_{1+\beta}(cT^*)$ computation rounds of $f_L(.)$ and $f_U(.)$ to be at least
	\vspace{-0.05in}
	\begin{align}
		1-\sum_{i = 1}^{\log_2(z^*)} 2 \Big (\frac{1}{p} \Big )^{i} \geq 1 - 2\frac{1}{p}\frac{1}{1-1/p} \geq 1 - \frac{\delta}{3},
	\end{align}
	where $p$ was replaced by its definition in Eq.~(\ref{eq:p}).
	
	
	\emph{\dta{} returns $S_{z^*}$.} By Theorem~\ref{theo:tpa}, \tpa{} with $z = z^*$ finds an $(1-1/e-\epsilon)$-approximate solution with probability $1-\delta$.

\end{document}